\newtheorem{theorem}{Theorem}
\newtheorem{claim}{Claim}
\newtheorem{corollary}{Corollary}
\newtheorem{definition}{Definition}
\newtheorem{observation}{Observation}
\newtheorem{example}{Example}
\newtheorem{lemma}{Lemma}
\newcommand{\WMMS}{\mathsf{WMMS}}
\newcommand{\cM}{\mathcal{M}}
\newcommand{\cN}{\mathcal{N}}
\newcommand{\cG}{\mathcal{G}}
\newcommand{\cI}{\mathcal{I}}
\newcommand{\R}{\mathbb{R}}
\newcommand{\bA}{\mathcal{A}}
\newcommand{\fw}{\mathbf{w}}
\newcommand{\fv}{\mathbf{v}}
\newcommand{\A}{\mathbf{A}}
\newcommand{\X}{\mathbf{X}}
\newcommand{\B}{\mathbf{B}}
\DeclareMathOperator*{\argmax}{arg\,max}
\newif\ifcomment
\newif\ifchange
\newif\ifresp
\newcommand{\acomment}[1]{{\ifcomment \color{cyan} #1 \fi}}
\newcommand{\hresp}[1]{{\ifresp \color{brown} #1 \fi}}
\newcommand{\achange}[1]{{\ifchange \color{purple} \fi #1}}
\begin{document}
\begin{sloppypar}

\title{Constant Weighted Maximin Share Approximations for Chores\thanks{The authors are ordered alphabetically. Fangxiao Wang is the corresponding author.}}

\author{Bo Li
\hspace{30pt}
Fangxiao Wang
\hspace{30pt}
Shiji Xing 
\medskip
\\
\small
Department of Computing, The Hong Kong Polytechnic University
\\\small
\texttt{comp-bo.li@ polyu.edu.hk}
\\\small
\texttt{\{fangxiao.wang,shi-ji.xing\}@connect.polyu.hk}
}

\date{}
\maketitle

\begin{abstract}
We study the fair allocation of indivisible chores among agents with asymmetric weights. Among the various fairness notions, weighted maximin share (WMMS) stands out as particularly compelling. 
Despite its appeal, the existence of a constant-factor approximation for WMMS has remained an important open problem in weighted fair division [Aziz et al., 2022, Suksompong, 2025]. Prior to our work, the {best-known} approximation ratio was $O(\log n)$, where $n$ is the number of agents. 
In this paper, we make significant progress by presenting the first constant-factor approximation algorithm for WMMS. 
Our main contributions are as follows:

\begin{itemize}
    \item We design the first algorithm that guarantees a 12-approximate WMMS allocation, substantially improving upon the previous $O(\log n)$ upper bound. Our approach introduces a novel analytical framework based on canonical instance reductions, agent delegation, and proxy cost functions to effectively bound agents' costs. Additionally, we provide a polynomial-time implementation for any approximate WMMS algorithm, incurring a factor of 2 loss in the approximation ratio.
    \item We present an improved worst-case lower bound, showing that no algorithm can achieve {a better than 2-approximate WMMS guarantee}, thereby strengthening the previous best lower bound of 1.366.  We further construct a general hard instance, which provides lower bounds for {an arbitrary number of agents}. 
    
    \item Beyond worst-case bounds, we precisely characterize the optimal approximation ratio curve for every possible weight distribution in the two-agent case. Notably, our results imply that a WMMS allocation may not exist for any two agents with different weights, in sharp contrast to the symmetric case where an MMS allocation always exists.
\end{itemize}

\end{abstract}

\section{Introduction}
\label{sec:intro}
Fair allocation of indivisible items has recently attracted considerable interest across fields such as theoretical computer science, computational economics, and multi-agent systems.  
This is partly driven by {practical applications in which} items cannot be divided and must be allocated whole, such as family heirlooms (like jewelry, paintings, and furniture) or office equipment (such as computers, chairs, and printers). 
At the same time, traditional fairness concepts like envy-freeness (EF) \cite{Foley67,varian1973equity} and proportionality (PROP) \cite{Steinhaus48} are not always satisfiable with indivisible items, prompting extensive research into how these criteria can be approximately achieved. 
This has led to the study of relaxed fairness notions, most notably envy-freeness up to one item (EF1) \cite{DBLP:conf/bqgt/Budish10,LiptonMaMo04} and the maximin share (MMS) \cite{DBLP:conf/bqgt/Budish10}, which are now widely accepted in the literature.

Informally, EF1 means that no agent would {prefer another agent's allocation to her own} once a single item is removed either from her own (for chores) or the other agent's bundle (for goods). 
It has been proved that an EF1 allocation always exists for both goods and chores, even when agents are asymmetric -- that is, when they have different weights and the allocation is expected to respect these weights \cite{DBLP:journals/teco/ChakrabortyISZ21,DBLP:conf/sigecom/0001Z023}.
MMS fairness assigns each agent a threshold value, requiring that every agent receive a bundle at least as valuable as this threshold. 
For symmetric agents (where all agents have equal weights), it is known that a constant-factor approximation of MMS fairness can be guaranteed for both goods and chores; see, for example, \cite{DBLP:journals/jacm/KurokawaPW18,DBLP:conf/sigecom/GhodsiHSSY18,DBLP:journals/corr/abs-2510-10423,DBLP:journals/corr/abs-2511-13056} and \cite{DBLP:conf/sigecom/HuangL21,DBLP:conf/sigecom/HuangS23}.
In the case of asymmetric agents, the best possible approximation ratio for MMS -- referred to as WMMS -- in the allocation of goods is $n$, where $n$ is the number of agents \cite{DBLP:journals/jair/FarhadiGHLPSSY19}. 
However, for chores in the asymmetric setting, the optimal approximation ratio for WMMS has remained an open question for a long time. 
This is highlighted as one of the important open problems in the fair division of indivisible items, as noted in surveys \cite{DBLP:journals/sigecom/AzizLMW22,DBLP:journals/ipl/Suksompong25}. 



This is not only an interesting theoretical problem; it also has real-world importance.
For example, volunteers often have varying amounts of time they can offer, so tasks are assigned according to each person's availability. 
Within teams, members may have different workloads depending on their positions, and responsibilities should be distributed in line with each individual's capacity. 
Similarly, universities might be awarded research funding based on student enrollment, research achievements, or societal contributions. 
In all these practical scenarios, the agents are not treated equally. 
Moreover, it is noted in \citep{DBLP:conf/ijcai/00020L24} that, in the context of chore allocation, WMMS can provide a fairer allocation -- in the sense of respecting agents' weights -- compared to other weighted fairness notions such as EF1, APS, PROPX, and CS \cite{DBLP:conf/sigecom/0001Z023,DBLP:conf/sigecom/FeigeH23,DBLP:conf/www/0037L022}, in certain scenarios.
The reason is that all these fairness criteria can be satisfied if a single item is allocated to any agent, regardless of the item's value or the agent's weight. 

For instance, consider an example with two items and two agents, where one agent has a much larger weight than the other. 
The weights and valuations are shown in Table \ref{tab:example:intro}.
For both left and right valuation profiles, allocating item $e_1$ to $a_2$ and $e_2$ to $a_1$ is considered fair according to all the criteria of EF1, APS, PROPX, and CS.  
However, in the left situation, it is more reasonable to allocate $e_1$ to $a_1$ and $e_2$ to $a_2$, as this allocation is exactly proportional to their weights. 
In the right situation, as $\epsilon$ approaches 0, agent $a_1$ is effectively responsible for almost everything, so it would be more reasonable for $a_1$  to receive both items. 
However, any EF1 allocation has to allocate {one item to each agent}\footnote{This is also true for {the stronger notion of envy-freeness up to any item (EFX)} \citep{DBLP:journals/teco/CaragiannisKMPS19}, where the envy from one agent towards another agent can be eliminated after the removal of {\em any} item.}. 
This is because if both items are given to $a_1$, even after removing one item from her bundle, she would still have one item left, while $a_2$ would have none.

\begin{table}[h]
    \centering
    \begin{tabular}{c|c|cc}
     \hline
       agent & weight  & item $e_1$ & item $e_2$ \\
       \hline
        $a_1$ & $w_1 = 1-\epsilon$ &  $1-\epsilon$ & $\epsilon$ \\
        $a_2$ & $w_2 = \epsilon$ &  $1-\epsilon$ & $\epsilon$ \\
        \hline
    \end{tabular}
    \quad \quad
    \begin{tabular}{c|c|cc}
     \hline
       agent & weight  & item $e_1$ & item $e_2$ \\
       \hline
        $a_1$ & $w_1 = 1-\epsilon$ &  $0.5$ & $0.5$ \\
        $a_2$ & $w_2 = \epsilon$ &  $0.5$ & $0.5$ \\
        \hline
    \end{tabular}
    \caption{Allocating two items to two agents, where $\epsilon>0$ is sufficiently small.}  \label{tab:example:intro}
\end{table}

Fortunately, WMMS addresses all of the previously mentioned concerns: in the left situation of Table \ref{tab:example:intro}, the unique WMMS allocation assigns $e_1$ to $a_1$ and $e_2$ to $a_2$, while in the right situation, the only WMMS allocation gives both items to $a_1$ (which will be explained in Section \ref{sec:prelim:concepts}).
In this way, WMMS better reflects the proportionality of agents' weights.
Given the above discussion, designing a constant-approximate WMMS algorithm would be intriguing.

However, prior to our work, it was not known whether constant-factor approximations for WMMS were possible, which limited its practical appeal.  
\citet{DBLP:conf/ijcai/0001C019} were the first to propose an $n$-approximation algorithm and showed that no algorithm can achieve better than a $\frac{4}{3}$-approximation.  
For the case of two agents, \citet{DBLP:conf/ijcai/00020L24} provided an example for which no approximation better than $\frac{\sqrt{3}+1}{2} \approx 1.366$ is possible, and they designed an algorithm that matches this bound.  
As a result, the problem is optimally solved for two agents. 
For more than two agents, the same work introduced two algorithms with $O(\sqrt{n})$ and $O(\log n)$ approximation guarantees, but the existence of a constant-factor approximate WMMS allocation remained unresolved.  

{This paper investigates the problem from two complementary directions: we prove new upper and lower bounds for the WMMS approximation ratio, and we study how asymmetric weights affect the best possible MMS approximation.}


\subsection{Summary of Contribution}

In this section, we provide a brief overview of our main results.

We design the first algorithm that guarantees a 12-approximate WMMS allocation, substantially improving upon the previous $O(\log n)$ upper bound, where $n$ is the number of agents. 
Our method leverages a new analytical framework built on canonical instance reductions, agent delegation, and proxy cost functions to effectively bound agents' costs.
The modularization framework we {develop} not only simplifies the analysis but also offers useful tools for further improving the approximation ratio or designing algorithms in other contexts. 
We regard this result as the major technical contribution of this work.
A technical overview of our approach is provided in Section \ref{sec:techoverview}.
{We note that} the canonical instance reduction may not be polynomial-time computable.
To address this, we present a polynomial-time implementation for any approximate WMMS algorithm, with a {factor-2 loss in the approximation ratio}. 

Secondly, we present a stronger lower bound and prove that no algorithm can achieve {a better than 2-approximate WMMS guarantee}, thereby improving the previous best lower bound of 1.366. 
The design of the hard instance relies on a careful classification of agents and items, as well as geometric distributions of weights and values. 
This lower bound of 2 holds as the number of agents becomes very large.
Together with the impossibility result for two agents proved in \citep{DBLP:conf/ijcai/00020L24}, this implies that WMMS allocations cannot be guaranteed when $n=2$ or when $n$ is very large, leaving it unknown for intermediate values of $n$.  
To bridge this gap, we design a more general instance, which provides a lower bound for {an arbitrary number of agents}.
For example, when $n=2$, it shows that no allocation is better than 1.366-WMMS, which matches the lower bound in \citep{DBLP:conf/ijcai/00020L24}; when $n=3$, no allocation is better than 1.457-WMMS; and as $n$ approaches infinity, the lower bound converges to the golden ratio 1.618.


Finally, we provide a deeper discussion of how varying weight distribution influences the approximation ratio for WMMS.  
For two agents, \citet{DBLP:conf/ijcai/00020L24} showed that the worst-case scenario occurs when $w_1 = \sqrt{3}-1$ and $w_2 = 2-\sqrt{3}$, where certain valuations prevent any allocation from achieving better than a $\frac{\sqrt{3}+1}{2}$ approximation.  
This naturally leads to the question: if the two agents have a different weight distribution, is it possible to design algorithms with improved approximation ratios? 
{At the other extreme}, it is clear that an exact MMS allocation is always possible when the agents have equal weights. 
To address this disparity, we precisely characterize the optimal approximation ratio curve for every possible weight distribution in the two-agent case.  
Our findings also show that a WMMS allocation may not exist for any pair of agents with unequal weights, which stands in stark contrast to the symmetric case. 

\subsection{Other Related Work}
\label{sec:relatedwork}

Fair division has been extensively studied since the seminal work by \citet{Steinhaus48}. 
As proportionality (PROP) \cite{Steinhaus48} and envy-freeness (EF) \cite{Foley67,varian1973equity} may {fail to exist when items are indivisible}, there is ongoing discussion about the most appropriate fairness concepts for such settings. 
Several widely recognized and studied notions have emerged, including maximin share (MMS) \cite{DBLP:conf/bqgt/Budish10}, 
proportional up to one (PROP1) \cite{DBLP:conf/sigecom/ConitzerF017}, proportional up to any (PROPX) \cite{DBLP:journals/orl/AzizMS20}, 
envy-free up to one (EF1) \cite{DBLP:conf/bqgt/Budish10,LiptonMaMo04}, and envy-free up to any (EFX) \cite{DBLP:conf/ecai/GourvesMT14,DBLP:journals/teco/CaragiannisKMPS19}.
Our work is most closely related to MMS fairness, which was originally introduced by \citet{DBLP:conf/bqgt/Budish10} for goods and later extended to chores in \cite{DBLP:conf/aaai/AzizRSW17}. When agents have equal weights, both goods and chores settings admit constant-factor approximations (see, for example, \cite{DBLP:journals/jacm/KurokawaPW18,DBLP:conf/sigecom/GhodsiHSSY18,DBLP:journals/corr/abs-2510-10423,DBLP:journals/corr/abs-2511-13056} for goods, and \cite{DBLP:journals/teco/BarmanK20,DBLP:conf/sigecom/HuangL21,DBLP:conf/sigecom/HuangS23} for chores). 
The best-known approximation for goods is $\frac{7}{9}$ \cite{DBLP:journals/corr/abs-2511-13056}, and for chores, it is $\frac{13}{11}$ \cite{DBLP:conf/sigecom/HuangS23}.
For a more comprehensive overview of fair division with indivisible items, we refer readers to surveys \cite{DBLP:conf/ijcai/AmanatidisBFV22,DBLP:journals/sigecom/AzizLMW22}, and for a detailed discussion on chores, see \cite{guo2023survey}.

%
%

\citet{DBLP:journals/jair/FarhadiGHLPSSY19} were the first to extend the concept of MMS to settings where agents have different weights. 
For goods, they constructed an example showing that no allocation can achieve better than a $\frac{1}{n}$-approximate WMMS, and demonstrated that a simple round-robin algorithm guarantees this $\frac{1}{n}$-approximation. 
Later, \citet{DBLP:conf/ijcai/0001C019} adapted the WMMS concept to chore allocation and proved that the best possible approximation ratio lies between $\frac{4}{3}$ and $n$.
{More recently, \citet{DBLP:conf/ijcai/00020L24} improved both the lower and upper bounds by proving the existence of an $O(\log n)$-approximate WMMS allocation for any number of agents and by showing that the optimal approximation ratio for two agents is 1.366.}
Our work builds on this line of research and further advances these bounds.


Other notions of weighted fairness have also been explored in the literature. \citet{DBLP:conf/sigecom/BabaioffEF21} introduced the AnyPrice Share (APS) as an alternative to MMS and designed a $\frac{3}{5}$-approximation algorithm for goods. 
A 2-approximation algorithm for APS was presented in \cite{DBLP:conf/sigecom/BabaioffEF21} and \cite{DBLP:conf/www/0037L022}, which was later improved to 1.733 by \citet{DBLP:conf/sigecom/FeigeH23}. In the same work, they also proposed the Chore Share (CS) concept to approximate APS for chores.
For goods, a weighted PROP1 allocation is always possible, but a weighted PROPX allocation may not exist \cite{DBLP:journals/orl/AzizMS20}. 
In contrast, for chores, a weighted PROPX allocation (which also implies PROP1) always exists and can be efficiently computed \cite{DBLP:conf/www/0037L022}. For both goods and chores, a weighted EF1 allocation can be computed in polynomial time \cite{DBLP:journals/teco/ChakrabortyISZ21,DBLP:conf/sigecom/0001Z023}, but the existence of weighted EFX allocations is not guaranteed \cite{DBLP:journals/corr/abs-2305-16081}.

Very recently, \citet{DBLP:journals/ipl/Suksompong25} provided a survey of the latest developments and key open questions in the area of weighted fair division of indivisible items.

Finally, we note that {in simultaneous work} \cite{DBLP:journals/corr/abs-2510-10698}, another constant-factor approximate WMMS algorithm was obtained independently using a different approach. 
For comparison, their algorithm achieves an approximation ratio of 20, which is somewhat weaker than ours. 
Moreover, their study does not {involve lower bounds or examine} the approximation ratio across different weight distributions.

\section{Definitions and Technical Overview}

For any positive integer $k$, denote by $[k] = \{1, \ldots, k\}$.
We consider a fundamental fair allocation model by allocating $m$ indivisible chores $\cM=\{e_1,\ldots, e_m\}$ to $n$ agents $\cN=\{a_1,\ldots,a_n\}$.
Each agent $a_{i}$ has a cost function (or valuation) $v_{i}:2^{\cM}\rightarrow \R^+\cup\{0\}$, where $v_{i}(S)$ represents her cost of completing the items in $S \subseteq \cM$. 
For simplicity, when a bundle contains a single item $e$, denote $v_{i}(e) = v_{i}(\{e\})$.
We assume the cost functions are additive, i.e., $v_{i}(S)=\sum_{e\in S}v_{i}(e)$ for every $S\subseteq \cM$.
We sometimes assume, without loss of generality, that the valuations are normalized, i.e., $v_{i}(\cM) =1$ for all agents $a_{i}$.
An {allocation}, denoted by $\A=(A_{1},\ldots,A_{n})$, is an $n$-partition of $\cM$, where $A_{i}$ is the set of items allocated to agent $a_{i}$,
such that $A_1\cup\cdots\cup A_n= \cM$ and $A_{i}\cap A_{j}= \emptyset$ for all $i\neq j$.
Denote by $\bA$ the set of all allocations. 
An allocation is called {partial} if $A_1\cup\cdots\cup A_n \subsetneq\cM$.

\subsection{Solution Concepts}
\label{sec:prelim:concepts}
We are interested in the setting when agents are asymmetric. 
Formally, each agent $a_i$ has a weight $0<w_i <1$, representing her share in completing all items.
Assume $w_1\ge w_2\ge \cdots \ge w_n$. 
For simplicity, we normalize the agents' weights such that $\sum_{i=1}^n w_i = 1$. 
Letting $\fv = (v_{1},\ldots,v_{n})$ and $\fw = (w_{1},\ldots,w_{n})$, a chore allocation instance is represented by $\cI=(\cN,\cM,\fw,\fv)$.
The items are expected to be allocated in a way that respects the agents' weights.
Formally, an allocation $(A_1,\ldots,A_n)$ is proportionally (PROP) fair if $v_i(A_i) \le w_i\cdot v_i(\cM)$ {for every} agent $a_i$. 
However, achieving PROP is often impossible; for example, if a single item with non-zero cost must be allocated between two agents, whichever agent receives the item, the allocation will not be PROP for her.
Given that items may be indivisible, the fairest allocation -- one that is as proportional to the agents' weights as possible -- for agent $a_i$ is the one that minimizes the maximum ratio ({also called the unfairness ratio}) of excess across all agents, i.e., 
\[
\min_{\A\in \bA} \max_{j\in [n]}\frac{v_i(A_j)}{w_j}.
\]
Then, the weighted maximin share (WMMS) of $a_i$ is {her weight multiplied by the minimum unfairness ratio above}, i.e., 
\[
\WMMS_i=w_i \cdot \min_{\A\in \bA}\max_{j \in [n]} \frac{v_i(A_j)}{w_j}.
\]
It is clear that $\WMMS_{i} \ge w_i \cdot v_{i}(\cM)$ and thus {WMMS relaxes} the requirement of PROP. 
It is worth noting that the definition of WMMS does not require the weights to be normalized.
A partition $\A=(A_{1},\ldots,A_{n})$ is called a WMMS-defining partition for agent $a_{i}$, if for all $j \in [n]$,
\[
\frac{v_{i}(A_{j})}{w_j} \leq \frac{\WMMS_{i}}{ w_i}.
\]
\begin{definition}[$\alpha$-WMMS]
   Given $\alpha \ge 1$, an allocation $\A=(A_{1},\ldots,A_{n})$ is called $\alpha$-approximate WMMS fair ($\alpha$-WMMS) if 
    $v_{i}(A_{i})\leq \alpha\cdot \WMMS_{i}$ for all agents $a_{i}\in \cN$. 
    If $\alpha = 1$, the allocation is called {WMMS-fair}. 
\end{definition}



\begin{example}
\label{example:1}
Consider an example with two agents and two items, where the weights and valuations are shown in Table \ref{tab:example:intro} of Section \ref{sec:intro}.
For the left valuation profile, WMMS coincides with PROP.
For the right valuation profile, it can be verified that, given sufficiently small $\epsilon > 0$,
\begin{align*}
    \WMMS_1 &= (1-\epsilon)\cdot \min \left\{\frac{v_1(\{e_1,e_2\})}{1-\epsilon}, \frac{v_1(e_1)}{\epsilon}\right\} = 1,
\end{align*}
and
\begin{align*}
    \WMMS_2 &= \epsilon\cdot \min \left\{\frac{v_2(\{e_1,e_2\})}{1-\epsilon}, \frac{v_2(e_1)}{\epsilon}\right\} =\frac{\epsilon}{1-\epsilon} \to \epsilon.
\end{align*}
By the definition of $\WMMS_i$, we need to enumerate all possible allocations to find the smallest unfairness ratio. 
However, in this example, it is clear that this ratio must appear between allocating both items to agent $a_1$ and allocating one of the items, say $e_1$, to agent $a_2$.
By the design of the weights, a WMMS allocation should allocate both items to $a_1$.
In fact, if $\epsilon \to 0$, any finite-approximate WMMS allocation would always allocate both items to $a_1$, which coincides {with} our intuition that an agent with negligible weight should not be assigned challenging tasks. 
However, allocating each agent one item satisfies other weighted fairness notions, such as APS, CS, EF1, and PROP1. 


\end{example}

It is widely known that, for additive valuations, to approximate WMMS, 
it suffices to focus on the case when all agents agree on the same ordering of the items according to their values. 
Formally, an instance is called {\em identical ordering} (IDO) if $v_i(e_1) \ge v_i(e_2)\ge \cdots\ge v_i(e_m)$ for all agents $a_i \in \cN$.
It is proved, for example, in \cite{DBLP:journals/aamas/BouveretL16,DBLP:journals/teco/BarmanK20}, that if we can compute an $\alpha$-WMMS allocation for all IDO instances, then we can compute an $\alpha$-WMMS allocation for all instances that may not be IDO. 
Therefore, throughout this paper, we only consider IDO instances. 

\subsection{Technical Overview}
\label{sec:techoverview}

Example \ref{example:1} shows that allocating items to an agent with a very small weight might severely harm the approximation ratio. 
This difficulty also explains why obtaining a constant-factor approximation for WMMS is more difficult than for other fairness notions like APS and CS. {For APS, CS, and related concepts such as EF1 and PROP1, the corresponding guarantees can be satisfied by allocating a single item to an agent, regardless of the item's value or the agent's weight.} 
To overcome this difficulty, \citet{DBLP:conf/ijcai/00020L24} proved that, for all $1\le i <n$, each item after the index $\lfloor\frac{w_1}{w_{i+1}}\rfloor + \cdots + \lfloor\frac{w_i}{w_{i+1}}\rfloor$ has value no more than $\WMMS_i$ for agent $a_i$.
Based on this, their algorithm only allocates items after this index to $a_i$.
However, this approach results in agent $a_1$ receiving more items and potentially incurring a cost as high as $\Omega(\log n) \cdot\WMMS_1$. 
To achieve a better approximation, it is necessary to start allocating items to the other agents earlier. {Implementing and analyzing this idea, however, is technically more challenging.} 

To this end, we propose a modularization framework and decompose the problem into several components -- canonical instances, delegating agents, and proxy cost functions -- that make the analysis more manageable.
We first introduce the concept of {\em canonical} instances. Informally, in a canonical instance, each agent's weight and each item's value are powers of $\frac{1}{2}$ times $w_1$ (the largest agent weight), and every agent's WMMS is equal to her weight. 
The structured properties of canonical instances greatly simplify the analysis, allowing us to effectively bound both the value and the number of items in our algorithm. 
We show that reducing any arbitrary instance to a canonical one increases the approximation ratio of WMMS by at most a factor of 4.


We then design an algorithm for canonical instances, which ensures 3-WMMS.
The algorithm processes items one at a time in decreasing order of value. 
Each item can only be given to an agent for whom the item's value does not exceed her WMMS; such agents are referred to as {\em active} agents. 
If multiple active agents are available, the item is allocated to the agent with the smallest value for it. If there are multiple agents with the same smallest value for the item, then it is allocated to the agent with the smallest weight.
An agent $a_i$ exits the algorithm (and becomes {\em inactive}) if her value for her allocated items is no smaller than $3\WMMS_i$.
Due to the properties of canonical instances, this value is exactly $3\WMMS_i$ but cannot exceed it, ensuring the approximation guarantee for exiting agents. 
The aforementioned tie-breaking rule plays a critical role: when an agent with a larger weight exits the algorithm, she must have been allocated a bundle of items whose value is very large for agents with smaller weights. 
To prove the correctness of the algorithm, we need to prove that, for every item, there is always at least one active agent available, which is the most technically involved aspect of the algorithm.
In this regard, we introduce {\em delegating} agents and prove that the {activity windows} of all delegating agents can span all items via their {\em proxy cost} -- an upper bound {on} their actual cost.

Taking into account the loss incurred during the reduction to canonical instances, our algorithm achieves a 12-WMMS guarantee.
Our result is primarily existential in nature. While the aforementioned 3-WMMS algorithm runs in polynomial time, the reduction to canonical instances does not, as it requires computing the WMMS partitions for the agents and may generate an exponential number of items.
To address this, we present an alternative linear programming algorithm that runs in polynomial time. We prove that, provided the instance admits an $\alpha$-WMMS allocation, our algorithm can compute a 
$(2\alpha+\epsilon)$-approximate WMMS allocation, where $\epsilon$ is arbitrarily small. Our analysis leverages the rounding technique introduced in \cite{DBLP:journals/mp/ShmoysT93} and refines the analysis from \cite{DBLP:conf/ijcai/0001C019}.
Combined with our existential result, a $(24+\epsilon)$-WMMS allocation can be found in polynomial time.

\section{Canonical Instance and Polynomial-time Implementation}\label{sec:can}

In this section, we introduce canonical instances.
We show that any instance can be reduced to a canonical instance, with the WMMS approximation ratio increasing by at most a factor of 4. 

%

\begin{definition}[Canonical Instances]
    An instance $\cI=(\cN,\cM,\fw,\fv)$ is called \textit{canonical} if it satisfies the following properties:
    \begin{itemize}
           \item {\em Normalized Weights:} $w_1\ge w_2\ge \cdots\ge w_n$, $w_1+\cdots+w_n = 1$, and for all $i =1,\ldots, n$, $w_i=\frac{w_1}{2^p}$ for some integer $p\ge0$; 
           \item {\em Normalized Valuations:} for all agents, $v_{i}(\cM)=1$, and for every $e\in \cM$, $v_{i}(e)=\frac{w_1}{2^p}$ for some integer $p\ge0$; 
           \item {\em IDO:} $v_i(e_1) \ge v_i(e_2) \ge \cdots \ge v_i(e_m)$ for all agents $a_i$;
           \item {\em Proportional:} $\WMMS_{i}=w_i$ for all agents $a_i$.
    \end{itemize}
    
    
\end{definition}

\begin{theorem}\label{the:can}
If there is an $\alpha$-WMMS allocation for all canonical instances, there is a $4\alpha$-WMMS allocation for arbitrary instances.
\end{theorem}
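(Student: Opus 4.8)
The plan is to transform an arbitrary IDO instance $\cI=(\cN,\cM,\fw,\fv)$ into a canonical instance $\cI'$ in a sequence of at most four steps, each of which blows up the WMMS approximation ratio by at most a factor of $2$ (or by a factor of $1$ in some steps, so the total is at most $4$), and such that any allocation for $\cI'$ can be pulled back to an allocation for $\cI$ with the stated guarantee. Throughout, I would track how $\WMMS_i$ changes under each transformation, since the whole point of ``canonical'' is that $\WMMS_i$ becomes exactly $w_i$ after normalization.

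First I would handle the \emph{Proportional} property. Given $\cI$, rescale each agent's valuation by setting $\tilde v_i = v_i \cdot \frac{w_i}{\WMMS_i}$; since $\WMMS_i \ge w_i v_i(\cM)$, this is a shrinking of $v_i$ (after we also renormalize so that $\tilde v_i(\cM)=1$, the combined effect is to scale the unit so that the WMMS-defining partition now has max-excess exactly $w_i$, i.e. $\WMMS_i = w_i$). One must check that an allocation $\A$ with $\tilde v_i(A_i)\le \alpha \WMMS_i^{(\cI')} = \alpha w_i$ translates back to $v_i(A_i) \le \alpha \WMMS_i^{(\cI)}$ — this is immediate because the rescaling is agent-specific and linear, so the ratio $v_i(A_i)/\WMMS_i$ is preserved exactly. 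This step costs a factor of $1$. Next I would round the weights down to powers of $\tfrac12$ times $w_1$: replace $w_i$ by $w_i' = w_1 \cdot 2^{-\lceil \log_2(w_1/w_i)\rceil}$, so $w_i' \le w_i < 2w_i'$, then renormalize the $w_i'$ to sum to $1$. Because every $w_j$ shrinks by a factor in $(\tfrac12,1]$ and then all are scaled up by the same renormalization constant $c\in[1,2)$, each weight moves by a bounded multiplicative factor; the key lemma to prove is that $\WMMS_i$ is monotone enough under such weight perturbations that $\WMMS_i^{\text{new}} \ge \tfrac12 \WMMS_i^{\text{old}}$ (decreasing a single $w_j$ can only decrease the excess term $v_i(A_j)/w_j$'s denominator... actually increases that ratio, so one must argue carefully via the WMMS-defining partition). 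This is the step I expect to be the main obstacle: showing that discretizing the weights to dyadic multiples of $w_1$ costs only a factor of $2$ in each agent's WMMS, and that a near-WMMS allocation for the rounded weights is near-WMMS for the original weights.

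Then I would discretize item values: in the rescaled instance each agent already has $v_i(\cM)=1$, and I would round each $v_i(e)$ \emph{up} to the nearest $w_1 \cdot 2^{-p}$ (with a floor to keep values from exceeding, say, $w_1$, handling large items by noting that in a Proportional canonical-target instance the largest item has value at most $\WMMS_i = w_i \le w_1$), again renormalizing so $v_i(\cM)=1$. Rounding up by at most a factor of $2$ and renormalizing down by a factor in $(\tfrac12,1]$ changes every $v_i(e)$, hence every $v_i(S)$ and $\WMMS_i$, by a bounded factor, costing another factor of $2$; the IDO order is preserved since rounding up to the next dyadic level is monotone in the original value (ties broken consistently). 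I should double-check whether the two discretization steps (weights and values) can be merged or must be sequenced; sequencing them gives the clean factor-$4$ bound. Finally I would note that \emph{IDO} is free because we assumed it at the outset, and collect the factors: identical-ordering and proportional-rescaling cost $1$ each, weight-discretization costs $2$, value-discretization costs $2$, for a total of $4$. The composition of the pull-back maps is the identity on allocations (we never change $\cM$ as a set in the value-rounding, only reweight), so an $\alpha$-WMMS allocation for the canonical $\cI'$ is a $4\alpha$-WMMS allocation for $\cI$, completing the proof.
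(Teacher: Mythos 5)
Your proposal has the right high-level shape (a chain of normalizations costing $2 \times 2 = 4$), and your weight-rounding and value-rounding steps each correctly target a factor of $2$, but there is a genuine gap in the ``Proportional'' step and a sequencing problem that propagates from it.

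The gap: you claim that rescaling $\tilde v_i = v_i \cdot w_i / \WMMS_i$ ``and also renormalizing so that $\tilde v_i(\cM)=1$'' makes $\WMMS_i = w_i$. These two normalizations are mutually exclusive by scalar rescaling alone. Scaling by $w_i/\WMMS_i$ gives $\widetilde{\WMMS}_i = w_i$, but then $\tilde v_i(\cM) = v_i(\cM)\,w_i/\WMMS_i \le 1$ with strict inequality whenever the instance is not already proportional (which is the generic case for indivisible chores). If you then renormalize to force $\tilde v_i(\cM)=1$, you multiply by $1/\tilde v_i(\cM) \ge 1$, and $\widetilde{\WMMS}_i$ jumps back up to $w_i / \tilde v_i(\cM) > w_i$. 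You explicitly assert that ``the composition of the pull-back maps is the identity on allocations (we never change $\cM$ as a set),'' and that is precisely why your construction cannot land on a canonical instance: the only way to reconcile $\WMMS_i = w_i$ with $v_i(\cM)=1$ is to \emph{add} items. The paper's proof introduces, for each agent $a_i$, a set of $n$ auxiliary items (one per bundle of an arbitrary WMMS-defining partition for $a_i$, with zero value to everyone else) whose values top up each bundle $B^i_j$ to exactly $w_j \cdot \WMMS_i/w_i$. After that augmentation, the WMMS-defining partition for $a_i$ is exactly proportional, so $\WMMS_i = w_i \cdot v_i(\cM')/\sum_l w_l$ holds, and one can then normalize both sides to $1$ freely. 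The pull-back is still clean: since the auxiliary items have zero value to other agents and get removed before returning the allocation, they cost nothing. A similar additive-padding trick is used again in the paper's value-discretization step (round item values \emph{down} to dyadic levels and pad the deficit with fresh dyadic items), which is what lets the Proportional property survive that step; your rounding-up-plus-renormalize variant would again perturb $\WMMS_i$ away from $w_i$.

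A secondary issue is the order of operations. Putting Proportional first means every later transformation (weight discretization, value discretization) must preserve the identity $\WMMS_i = w_i$, which neither of your later steps does. The paper's order is: discretize weights (factor $2$, via a clean argument comparing to the original WMMS-defining partition, which is the ``monotonicity'' lemma you flagged as an obstacle but which is in fact a three-line chain of inequalities), then enforce proportionality by adding items (factor $1$), then normalize scales (factor $1$), then discretize values by rounding down and padding (factor $2$), then reorder to IDO (factor $1$). Adopting that order and the two padding constructions would close the gap in your argument.
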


We provide the proof of Theorem~\ref{the:can} in Section~\ref{prof:sec:can}.
Informally, IDO and proportional properties can be satisfied without loss of generality, whereas enforcing either the normalized weights or normalized valuations properties results in a {factor-2 loss}. 

As mentioned, the transformation from an arbitrary instance to a canonical one may take exponential time, since the reduction requires the WMMS partitions of the agents and may construct an exponential number of auxiliary items. 
We next prove that, as long as there exists an $\alpha$-WMMS allocation for the instance, a $(2\alpha+\epsilon)$-WMMS allocation can be computed in polynomial time, for any {arbitrarily small} $\epsilon>0$. 

\begin{theorem}
\label{thm:polytime}
    Given any instance and a constant $\epsilon>0$, if there exists an $\alpha$-WMMS allocation, then a $(2\alpha+\epsilon)$-WMMS allocation can be computed in polynomial time.
\end{theorem}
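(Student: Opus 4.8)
The plan is to cast the problem as a bipartite assignment / unrelated-machine scheduling problem, solve the natural LP relaxation, and round it with the theorem of Shmoys and Tardos \cite{DBLP:journals/mp/ShmoysT93}. The only ingredient that cannot be obtained exactly --- and hence the reason for the additive $\epsilon$ --- is the target $\WMMS_i$ itself, so the first move is to replace it by a $(1+\delta)$-approximation for a tiny $\delta$ chosen at the very end.

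\emph{Estimating $\WMMS_i$.} For a fixed agent $a_i$, note that $\WMMS_i/w_i=\min_{\A\in\bA}\max_{j\in[n]}v_i(A_j)/w_j$ is exactly the optimal makespan of the scheduling instance with $m$ jobs of sizes $v_i(e_1),\dots,v_i(e_m)$ on $n$ uniformly related machines of speeds $w_1,\dots,w_n$ (job $e$ on machine $j$ takes time $v_i(e)/w_j$). That problem is strongly NP-hard --- which is precisely why one cannot hope to be exactly tight --- but it admits a PTAS. Thus, for any fixed $\delta>0$, one computes in polynomial time values $\widehat{\WMMS}_i$ with $\WMMS_i\le\widehat{\WMMS}_i\le(1+\delta)\WMMS_i$ for every $i$.

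\emph{The LP and the rounding.} With a scaling parameter $c\ge 1$, consider the assignment polytope given by
\[
\textstyle\sum_{i}x_{ie}=1\ (\forall e),\qquad\sum_{e}v_i(e)\,x_{ie}\le c\,\widehat{\WMMS}_i\ (\forall i),\qquad x_{ie}=0\ \text{whenever } v_i(e)>c\,\widehat{\WMMS}_i,\qquad x_{ie}\ge 0.
\]
If an $\alpha$-WMMS allocation $\A^\star$ exists, setting $x_{ie}=1$ iff $e\in A_i^\star$ is feasible at $c=\alpha$: the load constraint holds since $v_i(A_i^\star)\le\alpha\WMMS_i\le\alpha\widehat{\WMMS}_i$, and for every $e\in A_i^\star$ we have $v_i(e)\le v_i(A_i^\star)\le\alpha\widehat{\WMMS}_i$, so no forbidden variable is used. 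Feasibility is monotone in $c$, so a binary search returns, to arbitrary precision, the smallest feasible $c^\star\le\alpha$ (equivalently, in our application one simply plugs in the fixed constant $\alpha$ of the existential result). Take a vertex solution $x$ of the LP at $c=c^\star$ and apply the rounding of \cite{DBLP:journals/mp/ShmoysT93}, which produces in polynomial time an integral assignment, i.e.\ an allocation $\A=(A_1,\dots,A_n)$, with
\[
v_i(A_i)\ \le\ \sum_{e}v_i(e)\,x_{ie}\ +\ \max\{\,v_i(e):x_{ie}>0\,\}\qquad\text{for every }a_i.
\]
The first term is at most $c^\star\widehat{\WMMS}_i$ by the load constraint, and the second is at most $c^\star\widehat{\WMMS}_i$ because the forbidden-assignment constraints zero out every $x_{ie}$ with $v_i(e)>c^\star\widehat{\WMMS}_i$. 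Hence $v_i(A_i)\le 2c^\star\widehat{\WMMS}_i\le 2\alpha(1+\delta)\WMMS_i$; fixing $\delta:=\epsilon/(2\alpha)$ gives $v_i(A_i)\le(2\alpha+\epsilon)\WMMS_i$ for all $i$, and every step (the PTAS, the LPs, the binary search, the rounding) is polynomial.

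\emph{The main obstacle.} The delicate point is the coupling between the forbidden-assignment constraints and the single extra item that rounding adds back: the threshold inside the LP must be set to exactly the quantity $c^\star\widehat{\WMMS}_i$ that reappears in the rounding bound, so that the item gained during rounding is controlled by $c^\star\widehat{\WMMS}_i$ rather than by $v_i(\cM)$. This is the refinement over Aziz et al.\ \cite{DBLP:conf/ijcai/0001C019}, whose $4\alpha$ bound carries a spurious factor $2$ from using only a crude ($2$-)approximation of $\WMMS_i$; feeding in the PTAS value instead turns $4\alpha$ into $2\alpha+\epsilon$. The remaining work --- checking strong NP-hardness of $Q\|C_{\max}$ so the PTAS is genuinely the best available, and verifying that the PTAS error and the binary-search precision both fold into the single parameter $\epsilon$ --- is routine bookkeeping.
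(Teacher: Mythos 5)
Your proposal follows essentially the same route as the paper: estimate each $\WMMS_i$ via the PTAS for makespan minimization on uniformly related machines (Hochbaum--Shmoys), set up the generalized-assignment LP whose feasibility is witnessed by an $\alpha$-WMMS allocation, take a vertex solution, and round it with the Shmoys--Tardos technique, paying one extra item whose value is controlled by the LP's forbidden-assignment threshold. The one point where you differ from the paper's write-up is in the choice of that threshold: the paper forbids $x_{ij}>0$ when $v_i(e_j)>(1+\epsilon)\WMMS_i$, whereas you forbid it when $v_i(e_j)>c\,\widehat{\WMMS}_i=\alpha\widehat{\WMMS}_i$. Your choice makes LP feasibility under the existential hypothesis immediate (every item in $A_i^\star$ has value at most $\alpha\WMMS_i\le\alpha\widehat{\WMMS}_i$), whereas with the paper's tighter threshold an item in a given $\alpha$-WMMS allocation could in principle exceed $(1+\epsilon)\WMMS_i$, so the integral allocation need not verbatim certify feasibility; this is a small subtlety that the paper's one-line feasibility claim elides. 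Both thresholds yield $2\alpha\,\widehat{\WMMS}_i$ (or $(\alpha+1)\widehat{\WMMS}_i\le 2\alpha\widehat{\WMMS}_i$) after rounding and hence $(2\alpha+\epsilon)\WMMS_i$ after folding in the PTAS error, so the result is unaffected. The binary search over $c$ you mention is unnecessary, as you also note, since one may simply plug in $c=\alpha$. Your parenthetical explanation of where Aziz et al.'s factor $4$ comes from is speculative and not needed for the proof, but it does not affect correctness.
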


The algorithm in Theorem \ref{thm:polytime}
leverages the rounding technique introduced in \cite{DBLP:journals/mp/ShmoysT93} and {follows the analysis of} \cite{DBLP:conf/ijcai/0001C019} with minor refinements.  Further details are provided in Appendix \ref{app:poly:proof}.




\section{A 3-WMMS Algorithm for Canonical Instances}\label{sec:main}

The algorithm is simple, as described in Algorithm \ref{alg:main}.
Let $\cI=(\cN, \cM, \fw,\fv)$ be a canonical instance.
We begin by introducing additional notation for ease of presentation.
Since the agents have normalized weights, we group them in $k$ groups by their weights ordered decreasingly $\cG=\{G_1,\ldots,G_k\}$, where the agents in the same group $G_i$ have the same weight $w_i$ and $w_i>w_j$ for $1\le i<j\le k$.
Group $G_i$ contains $n_i$ agents who are renamed as $G_i=\{a_{i,1},\ldots,a_{i,n_i}\}$.
{For valuations, allocations, and WMMS values, we use two subscripts: $v_{i,j}$, $A_{i,j}$, and $\WMMS_{i,j}$ refer to the $j$-th agent in group $G_i$.}
We also shorten $\fw$ by removing repeated entries, i.e., $\fw=(w_1,\ldots,w_k)$.
Let $W_i=n_i w_i$ and thus $\sum_{j\in [k]} W_j= 1$.
A canonical instance is also denoted by $\cI = (\cG, \cM, \fw, \fv)$.

\subsection{Main Result and the Algorithm}

In Algorithm \ref{alg:main}, the items are allocated one by one from the largest to the smallest. 
{For each item $e_h$, define the set of {\em active} agents as $\cN_h = \{a_{i,j} \in \cN \mid v_{i,j}(e_h)\leq w_i\}$; these are the agents who have not yet exited the algorithm and whose cost for $e_h$ does not exceed their WMMS.}
The item is allocated to the agent $a_{i,j}$ who has the smallest value for $e_h$, and if there is a tie, the algorithm chooses the agent who has the smallest weight. 
If there is still a tie, it can be broken arbitrarily.  
{When the cost of $a_{i,j}$'s bundle reaches $3\WMMS_{i,j}$, $a_{i,j}$ becomes {\em inactive} and exits the algorithm.}
The main technical difficulty is to prove that {each time an item $e_h$ is to be allocated}, $\cN_h$ is not empty, so all items can be allocated.

\begin{algorithm}
\caption{Computing 3-WMMS Allocations for Canonical Instances}\label{alg:main}
\textbf{Input:} A canonical instance $(\cN, \cM, \fw,\fv)$.\\
\textbf{Output:} A 3-WMMS allocation $(A_{1,1},\ldots,A_{k,n_k})$.
\begin{algorithmic}[1] 
\STATE Initialize $A_{i,j} \leftarrow \emptyset$ for all agents $ a_{i,j}\in \cN$.\\
\FOR{each item $e_h$ with $h = 1, \ldots, m$}{
\STATE Find the set of active agents $\cN_h=\{a_{i,j} \in \cN \mid v_{i,j}(e_h)\leq w_i\}$.\\
\STATE Let $a_{i,j}$ be one agent in $\cN_h$ who has the smallest value for $e_h$ and has the smallest weight.\\
\STATE $A_{i,j}\leftarrow A_{i,j}\cup \{e_h\}$.\\
\IF{$v_{i,j}(A_{i,j})\geq 3w_i$}{
\STATE $\cN\leftarrow \cN\setminus \{a_{i,j}\}$.
}
\ENDIF
}
\ENDFOR
\end{algorithmic}
\textbf{return} $(A_{1,1}, \ldots, A_{k,n_k})$
\end{algorithm}

\begin{theorem}\label{the:main}
For any canonical instance $(\cN, \cM, \fw,\fv)$, Algorithm \ref{alg:main} returns a 3-WMMS allocation.
\end{theorem}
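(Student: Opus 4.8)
The plan is to show two things: (i) every agent who exits the algorithm has cost exactly $3w_i = 3\,\WMMS_{i,j}$, so the approximation bound holds trivially for her; and (ii) every agent who survives to the end has cost at most $3w_i$, which is immediate once we establish that \emph{all items get allocated} — i.e., that $\cN_h \neq \emptyset$ for every $h$. Part (i) is where the canonical structure does the work: since every item value and every weight is of the form $w_1/2^p$, and an item $e_h$ is only given to $a_{i,j}$ when $v_{i,j}(e_h) \le w_i$, the running total $v_{i,j}(A_{i,j})$ is always a multiple of the smallest granularity and can only land on $3w_i$ (never jump strictly above it) at the moment of exit. So the real content is part (ii): showing the active set never empties.

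For part (ii), I would argue by contradiction: suppose $e_h$ is the first item for which $\cN_h = \emptyset$. At this point every agent $a_{i,j}$ is inactive for $e_h$, meaning either (a) she has already exited (with cost $3w_i$), or (b) she is still in $\cN$ but $v_{i,j}(e_h) > w_i$. The goal is to derive a contradiction by showing that the total cost already committed, plus the cost of the remaining items $e_h, \ldots, e_m$ as seen by \emph{some} agent, would force that agent's WMMS-defining partition to be infeasible — contradicting $\WMMS_{i,j} = w_i$. This is where I expect the main obstacle: one has to carefully track, for a well-chosen "witness" agent, how the tie-breaking rule (smallest value, then smallest weight) has routed items. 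The technical overview flags exactly this: one introduces \emph{delegating} agents and argues that the union of their "life windows" — the intervals of item indices during which they were active — covers the entire index range $[m]$, using a \emph{proxy cost} that upper-bounds true cost. Concretely, I would set up a chain of delegating agents $a_{i_1,j_1}, a_{i_2,j_2}, \ldots$ where each hands off to the next at the moment the former exits or the latter's value first drops below $w_{i_\ell}$; the tie-breaking rule guarantees that when a heavier agent exits, her bundle is so expensive for every lighter agent that the lighter agent's proxy cost is already pinned down, and concatenating these gives a single agent a bundle she cannot WMMS-partition. That final "cannot partition" step reduces to the identity $\sum_j W_j = 1$ together with $\WMMS_{i,j} = w_i$: the total true value is $1$, each agent's WMMS-defining partition must place value at most $w_i$ per part, and the proxy bound shows too much value has been forced onto too few parts.

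I would organize the write-up as: first a lemma stating the exit-cost identity (part (i), using normalized valuations); then the definition of active/inactive and the observation that survivors have cost $< 3w_i$; then the definition of delegating agents and their proxy costs; then the key covering lemma (life windows of delegating agents cover $[m]$); and finally the contradiction assembling these into an infeasible WMMS partition. The hard part is unquestionably the covering lemma and the bookkeeping in the proxy-cost bound — everything else is a short consequence of the canonical normalization.
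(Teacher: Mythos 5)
Your proposal correctly reproduces the paper's high-level roadmap, but it stops at the roadmap: you identify the exit-cost lemma, the delegating agents, the proxy costs, and the covering lemma, yet you never actually prove the covering lemma, which you yourself flag as ``the hard part.'' In the paper this is Lemma~\ref{lem:allocated}, and its proof occupies the bulk of Section~\ref{sec:main}. Restating the technical overview is not a proof attempt for that step.

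Beyond incompleteness, two aspects of your plan are off-track. First, the final contradiction in the paper is not ``some agent's WMMS-defining partition is infeasible.'' Instead, the paper argues by averaging: if $e_h$ is the first unallocable item and $p$ is the largest group index with $L_p \le h-1$, then every agent in $G_1 \cup \cdots \cup G_p$ would have to be inactive and hence, by Claim~\ref{claim:exact}, would each carry cost exactly $3w_i$, giving total cost at least $3(W_1+\cdots+W_p)$. The work is in showing the committed cost is strictly below $3(W_1+\cdots+W_p)$, from which a still-active agent exists by pigeonhole. The WMMS-defining partition enters only indirectly, through Claims~\ref{cla:lr} and~\ref{cla:star}, which are used to justify the proxy cost function $v^\star_i$ and the threshold $L_i$; it is not the object that is shown infeasible. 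Your ``concatenate the delegating windows into a bundle and show it can't be WMMS-partitioned'' framing does not match what the paper does and would need its own justification to be made rigorous.

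Second, and more substantively, you omit the correction terms $Y_i$. The naive bound
\[
 sc_{h-1}(\cN) \leq v^{\star}_{r_1}([1, d_{r_2}]) + v^{\star}_{r_2}([d_{r_2}+1, d_{r_3}]) + \cdots + v^{\star}_{r_p}([d_{r_p}+1, h-1])
\]
is explicitly noted in the paper to be \emph{too loose} to yield $< 3(W_1+\cdots+W_p)$. The entire purpose of the $Y_i$ terms is to credit back the over-counting that occurs when agents with larger weights remain active past the point where a smaller-weight delegating agent takes over; this credit relies on the factor-of-two gap between distinct item values in a canonical instance. Without the $Y_i$'s (and the attendant labeling argument with capped values $v(\cdot)_u$ in Claims~\ref{cla:y1} and~\ref{cla:yi}), the pigeonhole step simply does not close. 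So the gap in your proposal is concrete: you have the right ingredients listed but the quantitative bound that makes the covering lemma true is missing, and the sketch of how you would close it points in a slightly wrong direction.
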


Combining Theorem \ref{the:main} with Theorem \ref{the:can}, a 12-WMMS allocation is guaranteed to exist.
Further by Theorem \ref{thm:polytime}, for any small constant $\epsilon>0$, a $(24+\epsilon)$-WMMS allocation can be found in polynomial time.

    

In the remainder of this section, we prove Theorem \ref{the:main}. 
We begin by introducing additional definitions and technical claims that will be used to analyze the algorithm. 
For ease of presentation, we use $[h_1, h_2]$ to denote the set of items from $e_{h_1}$ to $e_{h_2}$ and thus $v_i([h_1, h_2]) = v_i(\{e_{h_1}, \ldots, e_{h_2}\})$.
By the $h$-th round of Algorithm \ref{alg:main}, we mean the moment when the algorithm allocates item $e_h$.
{Let $A_{i,j}^h$ denote the partial bundle assigned to agent $a_{i,j}$ at the end of round $h$, and let $sc_h(S)=\sum_{a_{i,j}\in S}v_{i,j}(A_{i,j}^h)$ denote the total cost of agents in $S \subseteq \cN$.}
Denote $L_1=0$ and $L_r=\frac{\sum_{j<r}W_j }{w_r}$ for $r = 2,\ldots,k$.
Recall that $k$ is the number of groups.
{Claim~\ref{cla:lr}, proved in Appendix~\ref{sec:appendix:3WMMS}, shows that every agent has cost at most $w_r$ for each item after index $L_r$.}

\begin{claim}\label{cla:lr}
    For any agent $a_{i,j}$ and item $e_h\in\cM$ with $h> L_r$ and $r\in [k]$, we have $v_{i,j}(e_h)\leq w_r$.
\end{claim}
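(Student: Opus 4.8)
The plan is to prove Claim~\ref{cla:lr} by a counting argument that leverages the IDO and proportionality properties of canonical instances. First I would fix an agent $a_{i,j}$ and the index $r$, and recall that since the instance is canonical, $\WMMS_{i,j} = w_i$, so there exists a WMMS-defining partition $\A = (A_1, \ldots, A_n)$ for $a_{i,j}$ with $v_{i,j}(A_\ell)/w_\ell \le \WMMS_{i,j}/w_i = 1$ for every $\ell$, i.e., $v_{i,j}(A_\ell) \le w_\ell$ for all $\ell$. The key observation is that the total weight of the agents in groups $G_1,\ldots,G_{r-1}$ (those strictly heavier than $w_r$) is $\sum_{j<r} W_j$, so if we let these agents' WMMS-bundles absorb as much value as possible, they can hold at most $\sum_{j<r} W_j$ in $a_{i,j}$'s valuation. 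Equivalently, the remaining value $1 - \sum_{j<r} W_j$ must be distributed among bundles whose corresponding weights are at most $w_r$; but I want a statement about item indices, not bundle values, so I need to turn this into a bound on how many of the top items can be ``large.''

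The cleaner route is as follows. Suppose toward a contradiction that item $e_h$ with $h > L_r = \frac{\sum_{j<r} W_j}{w_r}$ satisfies $v_{i,j}(e_h) > w_r$. By IDO, all of $e_1, \ldots, e_h$ then have $v_{i,j}$-value strictly greater than $w_r$. In the WMMS-defining partition $\A$ for $a_{i,j}$, each bundle $A_\ell$ with $\ell$ in a group of weight $w_s$ satisfies $v_{i,j}(A_\ell) \le w_s$; if $w_s \le w_r$, such a bundle cannot contain any item of value $> w_r$, so all of $e_1,\ldots,e_h$ must lie in bundles $A_\ell$ whose weight $w_\ell$ exceeds $w_r$ — that is, bundles indexed by agents in $G_1 \cup \cdots \cup G_{r-1}$. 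The total value of these bundles is at most $\sum_{\ell \in G_1\cup\cdots\cup G_{r-1}} w_\ell = \sum_{j<r} W_j$. On the other hand, these bundles together contain all $h$ items $e_1,\ldots,e_h$, each of value $> w_r$, so their total value is $> h \cdot w_r > L_r \cdot w_r = \sum_{j<r} W_j$, a contradiction. Hence $v_{i,j}(e_h) \le w_r$, as claimed. (A minor point: I should handle $r=1$ separately, where $L_1 = 0$ and the sum $\sum_{j<1} W_j$ is empty, so the claim reads $v_{i,j}(e_h) \le w_1$ for all $h \ge 1$; this is immediate since $v_{i,j}(e_h) \le v_{i,j}(\cM) = 1$ is too weak — instead use that each item's value is $\frac{w_1}{2^p}$ for an integer $p \ge 1$, hence at most $w_1/2 < w_1$, or more simply observe the largest item must fit in some WMMS-bundle of weight $\le w_1$.)

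The main obstacle I anticipate is making the ``bundles of weight $\le w_r$ cannot hold items of value $> w_r$'' step airtight: it relies on $v_{i,j}(A_\ell) \le w_\ell$ holding in the WMMS-defining partition, which is exactly the canonical proportionality condition $\WMMS_{i,j} = w_i$ combined with the definition of a WMMS-defining partition, so I would state this explicitly as a preliminary line. A secondary subtlety is the strictness of inequalities — whether to argue by strict contradiction ($v_{i,j}(e_h) > w_r$) or to allow ties; since item and weight values are dyadic multiples of $w_1$, equality cases are clean, but arguing with the strict hypothesis and deriving the strict chain $h \cdot w_r > \sum_{j<r} W_j$ versus $h > L_r$ (non-strict $h \ge L_r + 1$ would also suffice given integrality of item counts) is the safest. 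I expect the whole argument to be short once these points are pinned down.
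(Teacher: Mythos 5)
Your proof is correct and takes essentially the same approach as the paper's: both argue by contradiction, use IDO to conclude that all of $e_1,\ldots,e_h$ have value exceeding $w_r$, force these items into bundles of the WMMS-defining partition belonging to agents in $G_1\cup\cdots\cup G_{r-1}$, and derive a contradiction from the bound $v_{i,j}(\overline{B})\le\sum_{j<r}W_j$. The only cosmetic difference is that the paper invokes the dyadic value structure to strengthen $v_{i,j}(e_h)>w_r$ to $v_{i,j}(e_h)\ge 2w_r$ before comparing with $\sum_{j<r}W_j$, whereas you observe (correctly) that the strict inequality $h\cdot w_r>L_r\cdot w_r$ already suffices.
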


{We next prove that the canonical-instance structure ensures that no agent $a_{i,j}$ receives a bundle whose total cost exceeds $3w_i$.}

\begin{claim} \label{claim:exact}
    The allocation $(A_{1,1},\ldots, A_{k,n_k})$ computed by Algorithm~\ref{alg:main} guarantees that $v_{i,j}(A_{i,j})\leq 3w_i$ for every agent $a_{i,j}\in \cN$.
\end{claim}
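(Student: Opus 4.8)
The plan is to track a single agent $a_{i,j}$ through the run of Algorithm~\ref{alg:main} and show that, at the moment it stops receiving items, the cost of its bundle has not overshot $3w_i$. There are two cases. If $a_{i,j}$ never becomes inactive, then after the last round $v_{i,j}(A_{i,j})<3w_i$, since otherwise the exit check $v_{i,j}(A_{i,j})\ge 3w_i$ would have removed it; so the claim is immediate. Hence assume $a_{i,j}$ exits in round $h$, upon receiving $e_h$, and let $A_{i,j}^{h-1}$ be its bundle at the end of round $h-1$. Since $a_{i,j}$ is active at the start of round $h$, it had not triggered the exit check earlier, so $v_{i,j}(A_{i,j}^{h-1})<3w_i$; and triggering it now gives $v_{i,j}(A_{i,j})=v_{i,j}(A_{i,j}^{h-1})+v_{i,j}(e_h)\ge 3w_i$. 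Moreover, once removed from $\cN$ the agent is never re-added, so $A_{i,j}=A_{i,j}^{h}$ and every item in $A_{i,j}$ has index at most $h$. The remaining goal is to promote the last inequality to an equality.

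To do so I would set $x:=v_{i,j}(e_h)$ and record two divisibility facts from the canonical structure. First, $a_{i,j}\in\cN_h$ forces $x\le w_i$, and since both $x$ and $w_i$ are of the form $w_1/2^{p}$ for integers $p\ge 0$, the ratio $w_i/x\ge 1$ is a power of $2$, hence a positive integer; thus $w_i$, and so $3w_i$, is an integer multiple of $x$. Second, every $e_{h'}\in A_{i,j}^{h-1}$ has index $h'<h$, so by the IDO property $v_{i,j}(e_{h'})\ge v_{i,j}(e_h)=x$; as $v_{i,j}(e_{h'})$ and $x$ are again of the form $w_1/2^{p}$, the ratio $v_{i,j}(e_{h'})/x$ is a positive integer, i.e.\ $v_{i,j}(e_{h'})$ is an integer multiple of $x$. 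Summing the second fact over $A_{i,j}^{h-1}$ gives $v_{i,j}(A_{i,j}^{h-1})=c\,x$ for some integer $c\ge 0$, and therefore $v_{i,j}(A_{i,j})=(c+1)x$.

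The proof then finishes with a one-line sandwiching. Write $3w_i=d\,x$ with $d$ a positive integer, as permitted by the first fact. The inequality $v_{i,j}(A_{i,j}^{h-1})<3w_i$ reads $cx<dx$, so $c<d$; the inequality $v_{i,j}(A_{i,j})\ge 3w_i$ reads $(c+1)x\ge dx$, so $d\le c+1$. Together these force $d=c+1$, whence $v_{i,j}(A_{i,j})=(c+1)x=dx=3w_i$, so in particular $v_{i,j}(A_{i,j})\le 3w_i$. This handles both cases and proves the claim.

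The only step that needs care is the second divisibility fact, namely that an item received by $a_{i,j}$ before $e_h$ cannot be strictly cheaper (for $a_{i,j}$) than $e_h$ — which is exactly what the IDO property gives, once we note that $A_{i,j}^{h-1}$ contains only items of index $<h$ — together with the elementary remark that, among two values of the form $w_1/2^{p}$, the larger is an integer multiple of the smaller. Both are immediate from the defining properties of a canonical instance, so I anticipate no genuine difficulty: the whole argument collapses to the integer sandwiching once divisibility by $x$ is in place.
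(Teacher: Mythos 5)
Your proof is correct and rests on the same key observation as the paper's: the canonical (power-of-two) structure together with IDO forces the accumulated cost to hit $3w_i$ exactly rather than overshoot. The paper organizes this as an induction maintaining the invariant that $3w_i - v_{i,j}(\text{bundle so far})$ is a non-negative integer multiple of the last item's value, whereas you argue directly about the exit round: the last item's value $x$ divides every earlier item's value (IDO plus powers of two) and also divides $3w_i$, so a one-line integer sandwiching between $cx < 3w_i$ and $(c+1)x \ge 3w_i$ closes the argument. Your version is a modest simplification — it avoids tracking the invariant across rounds — but it is the same underlying idea.
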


\begin{proof}
    We use $A_{i,j} =\{o_1,\ldots,o_t\}$ to denote the allocation of agent $a_{i,j}$ at the end of the algorithm. 
    {By construction, $a_{i,j}$ obtains $o_h$ only if $v_{i,j}(o_h) \leq w_i$ and the cost of its previously allocated items is less than $3w_i$.}
    Thus, to prove the lemma, it suffices to show that after allocating $o_h$ to $a_{i,j}$, $3w_i - v_{i,j}([1,h])$ is a non-negative multiple of $v_{i,j}(o_h)$.    

    We prove the above claim by induction. Consider the first item $o_1$ allocated to $a_{i,j}$. 
    By the definition of canonical instance, we have $v_{i,j}(o_1) = \frac{w_i}{2^p}$ for a non-negative integer $p$.
    {Then,}
    \[
        \frac{3w_i-v_{i,j}(o_1)}{v_{i,j}(o_1)} = (3w_i - \frac{w_i}{2^p}) / \frac{w_i}{2^p} = 3\cdot 2^p -1,
    \]
    which means that the difference is an integral multiple of $v_{i,j}(o_1)$.
    Assume that the claim holds until $o_{h-1}$ and $3w_i - v_{i,j}([1,h-1]) = q\cdot v_{i,j}(o_{h-1})$ for some integer $q$.
    Since the instance is canonical, $v_{i,j}(o_h) = \frac{v_{i,j}(o_{h-1})}{2^p}$ for some non-negative integer $p$. Therefore, after allocating $o_h$ to $a_{i,j}$, we have 
    \[
        \frac{3w_i-v_{i,j}([1,h])}{v_{i,j}(o_h)} = (q\cdot v_{i,j}(o_{h-1}) - \frac{v_{i,j}(o_{h-1})}{2^p}) / \frac{v_{i,j}(o_{h-1})}{2^p} = q\cdot 2^p -1,
    \]
    {which proves the induction step.}
    Moreover, the claim indicates that $3w_i - v_{i,j}(A_{i,j}) \geq 0$ throughout the algorithm, which proves the claim.
\end{proof}

By Claim \ref{claim:exact}, Algorithm \ref{alg:main} does not allocate an agent $a_{i,j}$ a bundle with cost higher than $3w_i$.
It remains to prove that all items can be allocated.
As mentioned earlier, we only need to prove that in every round of the algorithm, there is at least one active agent, i.e., $\cN_h \neq \emptyset$.
In the following, we first introduce delegating agents.

For any $1\le i\le k$, denote by $a_i$ the agent in group $G_i$ who is the last to become inactive.
Let $\cN' = \{a_i \mid 1\le i\le k\}$ be the set of these agents.
Thus, we ignore the subscript of $j$ for agents in $\cN'$ to simplify the notations.
{For each $a_i\in \cN'$, let $d_i = \argmax_h\{v_i(e_h)>w_i\}$ be the largest index of an item whose cost is greater than $w_i$ under $v_i(\cdot)$, and let $f_i$ denote the time when $a_i$ becomes inactive.}
Thus the life window of $a_{i}$ is $\Phi_{i}= [d_{i}+1, f_{i}]$.
By Claim~\ref{cla:lr}, 
$d_{i}\leq L_i$.

\begin{claim}
    For any two agents $a_{i_1}$ and $a_{i_2}$ such that $i_1 < i_2$ and $\Phi_{i_1} \cap \Phi_{i_2} \neq \emptyset$, if an item $e_h \in \Phi_{i_1} \cap \Phi_{i_2}$ is allocated to $a_{i_1}$, then $v_{i_2}(e_h) \geq 2 \cdot v_{i_1}(e_h)$.
\end{claim}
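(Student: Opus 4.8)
The plan is to argue directly from the selection rule of Algorithm~\ref{alg:main} combined with the power-of-two structure of a canonical instance. The key observation is that at round $h$ both $a_{i_1}$ and $a_{i_2}$ are eligible to receive $e_h$, yet the algorithm picks $a_{i_1}$; this, together with the weight ordering forced by $i_1<i_2$, pins down which value is strictly smaller, and canonicity then upgrades "strictly smaller" to "smaller by a factor of $2$."

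First I would check that at the start of round $h$ both agents lie in the active set $\cN_h$. Since $e_h\in\Phi_{i_1}\cap\Phi_{i_2}$, we have $d_{i_\ell}<h\le f_{i_\ell}$ for $\ell\in\{1,2\}$. The inequality $h>d_{i_\ell}$ gives $v_{i_\ell}(e_h)\le w_{i_\ell}$, because $d_{i_\ell}$ is the largest index whose value exceeds $w_{i_\ell}$ and the instance is IDO. The inequality $h\le f_{i_\ell}$ guarantees $a_{i_\ell}$ has not exited before round $h$, since an agent is removed only at the end of its exit round $f_{i_\ell}\ge h$. Hence $a_{i_1},a_{i_2}\in\cN_h$.

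Next I would invoke the tie-breaking rule. Item $e_h$ goes to $a_{i_1}$ while $a_{i_2}\in\cN_h$ is another candidate, so the rule forces either $v_{i_1}(e_h)<v_{i_2}(e_h)$, or $v_{i_1}(e_h)=v_{i_2}(e_h)$ with $w_{i_1}\le w_{i_2}$. The second case is impossible: $i_1<i_2$ index distinct weight groups whose weights are strictly decreasing, so $w_{i_1}>w_{i_2}$. Therefore $v_{i_1}(e_h)<v_{i_2}(e_h)$. Finally, by canonicity every item value equals $w_1/2^p$ for some non-negative integer $p$; writing $v_{i_1}(e_h)=w_1/2^{p_1}$ and $v_{i_2}(e_h)=w_1/2^{p_2}$, the strict inequality means $p_1>p_2$, hence $p_1\ge p_2+1$ and $v_{i_1}(e_h)=w_1/2^{p_1}\le\frac12\cdot w_1/2^{p_2}=\frac12 v_{i_2}(e_h)$, i.e.\ $v_{i_2}(e_h)\ge 2v_{i_1}(e_h)$.

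I do not expect a real obstacle here; the only points that need care are the membership $a_{i_2}\in\cN_h$ — in particular handling the boundary case where $h$ equals an exit time — and the fact that distinct weight groups have strictly different weights, which is precisely what makes the weight tie-break decisive rather than vacuous.
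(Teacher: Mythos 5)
Your proof is correct, and the paper in fact states this claim without any proof, so you are filling in an unargued step. The three ingredients you use — both agents being active at round $h$ (from $d_{i_\ell}<h\le f_{i_\ell}$ together with the exit timing in the algorithm), the tie-break forcing $v_{i_1}(e_h)<v_{i_2}(e_h)$ strictly because distinct group indices mean distinct weights, and the power-of-two structure of canonical valuations upgrading a strict inequality to a factor-of-$2$ gap — are exactly the ones the claim's placement and the surrounding discussion (``Due to the construction of the canonical instance, $v_t(e_h)\le\frac12 v_{r_q}(e_h)$'') make clear the authors had in mind, and you have handled the only delicate boundary correctly: an agent removed at the end of round $f_i$ is still a candidate for item $e_{f_i}$.
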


Next, we iteratively construct the set of {\em delegating agents} $\cN^* = \{a_{r_1}, \ldots,a_{r_t}\}\subseteq \cN'$. Initializing $\cN^* = \{a_1\}, i=1$,
repeat the following procedure until 
$i = k$:
\begin{enumerate}
    \item Set $i \gets i+1$;
    \item If $f_i > \max_{j\in {\cN^*}}f_j$, set $\cN^* \gets \cN^* \cup \{i\}$;
    \item For all $i,j \in \cN^*, i<j$, if $d_i \ge d_j$, set $\cN^* \gets \cN^* \setminus \{i\}$.
\end{enumerate}
After the set $\cN^*$ is finalized by the above routine, we sort the agents in ascending order and name them as  $\cN^* = \{a_{r_1}, \ldots,a_{r_t}\}$.
The delegating agents are illustrated in Figure \ref{fig:proxies}.
{The construction implies the following ordering property.}
\acomment{Follow-up author verification: The construction maximizes the exit time $f_l$ among eligible life windows, but the written conditions do not rule out selecting an eligible group index $l<r_i$ with a later exit time. The later intervals $G_{r_i}\cup\cdots\cup G_{r_{i+1}-1}$ require $r_i<r_{i+1}$. Please add the short argument proving this group-index monotonicity, or state it as an additional claim.}

\begin{claim}
\label{cla:sel}
    $d_{r_i} < d_{r_j}$ and $f_{r_i}< f_{r_j}$ for all $t\ge j>i>0$.
\end{claim}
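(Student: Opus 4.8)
The plan is to isolate one combinatorial lemma about the greedy construction of $\cN^*$ and to deduce both halves of Claim~\ref{cla:sel} from it; the claim uses nothing about valuations, only the structure of the selection rule, so neither Claim~\ref{cla:lr} nor the preceding doubling claim is needed. First I would record the monotonicity that is built into the rule. By definition of the $\arg\max$, every selected index $r_i$ (for $1\le i\le t$) satisfies the candidacy condition at its step, namely $d_{r_i}+1\le f_{r_{i-1}}<f_{r_i}$. Chaining these inequalities from the sentinel $r_0=0$ (with $f_{r_0}=1$ and $d_{r_0}=0$) gives the strict chain $f_{r_0}<f_{r_1}<\cdots<f_{r_t}$.

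The key lemma I would prove is: for all $m>s\ge 1$, $d_{r_m}\ge f_{r_{s-1}}$. The argument is short. At step $s$ the selected index $r_s$ maximizes $f_l$ over all candidates $l$, that is, over all $l$ with $d_l+1\le f_{r_{s-1}}<f_l$. Since $m>s$, the chain above gives $f_{r_m}>f_{r_s}$, so $r_m$ cannot be a candidate at step $s$ (otherwise $f_{r_m}\le f_{r_s}$). On the other hand $f_{r_{s-1}}<f_{r_m}$, again by the chain, because $s-1<m$. Hence the part of the candidacy condition for $r_m$ that fails must be the inequality $d_{r_m}+1\le f_{r_{s-1}}$, so $d_{r_m}+1> f_{r_{s-1}}$; since all quantities are integers this is exactly $d_{r_m}\ge f_{r_{s-1}}$.

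Both statements then follow at once. For the monotonicity $d_{r_i}\le d_{r_{i+1}}$ with $1\le i\le t-1$, apply the lemma with $s=i$ and $m=i+1$ to get $d_{r_{i+1}}\ge f_{r_{i-1}}$; combined with the candidacy inequality $d_{r_i}+1\le f_{r_{i-1}}$ satisfied by $r_i$, this gives $d_{r_i}<f_{r_{i-1}}\le d_{r_{i+1}}$, in particular $d_{r_i}\le d_{r_{i+1}}$ (and when $i=1$ the bound $f_{r_0}=1$ simply forces $d_{r_1}=0$, consistent with the rest). For the separation $f_{r_i}\le d_{r_j}$ whenever $j-i\ge 2$, apply the lemma with $s=i+1$ and $m=j$: since $j\ge i+2>i+1=s\ge 2$ the hypotheses hold, and the conclusion reads $d_{r_j}\ge f_{r_{s-1}}=f_{r_i}$, which is precisely the desired inequality.

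I do not anticipate a genuine obstacle; the content is purely bookkeeping about which step of the greedy process created which delegating agent. The only points that require attention are invoking the lemma at the step $r_{s-1}\to r_s$ that produced the earlier agent, and verifying $f_{r_{s-1}}<f_{r_m}$ so that one can pin down which of the two candidacy conditions for $r_m$ is the one that fails. Treating the sentinel $r_0=0$ uniformly, so that the case $i=1$ is not a special case, is the only remaining minor detail.
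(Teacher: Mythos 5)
Your proof is correct, and it is clean and self-contained. The paper states Claim~\ref{cla:sel} without any proof (only the phrase ``It is easy to observe that'' preceding the claim statement and Figure~\ref{fig:proxies}), so there is no argument in the paper to compare against; your write-up supplies the details the authors omitted.

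Your isolation of the auxiliary lemma --- for all $m>s\ge 1$, $d_{r_m}\ge f_{r_{s-1}}$ --- is exactly the right lever. The chain $f_{r_0}<f_{r_1}<\cdots<f_{r_t}$ from the second half of the candidacy condition, together with the argmax rule at step $s$, forces $r_m$ to fail the \emph{first} half of the candidacy condition at that step, and integrality converts $d_{r_m}+1>f_{r_{s-1}}$ into $d_{r_m}\ge f_{r_{s-1}}$. From there, monotonicity follows by taking $s=i$, $m=i+1$ and chaining with $d_{r_i}+1\le f_{r_{i-1}}$ (in fact you obtain the strict version $d_{r_i}<d_{r_{i+1}}$, which of course implies the stated $\le$), and the separation $f_{r_i}\le d_{r_j}$ for $j-i\ge 2$ follows by taking $s=i+1$, $m=j$. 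Your observation that the claim is purely combinatorial --- it uses only the structure of the greedy selection rule and none of the properties of canonical instances or valuations --- is a useful clarification that the paper leaves implicit. One small remark: the sentinel case $i=1$ is not genuinely special, as your parenthetical acknowledges; $d_{r_1}+1\le f_{r_0}=1$ forces $d_{r_1}=0$, and the general argument subsumes it.
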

\begin{figure}[h]
        \centering
        \includegraphics[width=1\linewidth]{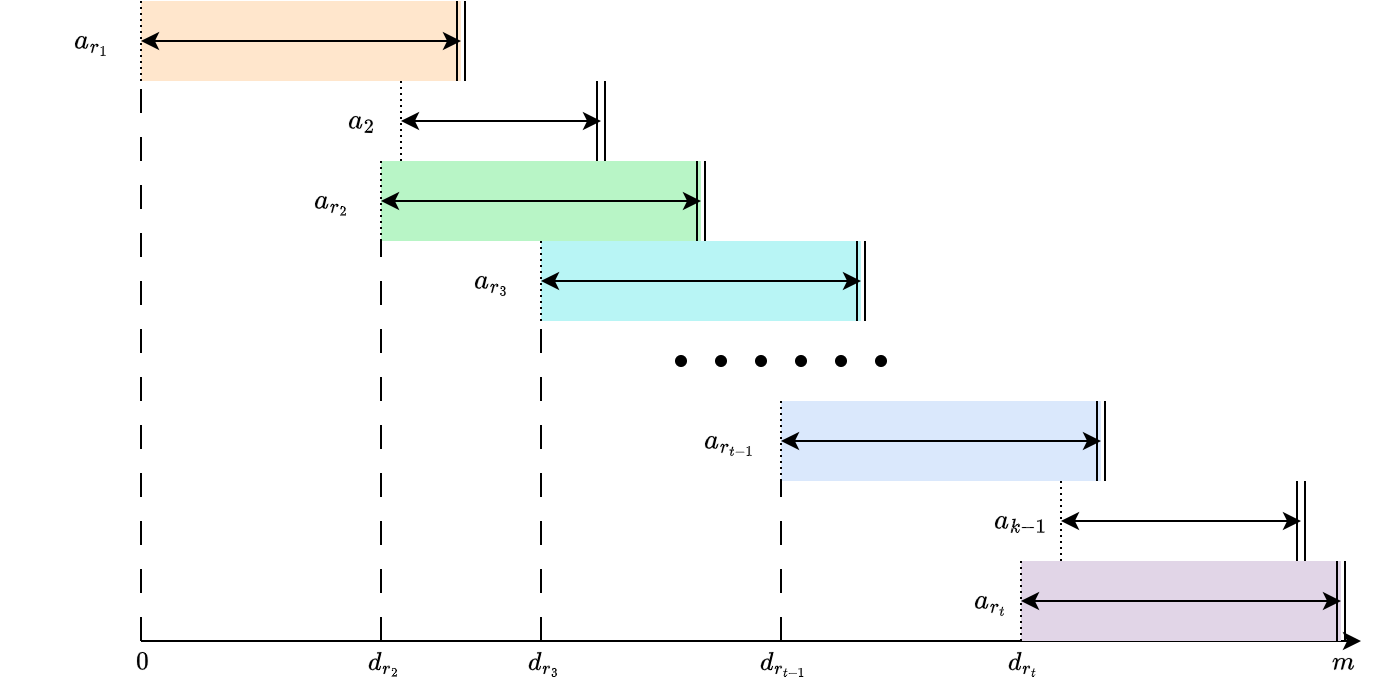}
        \caption{An illustration of delegating agents $\cN^*$.
        For each interval $a_{r_i}$, the left dashed boundary means that $a_i$ becomes active at item $d_{r_i}+1$, and the right double lines mean $a_i$ becomes inactive after receiving this item.}
        \label{fig:proxies}
    \end{figure}

{We now prove that the collective life window of the delegating agents covers all items; together with Claim~\ref{claim:exact}, this proves Theorem~\ref{the:main}.}

\begin{lemma}
\label{lem:allocated} 
$\bigcup_{a_i\in \cN^*} \Phi_{i} = [1,m]$.
\end{lemma}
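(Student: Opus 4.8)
The plan is to show that the collective life windows $\Phi_{r_1}, \ldots, \Phi_{r_t}$ of the delegating agents form a chain of intervals that starts at item $e_1$ and ends at item $e_m$, with no gap in between. First I would establish the left end: by the construction, $r_1 \in \arg\max_l\{f_l \mid d_l + 1 \le f_{r_0} < f_l\}$ with $f_{r_0} = 1$, so there is a delegating agent active at round $1$, meaning $d_{r_1} + 1 \le 1$, i.e. $\Phi_{r_1}$ begins at (or before) $e_1$. For the right end I would argue by contradiction: suppose $f_{r_t} < m$. Then at round $f_{r_t}+1$ the item $e_{f_{r_t}+1}$ still needs to be allocated; I must exhibit some agent whose life window contains this item and who outlasts $a_{r_t}$, contradicting the termination condition $\{l \mid d_l + 1 \le f_{r_t} < f_l\} = \emptyset$ that stopped the construction. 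The natural candidate is the last agent to go inactive overall (or any agent still active at round $f_{r_t}+1$); since all $m$ items must eventually be allocated and the algorithm processes them in order, if item $e_{f_{r_t}+1}$ gets allocated to some agent $a_{i,j}$, then by Claim~\ref{cla:lr} and the definition of $d_i$, that agent's group representative $a_i \in \cN'$ satisfies $d_i + 1 \le f_{r_t}+1 \le f_i$, and one checks $d_i \le f_{r_t}$ so in fact $d_i + 1 \le f_{r_t} < f_i$ unless $f_i = f_{r_t}$, which can be ruled out. This would contradict termination, so $f_{r_t} = m$.

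The remaining work is to show the intervals overlap consecutively, i.e. $d_{r_{i+1}} + 1 \le f_{r_i} + 1$ for each $i$ — equivalently $d_{r_{i+1}} \le f_{r_i}$ — so that $\Phi_{r_i} \cup \Phi_{r_{i+1}}$ has no hole between them. But this is immediate from the construction: $r_{i+1}$ is chosen from $\{l \mid d_l + 1 \le f_{r_i} < f_l\}$, so in particular $d_{r_{i+1}} + 1 \le f_{r_i}$, hence $d_{r_{i+1}} + 1 \le f_{r_i} + 1$, giving $[d_{r_{i+1}}+1, f_{r_{i+1}}] \cup [d_{r_i}+1, f_{r_i}] \supseteq [d_{r_i}+1, f_{r_{i+1}}]$. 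Chaining these together with the left-end and right-end facts, $\bigcup_{i=1}^t \Phi_{r_i} \supseteq [1, f_{r_t}] = [1, m]$, and since each $\Phi_{r_i} \subseteq [1,m]$ the reverse inclusion is trivial. I would also want to invoke Claim~\ref{cla:sel} to confirm the $d_{r_i}$ and $f_{r_i}$ are monotone, which makes the ``chain of intervals'' picture precise and rules out degenerate orderings.

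The main obstacle I anticipate is the right-end argument: making rigorous the claim that if $f_{r_t} < m$ then some agent's life window genuinely extends past $f_{r_t}$. The subtlety is that I am assuming the algorithm successfully allocates all items in order to conclude things about which agent receives $e_{f_{r_t}+1}$, but the whole point of Lemma~\ref{lem:allocated} is to prove allocation succeeds — so I must be careful not to argue circularly. The clean way around this is to phrase everything in terms of the agents' life windows $\Phi_i$, which are determined purely by the valuations and weights (via $d_i = \argmax_h\{v_i(e_h) > w_i\}$) and by $f_i$, the round at which $a_i$ would go inactive; the construction of $\cN^*$ and Claim~\ref{cla:sel} are combinatorial facts about these windows. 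Then Lemma~\ref{lem:allocated} says these windows tile $[1,m]$, and separately (presumably in the paragraph following this lemma) one argues that tiling implies $\cN_h \ne \emptyset$ for every $h$. So within this lemma I only need: (i) $d_{r_1} = 0$ hence $\Phi_{r_1}$ starts at $1$; (ii) consecutive overlap from the construction; (iii) $f_{r_t} = m$. For (iii), since $\sum_i W_i = 1$ and $v_i(\mathcal M) = 1 = 1/w_1 \cdot w_1$, each agent's total valuation forces enough ``capacity'' that the last-to-exit agent $a_k$ (smallest weight group) has $f_k = m$ — because every item has value at most $w_k$ for agents in $G_k$ after index $L_k$, and $a_k$ does not exit before consuming items worth $3 w_k$; a counting argument comparing $m$, the $L_r$'s, and the total value should close this, and $a_k$ (or whichever agent is last inactive) then either equals $r_t$ or would have been eligible to be chosen after $r_t$, contradicting termination.
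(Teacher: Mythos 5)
The proposal misidentifies where the real work lies. You correctly observe that by construction the delegating agents' windows are consecutive ($d_{r_{i+1}}+1 \le f_{r_i}$), so $\bigcup_i \Phi_{r_i}$ is a single interval $[1, f_{r_t}]$ once one notes $d_{r_1}=0$. But this is an immediate consequence of how $\cN^*$ is built; the entire difficulty of the lemma is showing $f_{r_t} = m$, or equivalently that the algorithm never stalls. The paper's proof of that is quantitative: assume $e_h$ is the first unallocated item, let $p$ be the largest index with $L_p \le h-1$, and show $sc_{h-1}(\cN) < 3(W_1+\cdots+W_p)$, forcing some agent in $G_1\cup\cdots\cup G_p$ to still be active. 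Establishing this bound is the substance of the section — it uses the greedy min-cost allocation rule to bound social cost by the delegating agents' own valuations, replaces those valuations by the proxy cost functions $v^\star_{r_i}$, and then runs the intricate labeling/overcounting argument of Claims~\ref{cla:y1} and~\ref{cla:yi} with the $Y_i$ correction terms. Your ``counting argument comparing $m$, the $L_r$'s, and the total value'' \emph{is} this machinery, and the proposal supplies none of it.

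Two concrete errors are worth flagging. First, the attempted escape from circularity fails: $f_i$ is determined by the algorithm's run, not ``purely by the valuations and weights,'' so if the algorithm halts at some $e_h$ the quantities $f_i$ (hence $\Phi_i$, hence $\cN^*$) are not well-defined past $h$, and you cannot treat the windows as static combinatorial objects and separately conclude that allocation succeeds. Second, even assuming $e_{f_{r_t}+1}$ gets allocated to some $a_{i,j}$, Claim~\ref{cla:lr} only gives $d_i \le f_{r_t}$, i.e.\ $d_i+1 \le f_{r_t}+1$; you assert ``one checks $d_i + 1 \le f_{r_t}$,'' but the boundary case $d_i = f_{r_t}$ is not excluded, and in that case $a_i$ does not satisfy the membership test $d_l+1 \le f_{r_t} < f_l$ and cannot contradict the termination condition. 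Closing that gap requires precisely the cost-bound argument that is missing.
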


\subsection{Proof of Lemma \ref{lem:allocated}}

{We prove Lemma~\ref{lem:allocated} by contradiction. Suppose that $e_h$ is the first unallocated item.}
Denote by $p$ the largest group index such that $L_p\leq h-1$.
By Claim~\ref{cla:lr}, every agent $a_{i,j}$ in $G_1 \cup \cdots \cup G_p$ has value no greater than $w_i$ for item $e_h$. 
{If $sc_{h-1}(\cN)<3W_1+\cdots+3W_p$, then at least one agent in $G_1 \cup \cdots \cup G_p$ can still receive $e_h$. Hence $\cN_h \neq \emptyset$, contradicting the choice of $e_h$.}

\begin{claim}
    $sc_{h-1}(\cN) < 3W_1 + \ldots + 3W_p$. 
\end{claim}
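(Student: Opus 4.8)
## Proof Proposal for the Claim $sc_{h-1}(\cN) < 3W_1 + \cdots + 3W_p$

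My plan is to bound the total cost accumulated by all (still-active and already-exited) agents just before item $e_h$ is considered, by charging it against the aggregate cost of the items $e_1,\ldots,e_{h-1}$ as evaluated by appropriate agents, and then to exploit the delegating-agent structure to show this aggregate is strictly less than $3(W_1+\cdots+W_p)$. The key idea is that every item $e_{h'}$ with $h' \le h-1$ that was allocated to some agent $a_{i,j}$ was given to the \emph{minimum-value active agent}, so its contribution to $sc_{h-1}$ is the smallest value among all active agents at that round, and in particular it is at most the value that a delegating agent alive at round $h'$ would assign to it.

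\smallskip

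First I would set up the charging. For each round $h' \le h-1$, let $a_{i,j}$ be the agent who received $e_{h'}$. By the tie-breaking rule in Algorithm~\ref{alg:main}, $v_{i,j}(e_{h'}) = \min_{a \in \cN_{h'}} v_a(e_{h'})$. Since $e_h$ is the first unallocated item, every round $h' \le h-1$ had a nonempty active set, and I claim that for each such round there is a delegating agent $a_{r_s} \in \cN^*$ whose life window $\Phi_{r_s}$ contains $e_{h'}$ and who is still active at round $h'$ — this follows from the construction of $\cN^*$ together with Claim~\ref{cla:sel}, which guarantees the windows $\Phi_{r_1},\ldots,\Phi_{r_t}$ are ``chained'' so their union is an interval starting at $1$ (this is essentially what Lemma~\ref{lem:allocated} asserts, and the present claim is the inductive engine for it). Hence $v_{i,j}(e_{h'}) \le v_{r_s}(e_{h'})$, because $a_{r_s} \in \cN_{h'}$ and $a_{i,j}$ was the minimizer. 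Summing over $h'$, and grouping rounds by which delegating agent we charge to, I get
\[
sc_{h-1}(\cN) \;=\; \sum_{h'=1}^{h-1} v_{i(h'),j(h')}(e_{h'}) \;\le\; \sum_{s=1}^{t'} v_{r_s}\bigl(\Phi_{r_s} \cap [1,h-1]\bigr),
\]
where $t'$ indexes the delegating agents whose windows meet $[1,h-1]$.

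\smallskip

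Next I would bound each term $v_{r_s}(\Phi_{r_s} \cap [1,h-1])$. Since $\Phi_{r_s} = [d_{r_s}+1, f_{r_s}]$ and the agent $a_{r_s}$ exited the algorithm at time $f_{r_s}$ only when her bundle cost reached exactly $3 w_{r_s}$ (Claim~\ref{claim:exact}), and since her window starts only where her item values have dropped to $\le w_{r_s}$, the value $v_{r_s}([d_{r_s}+1, f_{r_s}])$ of \emph{all} items in her window is at most... here I must be careful: $a_{r_s}$ did not receive all items in $\Phi_{r_s}$, only some; but the \emph{whole} cost charged is $v_{r_s}$ summed over items in the window, not just her own bundle. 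The right estimate is $v_{r_s}(\Phi_{r_s}) \le 3 W_{?}$ — I would argue that the items in $\Phi_{r_s}$ that precede $f_{r_{s+1}}$ (the start-ish of the next window) are the ones ``owned'' by $a_{r_s}$ in the charging, and by Claim~\ref{cla:sel} (the $f_{r_i} \le d_{r_j}$ for $j \ge i+2$ condition) consecutive windows overlap only in pairs, so each item is charged to at most... actually to control double-counting I would instead charge each item $e_{h'}$ to the delegating agent with the \emph{smallest} index $r_s$ whose window contains it, making the charge sets disjoint. Then $v_{r_s}$ of its charge set is at most $v_{r_s}([d_{r_s}+1, f_{r_s}])$, and using $d_{r_s} \ge L_{r_s}$... wait, $d_{r_s} \le L_{r_s}$, so I use: all items in $[L_{r_s}+1, \cdot]$ have $v_{r_s}$-value $\le w_{r_s}$, and the exit condition caps the relevant sum. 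The cleanest bound is $v_{r_s}(\Phi_{r_s}) < 3 w_{r_s} + w_{r_s} = $ — no. I would pin this down as: the total $v_{r_s}$-cost of items $e_1,\ldots,e_{f_{r_s}}$ is $1 = \sum_j W_j$ at most (normalized valuation), and the items before $d_{r_s}+1$ have large value, so $v_{r_s}(\Phi_{r_s}) \le 1 - v_{r_s}([1,d_{r_s}])$; combined with $v_{r_s}([1,d_{r_s}]) > d_{r_s} w_{r_s} \ge$ something, and the fact that $\Phi_{r_s} \subseteq [L_{r_s}+1, \ldots]$ for all $s$ we're summing only finitely many such bounds whose $w_{r_s}$'s telescope. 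Finally, since $e_h$ is unallocated and $p$ is the largest group index with $L_p \le h-1$, every delegating agent $a_{r_s}$ charged must satisfy $f_{r_s} \le h-1$, hence $r_s \le p$ (as the groups $G_{p+1},\ldots$ have not yet ``started'' for item values $\le w_{p+1}$ at index $h-1$... this needs $L_{r_s} < f_{r_s} \le h-1 < L_{p+1}$, so $r_s \le p$), giving $\sum_{s} 3 w_{r_s} \cdot(\text{multiplicities}) \le 3(W_1+\cdots+W_p)$ with the inequality \emph{strict} because item $e_h$ itself contributes nothing to $sc_{h-1}$ yet at least one unit of ``capacity'' $w_i$ among $G_1\cup\cdots\cup G_p$ remains unfilled.

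\smallskip

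The main obstacle I anticipate is making the charging map between rounds and delegating agents simultaneously (i) well-defined — i.e., proving that \emph{every} round $h' \le h-1$ has some delegating agent alive in it — and (ii) non-overcounting, so that the sum of per-delegating-agent bounds actually telescopes to $3(W_1+\cdots+W_p)$ rather than something larger. Part (i) is really where Claim~\ref{cla:sel} and the greedy construction of $\cN^*$ do the work: I would prove by induction on $s$ that $\Phi_{r_1}\cup\cdots\cup\Phi_{r_s} \supseteq [1, f_{r_s}]$, using that at step $s+1$ the construction picks the agent of \emph{maximal} $f_l$ among those active at $f_{r_s}$, and that such an agent exists precisely because $e_h$ (the first unallocated item) forces coverage up to $h-1$. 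Part (ii) I would handle by the disjoint-charge-set device described above, at the cost of slightly weakening each per-agent bound — which is affordable because the strict inequality we need has slack coming from $e_h$ being unallocated. The bookkeeping with the floors $L_r$ and the powers-of-two structure of the canonical instance is routine by comparison and I would relegate it to a line or two.
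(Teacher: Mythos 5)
Your charging scheme — send each round to a delegating agent alive at that round, use that the algorithm gave the item to the cheapest active agent, and sum — is exactly the paper's starting point, and it only gets you to
\[
 sc_{h-1}(\cN) \;\le\; v_{r_1}([1, d_{r_2}]) + v_{r_2}([d_{r_2}+1, d_{r_3}]) + \cdots + v_{r_t}([d_{r_t}+1, h-1]),
\]
which the paper explicitly flags as \emph{too loose} to give $3W_1+\cdots+3W_p$. The paper then tightens it in two ways you do not have: first, it subtracts correction terms $Y_i$ that account for items the algorithm hands to agents in $G_{r_i}\cup\cdots\cup G_{r_{i+1}-1}$ \emph{after} round $d_{r_{i+1}}$, on the grounds that the charging above overestimates their contribution (the receiving agent's value is at most half the delegating agent's value, so the full increment can safely be subtracted). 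Second, it replaces each $v_{r_i}$ by the piecewise-constant \emph{proxy cost} $v^{\star}_{r_i}$ (Claim~\ref{cla:star}), whose step structure is what makes the subsequent $a/b/c$-labeling argument in Claims~\ref{cla:y1} and \ref{cla:yi} go through and actually produce the $3W_1+\cdots+3W_p$ numerology. Your proposal has neither ingredient, and the point where you attempt the per-agent estimate (``the cleanest bound is $v_{r_s}(\Phi_{r_s}) < 3w_{r_s} + w_{r_s} = $ --- no'') is precisely where these ideas are needed; the series of aborted alternatives you list (the $1 - v_{r_s}([1,d_{r_s}])$ bound, the telescoping of $w_{r_s}$'s) do not combine into a proof as written.

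There is also a secondary issue: a term of the form $v_{r_s}(\Phi_{r_s}\cap[1,h-1])$ cannot in general be bounded by $O(W_{r_s})$ or even $O(W_{r_s}+\cdots+W_{r_{s+1}-1})$, because the window can contain on the order of $L_{r_{s+1}} - L_{r_s}$ items of value up to $w_{r_s}$, which can be much larger than the number of agents in those groups. This is exactly why the proxy functions $v^{\star}$ (which decay in steps of size $n_i, n_{i+1}, \ldots$) are introduced: they let you control both the \emph{value} and the \emph{count} of items in each window. Your intuition about where the slack for strictness comes from (item $e_h$ unallocated, so some capacity in $G_1\cup\cdots\cup G_p$ remains) is the right direction, but it becomes available only after the main $\le$ bound has been tightened; it does not rescue the loose estimate.
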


    We will use the valuations of the delegating agents to bound $sc_{h-1}(\cN)$.
    First, we construct the set of delegating agents $\cN^* = \{a_{r_1}, \ldots,a_{r_t}\}\subseteq \cN'$ where $a_{r_t}$ is the last agent who is active in $G_1\cup\cdots \cup G_p$.
    {For example, agent $a_{r_1}$ is active during the allocation of items $[1,d_{r_2}]$.}
    Since the algorithm tries to allocate items to whoever values them the least, for each item $e\in[1,d_{r_2}]$, it is allocated to an agent whose cost of $e$ is at most $v_{r_1}(e)$.
    As a consequence, the social cost incurred by allocating $[1,d_{r_2}]$ to agents is upper bounded by $v_{r_1}([1,d_{r_2}])$. 
    {Similarly, the social cost of the items in $[d_{r_i}+1,d_{r_{i+1}}]$ can be bounded by $v_{r_i}([d_{r_i}+1,d_{r_{i+1}}])$.}
    Therefore, we have the following upper bound:
    \[
     sc_{h-1}(\cN) \leq v_{r_1}([1, d_{r_2}]) + v_{r_2}([d_{r_2}+1, d_{r_3}]) + \cdots + v_{r_t}([d_{r_t}+1, h-1]).
    \]
    Unfortunately, this bound is too loose to prove the claim. Further observations are necessary to tighten it.
    
    Note that, after round $d_{r_{i+1}}$, some agents in $G_{r_i} \cup \cdots \cup G_{r_{i+1}-1}$ are still active.
    For example, as illustrated in Figure \ref{fig:proxies}, agent $a_{r_1}$ is still active in rounds $[d_{r_2}+1,f_{r_1}]$, and will be allocated more items by the algorithm until she eventually becomes inactive.
    Consider an item $e_j$ with $d_{r_{i+1}}+1\leq j$ being allocated to one agent in $G_{r_i} \cup \cdots \cup G_{r_{i+1}-1}$.
    Then, in the above upper bound, the social cost increased by allocating $e_j$ is estimated as $v_{r_{i'}}(e_j)$ with $i'> i$ while it should be $v_{r_i}(e_j)$. 
    Due to the construction of the canonical instance, $v_{r_i}(e_j) \leq \frac{1}{2} v_{r_{i'}}(e_j)$.
    Therefore, we can safely subtract $v_{r_{i'}}(e_j) - v_{r_i}(e_j) \geq v_{r_i}(e_h)$ from the upper bound and maintain the inequality. The same can be applied to 
    each item allocated to the agents  in $G_{r_i} \cup \cdots \cup G_{r_{i+1}-1}$ after round $d_{r_{i+1}}$.
    {In other words, for each agent in $G_{r_i} \cup \cdots \cup G_{r_{i+1}-1}$, the increase in the cost of that agent's bundle between round $d_{r_{i+1}}$ and the moment the agent becomes inactive can be subtracted from the upper bound above.}

{
    For $i=1,\ldots,t-1$, let
    \[
    Y_i = 3(W_{r_i} + W_{r_i+1} + \cdots + W_{r_{i+1}-1}) - sc_{d_{r_{i+1}}}(G_{r_i} \cup \cdots \cup G_{r_{i+1}-1})
    \]
    denote the increment of the total cost of agents in $G_{r_i} \cup \cdots \cup G_{r_{i+1}-1}$ from the end of round $d_{r_{i+1}}$ to the round when all of them become inactive (when their total cost becomes $3(W_{r_i} + W_{r_i+1} + \cdots + W_{r_{i+1}-1})$ by Claim \ref{claim:exact}). 
    When those agents exit, all delegating agents $a_{r_{i'}}$ with $i'>i$ are still in the algorithm.}
    We now have the refined upper bound:
    \begin{equation}
     sc_{h-1}(\cN) \leq v_{r_1}([1, d_{r_2}]) + v_{r_2}([d_{r_2}+1, d_{r_3}]) + \cdots + v_{r_t}([d_{r_t}+1, h-1])- \sum_{i\in [t-1]} Y_i. \label{eq:swh-1:1}
    \end{equation}

{We next introduce the {\em proxy cost function} $v^{\star}_i(\cdot)$ for each delegating agent $a_i$; this function upper-bounds the real cost function of $a_i$ on the intervals used below.}
Formally, for $a_i\in \cN^*$, $v^{\star}_i(\cdot)$ is defined as follows:
\begin{align*}
    v^{\star}_i(e_h)=
    \begin{cases}
     w_i, & h\leq L_i+n_i,\\
     w_{i+1}, & L_i+n_i<h\leq L_i+n_i+n_{i+1},\\
     ~\vdots\\
     w_k, &L_i+n_i+\cdots+n_{k-1} < h \le L_i+n_i+\cdots+n_{k},\\
     0, &h>L_i+n_i+\cdots+n_{k}.
    \end{cases}
\end{align*}

Accordingly, we have the following claim. 

\begin{claim}\label{cla:star}
    {For any cost function $v_i(\cdot)$ and $j>d_i$, we have}
    \[
    v_{i}([{d_{i}+1},j]) \le v^*_{i}([{d_{i}+1},j]).
    \]  
\end{claim}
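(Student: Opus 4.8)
The plan is to reduce the claim to a short case analysis by working with an explicitly structured WMMS-defining partition of $a_i$. Since the instance is canonical, $\WMMS_i=w_i$, so $a_i$ has a WMMS-defining partition, and as $\sum_{p,q}v_i(B_{p,q})=v_i(\cM)=1=\sum_p n_pw_p$ while every $v_i(B_{p,q})\le w_p$, all bundles of such a partition are exactly full: $v_i(B_{p,q})=w_p$. Using the power-of-two structure of a canonical instance, I would moreover take this partition to be \emph{bag-filled}: order the bundle capacities non-increasingly ($n_1$ copies of $w_1$, then $n_2$ of $w_2$, etc.), sweep the items $e_1,e_2,\dots$ in their IDO order, and close the current bundle as soon as the next item would overflow it; a short induction on the items shows that each bundle then closes exactly at its capacity (the running sum in a bundle is always an integer multiple of the current item's value, and so is every bundle capacity), so this is a legal WMMS partition. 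Let $0=q_0<q_1<\cdots<q_k=m$ be its block boundaries, so $q_l$ is the number of items in the first $n_1+\cdots+n_l$ bundles. Then (i) $v_i([1,q_l])=W_1+\cdots+W_l$, (ii) every item after $e_{q_l}$ has value at most $w_{l+1}$, and (iii) since any item of value exceeding $w_i$ fits only in a bundle of capacity $>w_i$, i.e.\ among the first $n_1+\cdots+n_{i-1}$ bundles, the $d_i$ largest items all sit among the first $q_{i-1}$, giving $d_i\le q_{i-1}$ (and $d_i\le L_i$ by Claim~\ref{cla:lr}).

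Two elementary estimates then carry the proof. Write $T_r:=L_i+n_i+\cdots+n_r$ for $r\ge i$ and $T_{i-1}:=L_i$, so that the breakpoints of $v^{\star}_i$ are precisely $T_i,\dots,T_k$ and $v^{\star}_i([1,T_r])=W_1+\cdots+W_r$ while $v^{\star}_i([1,d_i])=d_iw_i$. First, items past $e_{d_i}$ have value $\le w_i$, hence $v_i([1,h])\le v_i([1,d_i])+(h-d_i)w_i$ for all $h\ge d_i$; in particular $v_i([1,d_i])\ge d_iw_i$. Second, taking $h=q_{i-1}$ in this bound and using $v_i([1,q_{i-1}])=L_iw_i$ gives $q_{i-1}\ge L_i-\delta$, where $\delta:=\bigl(v_i([1,d_i])-d_iw_i\bigr)/w_i\ge0$; and since every block is nonempty, $q_{l-1}\ge q_{i-1}+(n_i+\cdots+n_{l-1})\ge T_{l-1}-\delta$ for every $l\ge i$.

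With this in hand I would case on which piece of $v^{\star}_i$ contains $t$, for $d_i<t\le m$. If $t\le T_i$, then $v^{\star}_i([d_i+1,t])=(t-d_i)w_i\ge v_i([d_i+1,t])$ since every item involved has value $\le w_i$. If $T_{l-1}<t\le T_l$ with $l\ge i+1$, then $v^{\star}_i([d_i+1,t])=\bigl(W_1+\cdots+W_{l-1}-d_iw_i\bigr)+(t-T_{l-1})w_l$; when $t\le q_{l-1}$ we get $v_i([d_i+1,t])\le v_i([1,q_{l-1}])-v_i([1,d_i])=W_1+\cdots+W_{l-1}-v_i([1,d_i])\le v^{\star}_i([d_i+1,t])$ (using $t>T_{l-1}$ and $v_i([1,d_i])\ge d_iw_i$), and when $t>q_{l-1}$ the items $e_{q_{l-1}+1},\dots,e_t$ all have value $\le w_l$, so $v_i([d_i+1,t])\le W_1+\cdots+W_{l-1}-v_i([1,d_i])+(t-q_{l-1})w_l$, which is $\le v^{\star}_i([d_i+1,t])$ exactly when $(T_{l-1}-q_{l-1})w_l\le v_i([1,d_i])-d_iw_i$ — and this follows from $q_{l-1}\ge T_{l-1}-\delta$ together with $w_l\le w_i$. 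Finally, if $t>T_k$ (possible only when $m>T_k$) the proxy has saturated: $v^{\star}_i([d_i+1,t])=1-d_iw_i\ge 1-v_i([1,d_i])\ge v_i([d_i+1,t])$. These cases exhaust $d_i<t\le m$, proving the claim.

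The load-bearing step is the bag-filled witness and the bookkeeping that aligns its block boundaries $q_l$ with the proxy breakpoints $T_l$: the $L_i$-shift built into $v^{\star}_i$ is exactly what the estimate $q_{l-1}\ge T_{l-1}-\delta$ needs in order to absorb the capacity consumed by the (possibly large) items $e_1,\dots,e_{d_i}$. Checking that this shift is exactly right — and that the split on pieces of $v^{\star}_i$ above is exhaustive for all $t$ up to $m$ — is the delicate part; once it is in place, each case is a one-line computation, and the power-of-two divisibility is used only to produce the bag-filled partition.
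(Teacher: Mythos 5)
Your proposal is correct, but it takes a genuinely different route from the paper's. The paper argues by contradiction with an \emph{arbitrary} WMMS-defining partition: assuming the inequality first fails at index $s$, the minimality of $s$ forces $v_i(e_s) > v^{\star}_i(e_s) = w_p$, hence $v_i(e_s) \geq 2w_p$ by canonicity, so by IDO all of $e_1,\dots,e_s$ exceed $w_p$ and must be packed into the bundles of groups $G_1,\dots,G_{p-1}$; summing and using the assumed violation then overloads those bundles, contradicting the partition's capacity constraint $\sum_{q<p} W_q$. Your argument is instead a direct case analysis built on an \emph{explicit} greedy ``bag-filled'' WMMS partition, whose block boundaries $q_l$ you align against the breakpoints $T_l$ of the proxy via the slack $\delta$ absorbed by the $L_i$-shift. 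Both proofs lean on the same two pillars (existence of a WMMS-defining partition with $v_i(B_{p,q})=w_p$, and the power-of-two structure forcing strict inequalities to jump by a factor of 2), but the paper extracts what it needs from an unstructured partition and a single well-chosen failure index, whereas you invest in building a canonical witness partition (via a greedy packing that needs its own correctness argument, which you sketch adequately) and then read the claim off its geometry. Your version is longer and more constructive, but also more transparent about \emph{why} the specific thresholds $L_i + n_i + \cdots + n_r$ appearing in $v^{\star}_i$ are the right ones; the paper's version is shorter but leaves that alignment implicit.

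One small wording nit: you write ``hence $v_i([1,h]) \le v_i([1,d_i]) + (h-d_i)w_i$ \dots; in particular $v_i([1,d_i]) \ge d_iw_i$,'' but the latter does not follow from the former --- it follows from IDO and canonicity (items $e_1,\dots,e_{d_i}$ each cost at least $2w_i > w_i$). Since the statement itself is true and you use it correctly, this is only a presentational slip.
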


\begin{proof}
    In the following, we only prove for the case when $i \geq 2$, and the proof for the case when $i=1$ is analogous.
    {Assume for contradiction that $s$ is the smallest item index with $d_i<s\leq j$ such that
    $$\sum_{h\in \{d_i+1,\ldots, s\}}v_i(e_h)>\sum_{h\in \{d_i+1,\ldots, s\}}v^\star_i(e_h).$$}
    Also, since the above inequality does not hold for $s-1$, we have 
    $v_{i}(e_s) > v^\star_{i}(e_s)$.
    Because $s> d_i$, $v_{i}(e_s) \leq w_i$. The above inequalities combined indicate that $s> d_i+n_i$.

    \achange{We first rule out the case $v^\star_i(e_s)=0$. In this case, by the definition of $v^\star_i$, all proxy costs after $s$ are also zero, and hence
    \[
        \sum_{q=1}^{s} v^\star_i(e_q)=\sum_{q=1}^{|\cM|} v^\star_i(e_q).
    \]
    Combining the choice of $s$ with the fact that $\sum_{q=1}^{d_i} v_i(e_q)\geq \sum_{q=1}^{d_i}v^\star_i(e_q)$ gives
    \[
        \sum_{q=1}^{s}v_i(e_q)>\sum_{q=1}^{s}v^\star_i(e_q)=\sum_{q=1}^{|\cM|} v^\star_i(e_q).
    \]
    Since the construction of $v^\star_i$ preserves the total cost, the right-hand side equals $\sum_{q=1}^{|\cM|} v_i(e_q)$, contradicting the nonnegativity of all costs. Therefore $v^\star_i(e_s)>0$. Let $v^\star_{i}(e_s) = w_p$. Since $v_{i}(e_s) > v^\star_{i}(e_s)$, the definition of canonical instance implies that $v_{i}(e_s) \geq 2w_p$.}\acomment{Author approved adding the cumulative zero-case argument while avoiding new notation. Original text: Let $v^\star_{i}(e_s) = w_p$. Since $v_{i}(e_s) > v^\star_{i}(e_s)$, the definition of canonical instance implies that $v_{i}(e_s) \geq 2w_p$. Original unresolved comment asked to keep the minimal-failing-prefix inequality for the case $v_i^\star(e_s)=0$.}
    Let $\B$ be an arbitrary WMMS partition for $a_{i}$ and let $\overline{B} = \bigcup_{q=1}^{p-1}\bigcup_{r=1}^{n_q}B_{q,r}$. Since $v_{i}(e_h) > w_p, \forall h\leq s, h\in \mathbb{N}_+$, each of the first $s$ items needs to fit in $\overline{B}$. Therefore, we have the following.
    \begin{align*}
        v_{i}(\overline{B}) &\geq  \sum_{q=1}^{d_i} v_{i}(e_q) + \sum_{q=d_i+1}^{s} v_{i}(e_q)
       > \sum_{q=1}^{d_i} v^\star_{i}(e_q) + \sum_{q=d_i+1}^{s} v^\star_{i}(e_q)  \\
        &\geq \sum_{q=1}^{s} v^\star_{i}(e_q) \geq \sum_{q=1}^{\achange{L_i}+n_i}w_i + \sum_{q=\achange{L_i}+n_i+1}^{\achange{L_i}+n_i+n_{i+1}}w_{i+1}
        +\cdots+\sum_{q=\achange{L_i}+\sum_{r\in \{n_i,\ldots, n_{p-2}\}}n_r+1}^{\achange{L_i}+\sum_{r\in \{n_i,\ldots, n_{p-1}\}}n_r}w_{p-1}\\
        &\geq (\frac{\sum_{q<i}W_q }{w_i}+n_i)w_i + n_{i+1}\cdot w_{i+1}+\cdots + n_{p-1}\cdot w_{p-1} \\
        &\geq \sum_{q=1}^{p-1}W_q.
    \end{align*}
    The second inequality in the first line requires that $\sum_{q=1}^{d_i} v_{i}(e_q) \geq \sum_{q=1}^{d_i} v^\star_{i}(e_q)$, which is obvious since for each $1\leq h \leq d_i, v_{i}(e_h) > w_i$ while $v^\star_{i}(e_h) = w_i$.
    This leads to a contradiction because $\B$ being a WMMS partition requires that
    \[
        v_{i}(\overline{B})= \sum_{q=1}^{p-1}\sum_{r=1}^{n_q}v_{i}(B_{q,r}) \leq \sum_{q=1}^{p-1}\sum_{r=1}^{n_q}w_q \leq \sum_{q=1}^{p-1}W_q.
     \]
\end{proof}

Combining Inequality \ref{eq:swh-1:1} and Claim \ref{cla:star}, we have

\[
\begin{aligned}
sc_{h-1}(\cN) 
&\leq v^{\star}_{r_1}([1, d_{r_2}]) + v^{\star}_{r_2}([d_{r_2}+1, d_{r_3}]) + \cdots + v^{\star}_{r_t}([d_{r_t}+1, h-1]) - \sum_{i\in [p-1]} Y_i.
\end{aligned}
\]

It remains to show that
\[
v^{\star}_{r_1}([1, d_{r_2}]) + v^{\star}_{r_2}([d_{r_2}+1, d_{r_3}]) + \cdots + v^{\star}_{r_t}([d_{r_t}+1, h-1]) - \sum_{i\in [p-1]} Y_i
< 3W_1 + \cdots + 3W_p.
\]
\acomment{Follow-up author verification: The strict displayed inequality is now stated, but the proof still needs the bridge from Claims~\ref{cla:y1} and~\ref{cla:yi} to the terminal expression ending at $h-1$. Please indicate where the strict slack enters for this truncated endpoint, rather than only for the full proxy prefix ending at the next $L_{r_{i+1}}$ boundary.}


We begin by assigning each item a label from $\{\gamma, \lambda, \kappa\}$ and associating it with a value determined by one of the valuations $\{v^{\star}_{r_1},\,\ldots,\,v^{\star}_{r_t}\}$.

Define function $v(e)_u = \max\{v(e), u\}$ and $v(S)_u = \sum_{e\in S}v(e)_u$. 
To facilitate bounding the value of each $Y_i$, we employ $v^{\star}_{r_i}(\cdot)_{w_{r_{i+1}}}$ to estimate the value of the set $[{d_{r_i}+1},{d_{r_{i+1}}}]$.

\begin{claim}\label{cla:y1}
For the valuation $v_{r_1}^{\star}(\cdot)$ and the set $[{1},{L_{r_2}}]$, we have
\[
v^{\star}_{r_1}([1, L_{r_2}]) \leq v^{\star}_{r_1}([1, L_{r_2}])_{w_{r_2}} < 3W_1 + \cdots + 3W_{r_2-1}.
\]
\end{claim}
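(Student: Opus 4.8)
The first inequality of the claim is immediate, since $v(S)_u=\sum_{e\in S}\max\{v(e),u\}\ge\sum_{e\in S}v(e)=v(S)$; all the work is in the bound $v^{\star}_{r_1}([1,L_{r_2}])_{w_{r_2}}\le 3(W_1+\cdots+W_{r_2-1})$, and the plan is to read it off from the explicit staircase shape of $v^{\star}_{r_1}(\cdot)$. I will use three bookkeeping facts. First, the initial block of $v^{\star}_{r_1}$ --- the items $1,\dots,L_{r_1}+n_{r_1}$, each of value $w_{r_1}$ --- carries total value $(L_{r_1}+n_{r_1})w_{r_1}=L_{r_1}w_{r_1}+n_{r_1}w_{r_1}=(W_1+\cdots+W_{r_1-1})+W_{r_1}$, using $L_{r_1}w_{r_1}=W_1+\cdots+W_{r_1-1}$. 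Second, for each $j>r_1$ the block of value $w_j$ has length $n_j$ and total value $n_jw_j=W_j$. Third, $L_{r_2}w_{r_2}=W_1+\cdots+W_{r_2-1}$ by definition of $L_{r_2}$.

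Since $a_{r_1}$ and $a_{r_2}$ have different exit times, $r_1\neq r_2$. If $r_1>r_2$, then $L_{r_2}\le L_{r_1}$, so $[1,L_{r_2}]$ sits inside the initial block of $v^{\star}_{r_1}$, on which $v^{\star}_{r_1}\equiv w_{r_1}<w_{r_2}$; hence $v^{\star}_{r_1}([1,L_{r_2}])_{w_{r_2}}=L_{r_2}w_{r_2}=W_1+\cdots+W_{r_2-1}$ and the claim holds. So assume $r_1<r_2$, and let $E=L_{r_1}+n_{r_1}+n_{r_1+1}+\cdots+n_{r_2-1}$ be the index at which the block of value $w_{r_2-1}$ ends. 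The key preliminary step is to check $E\le L_{r_2}$: substituting $L_{r_1}=\sum_{j<r_1}W_j/w_{r_1}$, $n_j=W_j/w_j$, and $L_{r_2}=\sum_{j<r_2}W_j/w_{r_2}$, this reduces to a term-by-term inequality --- for $j<r_1$ one compares $W_j/w_{r_1}$ with $W_j/w_{r_2}$, for $r_1\le j\le r_2-1$ one compares $W_j/w_j$ with $W_j/w_{r_2}$ --- and the $L_{r_2}$ side always wins because $w_{r_2}$ is the smallest weight involved ($w_{r_2}\le w_{r_1}\le w_j$ for $j<r_1$, and $w_{r_2}\le w_j$ for $j\le r_2-1$).

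With $E\le L_{r_2}$ in hand, I would split $[1,L_{r_2}]=[1,E]\cup[E+1,L_{r_2}]$. The prefix $[1,E]$ is exactly the union of the blocks of values $w_{r_1},\dots,w_{r_2-1}$, on all of which $v^{\star}_{r_1}$ is at least $w_{r_2}$, so $(\cdot)_{w_{r_2}}$ leaves these items unchanged and, by the first two facts above, $v^{\star}_{r_1}([1,E])_{w_{r_2}}=v^{\star}_{r_1}([1,E])=W_1+\cdots+W_{r_2-1}$. Every item of $[E+1,L_{r_2}]$ lies in a block of value at most $w_{r_2}$ (the block $w_{r_2}$, a later block, or the $0$-block), so $(\cdot)_{w_{r_2}}$ raises each of them to exactly $w_{r_2}$, giving $v^{\star}_{r_1}([E+1,L_{r_2}])_{w_{r_2}}=(L_{r_2}-E)w_{r_2}$. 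Adding the two pieces and using $L_{r_2}w_{r_2}=W_1+\cdots+W_{r_2-1}$ together with $Ew_{r_2}\ge 0$,
\[
v^{\star}_{r_1}([1,L_{r_2}])_{w_{r_2}}=(W_1+\cdots+W_{r_2-1})+(L_{r_2}-E)w_{r_2}=2(W_1+\cdots+W_{r_2-1})-Ew_{r_2}\le 2(W_1+\cdots+W_{r_2-1}),
\]
which is even stronger than the asserted $3(W_1+\cdots+W_{r_2-1})$. The main thing to get right is the choice of split point: cutting $[1,L_{r_2}]$ at the end of the $w_{r_2-1}$-block is what makes both pieces collapse, and it rests on the one nontrivial inequality $E\le L_{r_2}$; the remaining manipulations are routine accounting with the shape of $v^{\star}_{r_1}$ and the operator $(\cdot)_{w_{r_2}}$.
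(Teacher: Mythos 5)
Your proof is correct and takes a genuinely different (and cleaner) route from the paper's. The paper proves Claim~\ref{cla:y1} via a labeling and potential-tracking scheme: it marks the $n_i$ items $[L_i+1, L_i+n_i]$ with label $a$ for each $i < r_2$, marks the rest with label $b$, and runs an inductive argument over slack variables $\bar X_i = W_1 + \cdots + W_{i-1} - X_i$ to bound the $b$-labeled mass; adding the $a$-contribution (at most $\sum_{i<r_2} W_i$), the $b$-contribution, and the tail $L_{r_2} w_{r_2}$ produces the factor of~$3$. Your argument instead exploits the explicit staircase shape of $v^\star_{r_1}$ directly: after establishing the (correct, and load-bearing) inequality $E := L_{r_1} + \sum_{j=r_1}^{r_2-1} n_j \le L_{r_2}$ by a term-by-term comparison, you split $[1,L_{r_2}]$ at $E$, observe that on $[1,E]$ every block value exceeds $w_{r_2}$ so $(\cdot)_{w_{r_2}}$ is the identity and the sum is exactly $\sum_{j<r_2} W_j$, and that on $[E+1, L_{r_2}]$ every value is raised to exactly $w_{r_2}$, contributing $(L_{r_2}-E)w_{r_2} \le L_{r_2} w_{r_2} = \sum_{j<r_2}W_j$. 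This yields the sharper bound $2\sum_{j<r_2}W_j$ and a substantially shorter proof. The tradeoff is that the paper's looser $a/b$-labeling and slack bookkeeping are carried forward into Claim~\ref{cla:yi}, where later delegating agents introduce $c$-labeled items that must be charged against the $Y_j$ correction terms; the extra $\sum W_i$ of headroom is the price of a uniform accounting scheme that survives that induction, whereas your tighter bound is specific to the base case. (You also correctly dispose of the degenerate possibility $r_1 > r_2$, which the paper's argument implicitly rules out.)
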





\begin{proof}

{The first inequality is immediate from the definition of $v(\cdot)_u$, so we focus on the second inequality.}

We begin by labeling and assigning costs to the items in $S_1 = [{1}, {L_{r_2}}]$ according to the cost function $v^{\star}_{r_1}(\cdot)$. 
In this proof, each item $e \in [{1}, {L_{r_2}}]$ is assigned cost $v^{\star}_{r_1}(e)$ and one label from $\{\gamma, \lambda, \kappa\}$.
Any item with index greater than $d_{r_2}$ will later be relabeled and revalued using the remaining cost functions $\{v^{\star}_{r_2}(\cdot), \ldots, v^{\star}_{r_t}(\cdot)\}$ in the proof of Claim \ref{cla:yi}. 

For each $i \in \{r_1, \ldots, r_2-1\}$, the items $[{L_i+1}, {L_i+n_i}]$ are labeled $\gamma$ and the set is denoted as $S^{i}_\gamma$.
Since $L_{i+1}+1=\frac{\sum_{j<i+1}W_j}{w_{i+1}}+1> 2(L_{i}+n_i)$, we have $[{L_i+1}, {L_i+n_i}]\cap [{L_{i+1}+1}, L_{i+1}+n_{i+1}]=\emptyset$  and no item is labeled twice.
By Claim \ref{cla:lr}, each item in the set $[{L_i+1}, {L_i+n_i}]$ has a cost at most $w_i$, and therefore the total cost of $[{L_i+1}, {L_i+n_i}]$ does not exceed $W_i$.
Hence, the total cost of the set of items with label $\gamma$ is less than $\sum_{j<r_2} W_j$.

The remaining items in $S_1$ are labeled $\lambda$. 
For $i \in \{r_1+1, \ldots, r_2-1\}$, let $S^i_\lambda$ denote the set of items in $S_1$ with cost $w_i$ and label $\lambda$, and let $X_i$ be the total cost of $S^i_\lambda$.
Let $\bar{X}_i$ be the difference between $X_i$ and $\sum_{j<i}W_j$ for any $i>1$.

{First, consider the total cost of the set $S^2_\lambda$.}
By Claim \ref{cla:lr}, the set $S^2_\lambda$ is a proper subset of $[1, L_2]$, so $|S^2_\lambda| < L_2$ and thus $X_2=|S^2_\lambda|\cdot w_2 < W_1$.
{Since $|S^2_\lambda|$ is also no larger than the number of agents in $G_2$ based on the structure of $v_{r_1}^\star(\cdot)$, it also follows that $X_2 \leq W_2$.}


Similarly, for any $i\geq 2$, we have $|S^{i+1}_\lambda| < L_{i+1}$ by Claim \ref{cla:lr}.
If $X_i < W_i$ and $\bar{X}_i = W_1 + \cdots + W_{i-1} - X_i \geq 0$, then for $S^{i+1}_\lambda$, 
the total cost $X_{i+1}$ is less than $L_{i+1}\cdot w_{i+1}=W_1 + W_2 + \cdots + W_i = \bar{X}_i + X_i + W_i \leq \bar{X}_i + 2W_i$. 
In addition, since $|S^{i+1}_\lambda| \leq n_{i+1}$, we have $X_{i+1} \leq W_{i+1}$.
{By induction, we have $X_{i+1}\leq \bar{X}_{i}+2W_i$ for any $2\leq i< r_2-1$.}

{For the remaining items in $S_1$, each item has cost less than $w_{r_2}$, and the number of such items is less than $L_{r_2}$.} Therefore, the total cost of these items is less than
\[
L_{r_2} \cdot w_{r_2} = W_1 + \cdots + W_{r_2-1} \leq \bar{X}_{r_2-1} + 2W_{r_2-1}.
\]

In summary, we have
\[
\begin{aligned}
    v^{\star}_{r_1}([1, L_{r_2}])_{w_{r_2}}
    < & v^{\star}_{r_1}\left(\bigcup_{i=r_1}^{r_2-1} S^i_\gamma\right)
        + v^{\star}_{r_1}\left(\bigcup_{i=r_1+1}^{r_2-1} S^i_\lambda\right)
        + L_{r_2} \cdot w_{r_2} \\
    < & \sum_{i=1}^{r_2-1} W_i + \sum_{i=2}^{r_2-1} X_i + L_{r_2} \cdot w_{r_2} \\
    \leq& \sum_{i=1}^{r_2-1} W_i + (W_1 - \bar{X}_2) + (W_1 + W_2 - \bar{X}_3) + \cdots \\
    &\quad\quad\quad\quad\quad\quad\quad\quad + (W_1 + \cdots + W_{r_2-2} - \bar{X}_{r_2-1}) + \sum_{i=1}^{r_2-1} W_i \\
    \leq& \sum_{i=1}^{r_2-1} W_i + (W_1 - \bar{X}_2) + (\bar{X}_2+2W_2 - \bar{X}_3) + \cdots  \\
    &\quad\quad\quad\quad\quad\quad\quad\quad  + (\bar{X}_{r_2-2} + 2W_{r_2-2} - \bar{X}_{r_2-1}) + \bar{X}_{r_2-1}+2W_{r_2-1}  \\
    \leq& \sum_{i=1}^{r_2-1} W_i + W_1 + \sum_{i=2}^{r_2-1} 2W_i \leq \sum_{i=1}^{r_2-1} 3W_i.
\end{aligned}
\]


\end{proof}


\begin{claim}\label{cla:yi}
    For the valuation $v_i^{\star}(\cdot)$ and items $[{d_{r_i}},{L_{r_{i+1}}}]$ with $i \in \{2, \ldots, r_t-1\}$, 
    if for any $i' < i$,
    \begin{equation}\label{eq:i'}
    \begin{aligned}
    &\sum_{j=1}^{i'-1} v^{\star}_{r_j}([d_{r_j}+1, d_{r_{j+1}}]) + v^{\star}_{r_{i'}}([d_{r_{i'}}+1, L_{r_{i'+1}}]) - \sum_{j=1}^{i'-1} Y_j \\
    \leq\; &
    \sum_{j=1}^{i'-1} v^{\star}_{r_j}([d_{r_j}+1, d_{r_{j+1}}])_{w_{r_{i'+1}}} + v^{\star}_{r_{i'}}([d_{r_{i'}}+1, L_{r_{i'+1}}])_{w_{r_{i'+1}}} - \sum_{j=1}^{i'-1} Y_j \\
    <\; &
    3W_1 + \cdots + 3W_{r_{i'+1}-1},
    \end{aligned}
    \end{equation}
    then we have 
    \[
    \begin{aligned}
    &\sum_{j=1}^{i-1} v^{\star}_{r_j}([d_{r_j}+1, d_{r_{j+1}}]) + v^{\star}_{r_i}([d_{r_i}+1, L_{r_{i+1}}]) - \sum_{j=1}^{i-1} Y_j \\
    \leq\; &
    \sum_{j=1}^{i-1} v^{\star}_{r_j}([d_{r_j}+1, d_{r_{j+1}}])_{w_{r_{i+1}}} + v^{\star}_{r_i}([d_{r_i}+1, L_{r_{i+1}}])_{w_{r_{i+1}}} - \sum_{j=1}^{i-1} Y_j \\
    <\; &
    3W_1 + \cdots + 3W_{r_{i+1}-1}.
    \end{aligned}
    \]
\end{claim}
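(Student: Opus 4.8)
The plan is to prove the two inequalities by induction on $i$, treating them as the running invariant of the global item-labelling process begun in the proof of Claim~\ref{cla:y1}; that claim, together with the estimate $Y_1\ge Z_1^1+Z_1^2$ recorded immediately after it, is the base case $i=1$. The first of the two inequalities requires no induction: since $v(e)_u=\max\{v(e),u\}\ge v(e)$ pointwise and each $v^\star_{r_j}(\cdot)$ is additive, every term of the left-hand sum is dominated by its clamped counterpart on the right, and the subtracted quantity $\sum_j Y_j$ is identical on both sides. So the content of the claim is the second inequality, which is what the induction establishes.

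For the inductive step I would continue the labelling past index $d_{r_i}$, exactly in the style of Claim~\ref{cla:y1}. The items in $[d_{r_i}+1,L_{r_{i+1}}]$ that have not yet received a final label are (re)valued by $v^\star_{r_i}(\cdot)$; Claim~\ref{cla:star} justifies replacing the true cost $v_{r_i}(\cdot)$ by $v^\star_{r_i}(\cdot)$ on windows starting at $d_{r_i}+1$, and Claim~\ref{cla:sel} (the ordering $d_{r_1}\le d_{r_2}\le\cdots$ together with the absence of triple overlaps) ensures these windows interleave in the order the relabelling requires. Items falling in the diagonal windows $[L_j+1,L_j+n_j]$ receive label $a$ and contribute at most $\sum_j W_j$; the remaining high-value items at each weight level $w_j$ receive label $b$ and are controlled by the two count bounds $|S_b^j|\le L_j$ and $|S_b^j|\le n_j$, which rest on the staircase shape of $v^\star_{r_i}$ and on $d_{r_i}\le L_{r_i}$ (Claim~\ref{cla:lr}) and telescope through running deficits $\bar X_j$ just as before; the low-value items (value below the clamp threshold) receive label $c$. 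Feeding the invariant \eqref{eq:i'} at $i'=i-1$ --- after re-clamping its terms at the smaller threshold $w_{r_{i+1}}$ used at step $i$, which only decreases them --- carries the bound $3W_1+\cdots+3W_{r_{i-1}-1}$ forward, so the whole claim reduces to the single local inequality
\[
v^\star_{r_i}([d_{r_i}+1,L_{r_{i+1}}])_{w_{r_{i+1}}}\;\le\;3\bigl(W_{r_{i-1}}+\cdots+W_{r_i-1}\bigr)+v^\star_{r_{i-1}}([d_{r_i}+1,L_{r_i}])_{w_{r_{i+1}}}+Y_{i-1},
\]
which is then proved by bounding the left side through the $a$/$b$/$c$ labels above: the $a$- and $b$-contributions of the new items amount to at most $3(W_{r_{i-1}}+\cdots+W_{r_i-1})$ once the deficits cancel, and the under-count of the $c$-labelled items, together with the mass of the ``recycled'' items $[d_{r_i}+1,L_{r_i}]$, is exactly what the estimate $Y_{i-1}\ge Z_{i-1}^1+Z_{i-1}^2$ (the analogue of the base case, with $Z_{i-1}^1=(L_{r_i}-d_{r_i})\,w_{r_{i+1}}$) absorbs.

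The main obstacle is the bookkeeping at the seam between $a_{r_{i-1}}$ and $a_{r_i}$: the items $[d_{r_i}+1,L_{r_i}]$ are charged once inside the invariant \eqref{eq:i'} and are about to be charged a second time under $v^\star_{r_i}$, so one must verify that the single quantity $Y_{i-1}$ simultaneously covers this double count and the deficit of the $c$-labelled items, all while keeping every running deficit $\bar X_j$ non-negative so that the telescoping remains valid. Establishing the count bounds $|S_b^j|\le L_j$ and $|S_b^j|\le n_j$ for the shifted block and then chaining the resulting estimate cleanly against \eqref{eq:i'} is the delicate part; everything else repeats the accounting of Claim~\ref{cla:y1}.
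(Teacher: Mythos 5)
\noindent Your decomposition of the claim is right at the level of structure: the first inequality is indeed trivial from $v(e)_u\ge v(e)$, and the second is the real content. But the reduction of the second inequality to the ``single local inequality'' does not work, and this is where the proposal breaks down, not merely where it becomes delicate.

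\smallskip

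You subtract the hypothesis \eqref{eq:i'} at $i'=i-1$ (re-clamped at the smaller threshold) from the target inequality at $i$ and conclude that it suffices to show
\[
v^\star_{r_i}\bigl([d_{r_i}+1,L_{r_{i+1}}]\bigr)\;\le\;3\bigl(W_{r_{i-1}}+\cdots+W_{r_i-1}\bigr)+v^\star_{r_{i-1}}\bigl([d_{r_i}+1,L_{r_i}]\bigr)+Y_{i-1}.
\]
The arithmetic of the subtraction is fine, but the logic is not: \eqref{eq:i'} is a one-sided \emph{upper} bound, and subtracting an upper bound of a left-hand side from the target left-hand side does not yield a valid implication. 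When the quantity bounded by \eqref{eq:i'} is strictly below $3W_1+\cdots+3W_{r_{i-1}-1}$ --- which is exactly when the running deficits $\bar X_j$ are large --- the proposed local inequality would have to absorb the entire gap, and it cannot: the right-hand side of the local inequality has no access to those deficits. One can already see the mismatch in magnitudes: applying the Claim~\ref{cla:y1}-style bound to $v^\star_{r_i}([1,L_{r_{i+1}}])$ gives roughly $3(W_1+\cdots+W_{r_{i+1}-1})$, and subtracting $v^\star_{r_i}([1,d_{r_i}])=d_{r_i}w_{r_i}$ removes far less than $3(W_1+\cdots+W_{r_{i-1}-1})$, since $d_{r_i}w_{r_i}\le L_{r_i}w_{r_i}$ is on the order of $W_1+\cdots+W_{r_i-1}$, not three times that. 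The terms $v^\star_{r_{i-1}}([d_{r_i}+1,L_{r_i}])$ and $Y_{i-1}$ are far too small to close this gap on their own.

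\smallskip

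The paper avoids this trap entirely: it does \emph{not} feed \eqref{eq:i'} back in as a quantitative inductive hypothesis. Inequality \eqref{eq:i'} is used only for the qualitative structural fact that the agents up to group $r_j$ are not all exhausted by round $L_{r_j}$, hence $d_{r_{j+1}}>L_{r_j}$, so the blocks $S_j=[d_{r_j}+1,L_{r_{j+1}}]$ interleave in the right order. The actual numeric bound is re-derived globally: all items in $[1,L_{r_{i+1}}]$ are labelled $a$, $b$, or $c$ in one pass, the $b$-items are controlled by a telescoping that runs over \emph{all} groups up to $r_i$ (so the deficits $\bar X_s$ never need to be ``exported'' across an induction boundary), and the total value of $S_c$ is bounded by $\sum_{j\in[i-1]}Y_j$ via an explicit matching of each $c$-labelled item in $S_c^2$ to an item in some $S_{j,c}^{2,+}$, using the geometric fact $w_{j+1}-w_s\ge(s-(j+1))w_s$ for powers of $1/2$. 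Your proposal replaces this matching with the single estimate $Y_{i-1}\ge Z_{i-1}^1+Z_{i-1}^2$, which accounts only for the $c$-items created at the seam between $a_{r_{i-1}}$ and $a_{r_i}$, not for those created at all earlier seams. Without the full telescoping and the full matching, the argument does not close; and once you add them back, you are no longer doing a step-to-step reduction --- you have reconstructed the paper's global accounting.

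\smallskip

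In short: the proposal is not a valid alternative route. The inductive ``subtract the previous bound'' step is unsound because the hypothesis is a loose upper bound whose slack is load-bearing, and the bookkeeping you defer to the final sentence (carrying the $\bar X_j$, matching \emph{all} the $c$-labelled items) is not a detail --- it is the entire content of the proof, and doing it forces the global labelling argument the paper uses.
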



    


We prove Claim \ref{cla:yi} in Appendix~\ref{sec:appendix:3WMMS}.

{By Claims~\ref{cla:y1} and~\ref{cla:yi}, together with induction, we obtain}
\[
v^{\star}_{r_1}([1, d_{r_2}]) + v^{\star}_{r_2}([d_{r_2}+1, d_{r_3}]) + \cdots + v^{\star}_{r_t}([d_{r_t}+1, h-1]) - \sum_{i\in [t-1]} Y_i
< 3W_1 + \cdots + 3W_p,
\]
which proves Lemma~\ref{lem:allocated}.

\medskip

Given Claim \ref{claim:exact} and Lemma \ref{lem:allocated}, Theorem~\ref{the:main} is proved.
{Combining this with Theorem~\ref{the:can}, we obtain a $12$-WMMS algorithm for any fair-allocation instance.}

\section{Lower Bounds}
\label{sec:lower}

In this section, we consider two types of lower bounds: the worst-case lower bound and lower bounds for {an arbitrary fixed number of agents}.

\subsection{The Worst-case Lower Bound}

We first complement our previous algorithmic results by proving that for any $0<\epsilon<1$, there is an instance for which no allocation is better than $(2-\epsilon)$-WMMS.

\begin{theorem}\label{the:lb}
    For any $0<\epsilon<1$, there is an instance for which no allocation is better than $(2-\epsilon)$-WMMS.
\end{theorem}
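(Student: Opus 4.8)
The plan is to exhibit an explicit family of instances, parameterized by a growing integer $n$ (the number of agents), for which the best achievable WMMS approximation tends to $2$. The natural construction generalizes the single-chore intuition from Example~\ref{example:1}: the instance should have one ``heavy'' agent with large weight and many agents with geometrically decreasing small weights, together with a small set of chores whose values are tuned so that the WMMS-defining partitions are very skewed. Concretely, I would try weights of the form $w_i \propto 2^{-i}$ (so $w_1$ dominates and the tail telescopes) and a single chore of unit cost (or a few chores), valued identically by everyone. With a single chore, $\WMMS_i = w_i \cdot \min_j v_i(A_j)/w_j$; allocating the chore to agent $a_j$ gives ratio $v_i(e)/w_j$, minimized by the largest weight $w_1$, so $\WMMS_i = w_i \cdot v_i(e)/w_1 = w_i/w_1$ after normalizing $v_i(e)=1$. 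In any actual allocation exactly one agent, say $a_j$, gets the chore and incurs cost $1 = (w_1/w_j)\cdot \WMMS_j$, so the approximation ratio of that allocation is at least $w_1/w_j \geq w_1/w_1 = 1$ — which is not enough. So a single identical chore is too weak; I need a construction where \emph{every} agent's WMMS is forced to be a small fraction of their cost for any bundle they could realistically receive.

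The key step is to design the item values so that there are just slightly too few ``cheap-for-everyone'' slots. The mechanism I would aim for: have $n$ chores and $n$ agents, with agent weights $w_i = 2^{-i}$ (normalized), such that each agent $a_i$'s WMMS-defining partition packs many items onto the heavy agents and leaves only tiny leftovers, making $\WMMS_i$ roughly $w_i$ times the per-unit scale, while any \emph{feasible} allocation (an $n$-partition) is forced, by a pigeonhole/counting argument, to give some agent a bundle costing close to $2\WMMS_i$ for her. The "2" should emerge because the geometric weights satisfy $w_{i+1}+w_{i+2}+\cdots = w_{i+1}\cdot 2$ minus a vanishing correction, i.e. the tail weight equals twice the largest tail term in the limit — this is the arithmetic that should convert into the factor $2-\epsilon$. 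I would set item values $v_i(e_h)$ carefully (likely with a small $\epsilon$-perturbation so that ties break in the adversary's favor and the WMMS partition is unique and maximally unbalanced), then verify: (i) compute $\WMMS_i$ for each $i$ by exhibiting the optimal partition and arguing no partition does better; (ii) show that for \emph{any} allocation $\A$, there is an agent $a_i$ with $v_i(A_i) \geq (2-\epsilon)\WMMS_i$, by a case analysis on who receives the ``large'' chores.

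The main obstacle I anticipate is step (ii): ruling out \emph{all} allocations, not just the "obvious" ones. The WMMS lower bound has to beat the trivial bound, so the combinatorial structure must be rich enough that clever reshuffling cannot avoid the factor-$2$ loss, yet simple enough that $\WMMS_i$ is exactly computable. I expect to need an adversarial argument: suppose some allocation is $(2-\epsilon)$-WMMS; track which agent holds each of the high-value chores; use the weight ratios $w_i/w_{i+1} = 2$ to show a cascading contradiction — if agent $a_i$ is not overloaded, the chore must pass to someone lighter, but the lightest agents have WMMS so small that receiving even one more chore blows past $2\WMMS$. The $\epsilon$ slack presumably comes from using $n$ large enough that $\sum_{j>i} w_j / w_{i+1} \to 2$, or from a perturbation term $\delta(\epsilon)$ in the item values; I would choose whichever makes the bookkeeping cleanest. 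A secondary concern is confirming the instance can be taken IDO (or noting, as the paragraph before the theorem does, that IDO is not required here), and double-checking the normalization conventions ($\sum w_i = 1$ vs. $w_i < 1$, $v_i(\cM)=1$) so the stated ratio is genuinely $2-\epsilon$ and not off by the normalization.
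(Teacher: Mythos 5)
Your high-level intuition is pointed in the right direction: geometric weight decay, a cascading ``pass it to the lighter agent'' argument, and a factor of~$2$ emerging from tail sums of the form $\sum_{j>i} w_j$ relative to $w_{i+1}$. You also correctly noticed that a single identical chore only forces a ratio of~$1$, and that the paper does not restrict to canonical or IDO instances here. However, what you have written is a plan, not a proof, and two of the placeholders you leave open (``set item values carefully,'' ``I expect to need an adversarial argument'') are precisely where the real work lies, and where your sketch as stated would not close.

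The critical structural idea you are missing is that the paper does \emph{not} use one agent per weight level. It partitions the agents into $k$ groups $G_1,\dots,G_k$, where group $G_i$ contains $n_i=\Delta^{i-1}$ agents all of the same weight $w_i$, with $\Delta=2^k$ and $k>4/\epsilon^2$. The group sizes grow doubly exponentially precisely because the contradiction is obtained by bounding the \emph{average} cost-to-weight ratio over an entire group, $\alpha_i\approx\mathrm{cost}(G_i)/W_i$; this averaging only yields a clean bound when the number of items $m_i=\tfrac{3}{2}n_i$ dwarfs the carry-over from earlier groups (the inequality $m_1+\cdots+m_{i-1}<\tfrac19 m_i$). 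With one agent per level, as in your $w_i=2^{-i}$ proposal, a single agent can quietly absorb up to $2\WMMS_i$ before becoming ``overloaded,'' and the cascading argument leaks too much per level to accumulate a contradiction. Relatedly, the item values are not a small perturbation: each $\cM_i$ is split into $\cM_i^1$ (value $w_i$ for lighter agents, $2w_i$ for $G_i$) and $\cM_i^2$ (value $\tfrac12 w_i$ versus $w_i$), with special sets $B_i,T_i,T_i'$ arranged so that $\WMMS_i=w_i$ exactly and, crucially, so that any item in $B_i$ costs $2w_i$ to agents in $G_i$ and hence cannot be assigned to them at all under a $(2-\epsilon)$-approximation. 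That forced exclusion (Observation~1) is what starts the cascade.

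The argument you gesture at as a ``case analysis on who receives the large chores'' is, in the paper, a continuous accounting argument: the allocation is first relaxed to a fractional one, two local swap optimizations are applied, and then a single scalar $\beta_i$ (the fraction of $\cM_{i+1}^2$ handed down to $G_i$) is tracked. This yields a chain of inequalities $\alpha_i\ge 2-\tfrac{4}{\Delta}+\beta_i-\beta_{i-1}$ for the middle groups, with $\alpha_1\ge\tfrac32-o(1)+\tfrac12\beta_1$ and $\alpha_k\ge 1-\beta_{k-1}$; telescoping over $k-2$ inequalities forces $\max_i\alpha_i\ge 2-\tfrac{4}{\Delta}-\tfrac{1}{k-2}>2-\epsilon$. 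So the $\epsilon$ slack comes from finitely many groups, not from a perturbation of item values. To turn your sketch into a proof you would need to (i) commit to a concrete instance with grouped identical agents, (ii) verify $\WMMS_i=w_i$ by exhibiting the WMMS-defining partition, and (iii) replace the pigeonhole heuristic with a mass-flow accounting of how much of each $\cM_{i+1}^2$ is pushed upward, since it is the continuity of that quantity that makes the bound tight.
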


{Due to space limitations}, in the remainder of this section, we only present the construction of the instance $\overline\cI$.
We formally prove the theorem by contradiction in Appendix \ref{sec:appendix:lb}.
Roughly speaking, assuming there exists an allocation $\A$ with an approximation ratio better than $(2-\epsilon)$, we transform $\A$ into a {\em fractional} allocation $\X$ with additional properties and no agent has higher cost.
We show that even with this relaxation, $\X$ cannot be better than $(2-\epsilon)$-WMMS, contradicting the assumption.

\medskip



Given any $0<\epsilon<1$, let $k > \frac{4}{\epsilon^2}$ be a sufficiently large integer and $\Delta=2^k$. 
We construct the following instance $\overline\cI=(\cN,\cM,\fw,\fv)$, where $\cN$ and $\cM$ are respectively partitioned into $k$ disjoint groups, $\cN=G_1\cup \cdots \cup G_k$ and $\cM= \cM_1\cup \cdots \cup \cM_k$.
Denote by $n_i$ and $m_i$ the numbers of agents and items in {$G_i$} and $\cM_i$, respectively.

For $1 \le i \le k$, let $n_i=\Delta^{i-1}$, and $m_1= 1$ and $m_i= \frac{3}{2}\cdot n_i = \frac{3}{2} \cdot \Delta^{i-1}$ for $2 \le i \le k$.
Thus, we have, for $2\le i \le k-1$, 
\begin{align}
    \label{eq:lower:instance:m}
    m_1 + \cdots + m_{i-1} \le \frac{3}{2} \cdot \frac{\Delta^{i-1}}{\Delta - 1} < 2 \cdot \Delta^{i-2} \le \frac{1}{6} \cdot \Delta^{i-1}  = \frac{1}{6}\cdot n_i = \frac{1}{9} \cdot m_i.
\end{align}
Agents in the same group have the same weight, and let $w_1 = 1$ and $w_i = \frac{1}{n_i}\cdot 2^{i-2}$ for every agent in group $G_i$ with $i \ge 2$.
Thus, the total weight of agents in $G_i$ is $W_i= n_i \cdot w_i = 2^{i-2}$ for $i \ge 2$.
The agents in the same group have the same valuation function. 
For simplicity, we use $v_i(\cdot)$ and $\WMMS_i$ to represent the valuation function and WMMS of every agent in $G_i$.

\begin{figure}[h]
        \centering
        \includegraphics[width=0.9\linewidth]{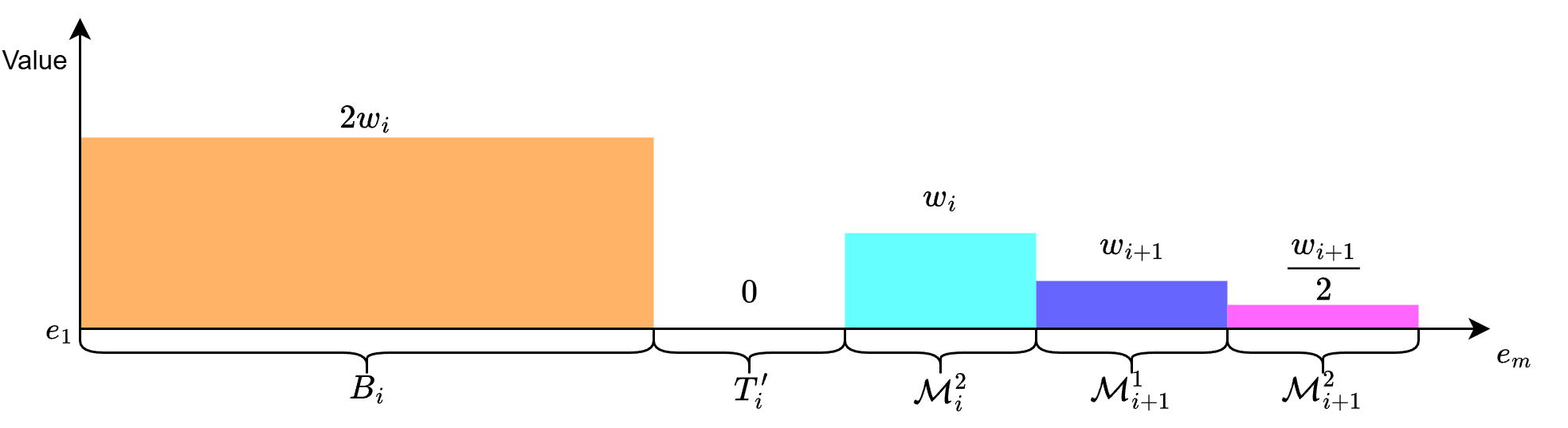}
        \caption{An illustration of $v_i(\cdot)$.}
        \label{fig:enter-label}
\end{figure}
    
Next, we design the valuations, with an illustration of $v_i(\cdot)$ in Figure \ref{fig:enter-label}.
\begin{itemize}
    \item Let $v_1(e) = 1$ for the only item $e \in \cM_1$.

    \item For any $1\le j < i$ and $i\ge 2$, (arbitrarily) partition $\cM_i$ into $\cM_i= \cM_i^1 \cup \cM_i^2$ such that $|\cM_i^1| = \frac{1}{3} m_i = \frac{1}{2}n_i$ and $|\cM_i^2| = \frac{2}{3} m_i = n_i$.
    Let 
\begin{align*}
        v_j(e) = \begin{cases}
        \frac{2^{i-2}}{n_i} = w_i, & \text{ for any $e\in \cM_i^1$} \\
        \frac{2^{i-2}}{2n_i} = \frac{1}{2}w_i, & \text{ for any $e\in \cM_i^2$}.
\end{cases}
\end{align*}
\item For $i>1$,
let $T_i$ be an arbitrary subset of $\cM_i^1$ with $|T_i| = \frac{1}{3} m_i - (m_1+\cdots+m_{i-1})$ and let $T_i' = \cM_i^1 \setminus T_i$. 
Note that $T_i$ and $T'_i$ are well defined because of Inequality \ref{eq:lower:instance:m}.
Denote by $B_i = \cM_1\cup \cdots \cup \cM_{i-1} \cup T_i$ for $i> 1$ and $B_1 = \emptyset$. 
Note that 
    \[
    |B_i| = \frac{1}{2}n_i = \frac{1}{3} m_i \gg |\cM_1\cup \cdots \cup \cM_{i-1}|.
    \]
Set $v_i(\cdot)$ for $\cM_1\cup \cdots \cup \cM_i$ with $i>1$ as:
    \begin{align*}
        v_i(e) = \begin{cases}
        \frac{2^{i-1}}{n_i} = 2w_i, & \text{  for any $e\in B_i$} \\
        0, & \text{ for any $e\in T_i'$} \\
        \frac{2^{i-2}}{2n_i} = w_i, & \text{  for any $e\in \cM_i^2$}.
    \end{cases}
    \end{align*}
\end{itemize}

Next, we prove that each agent's weighted maximin share is exactly her weight. 
\begin{lemma}
    For any group of agents $G_i$, $\WMMS_i = w_i$.
\end{lemma}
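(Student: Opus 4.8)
The plan is to prove $\WMMS_i\ge w_i$ and $\WMMS_i\le w_i$ separately; the second direction will also produce an explicit WMMS-defining partition for a group-$i$ agent. For the lower bound I would first record two identities. Under $v_i$, every item of $\cM_l$ with $l<i$ costs $2w_i$, and $\cM_i$ splits into $T_i$ (cost $2w_i$, of size $\frac12 n_i-(m_1+\cdots+m_{i-1})$), $T_i'$ (cost $0$) and $\cM_i^2$ (cost $w_i$, of size $n_i$); summing, the cost of $\cM_1\cup\cdots\cup\cM_i$ equals $2w_i(m_1+\cdots+m_{i-1})+2w_i|T_i|+w_i n_i=2w_i n_i=2W_i$, while each later block $\cM_l$ ($l>i$) costs $\frac12 n_l w_l+\frac12 n_l w_l=W_l$. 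Since $2W_i=\sum_{l=1}^{i}W_l=2^{i-1}$, this gives $v_i(\cM)=2W_i+\sum_{l>i}W_l=\sum_{l=1}^{k}W_l$ (with the case $i=1$ giving this directly), and since $\sum_{j}w_j=\sum_{l}W_l=1+\sum_{l=2}^{k}2^{l-2}=2^{k-1}$, we get $v_i(\cM)=\sum_j w_j=2^{k-1}$. Then for any allocation $\A=(A_1,\dots,A_n)\in\bA$, the maximum of the ratios $v_i(A_j)/w_j$ is at least their weighted average $(\sum_j v_i(A_j))/(\sum_j w_j)=v_i(\cM)/\sum_j w_j=1$, so $\min_{\A\in\bA}\max_j v_i(A_j)/w_j\ge 1$, i.e., $\WMMS_i\ge w_i$.

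For the reverse inequality I would exhibit an $n$-partition $(A_1,\dots,A_n)$ of $\cM$ with $v_i(A_j)=w_j$ for every agent $j$; this shows $\WMMS_i\le w_i$ and is itself a WMMS-defining partition. The construction is block-by-block. For every $l>i$: inside $G_l$, give $\frac12 n_l$ of the agents one item of $\cM_l^1$ (cost $w_l$) and each of the other $\frac12 n_l$ agents two items of $\cM_l^2$ (cost $2\cdot\frac12 w_l=w_l$), which exhausts $\cM_l$ and all $n_l$ agents of $G_l$. For $G_i$: give each of its $n_i$ agents one item of $\cM_i^2$ (cost $w_i$) and place all $m_1+\cdots+m_{i-1}$ items of $T_i'$ (each of $v_i$-cost $0$) on one of them. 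The items that remain, $\cM_1\cup\cdots\cup\cM_{i-1}\cup T_i$, all have $v_i$-cost $2w_i$ and number exactly $\frac12 n_i$; distribute them among $G_1,\dots,G_{i-1}$ so that each agent of $G_l$ receives $q_l:=w_l/(2w_i)$ of them, raising her cost to $q_l\cdot 2w_i=w_l$. For $i=1$ the partition is the obvious one: $\cM_1$ to the single weight-$1$ agent and the rule for ``$l>i$'' applied to every block.

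The technical heart is verifying that this last distribution is consistent, i.e., that it uses up exactly the $\frac12 n_i$ available cost-$2w_i$ items and exactly all agents of $G_1,\dots,G_{i-1}$, with integer counts throughout. Substituting $\Delta=2^k$, $n_l=\Delta^{l-1}$ and $w_l=2^{l-2}/n_l$, one computes $q_l=2^{(k-1)(i-l)-1}$ for $2\le l\le i-1$ and $q_1=1/(2w_i)=2^{(k-1)(i-1)}$, both positive integers since $k$ is large, and $\sum_{l=1}^{i-1}n_l q_l=\frac{1}{2w_i}\sum_{l=1}^{i-1}W_l=\frac{1}{2w_i}\cdot 2^{i-2}=\frac12 n_i$, which equals the item count and hence forces every agent of $G_1,\dots,G_{i-1}$ to be filled exactly. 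The remaining checks are elementary: $\frac12 n_l\in\Z$ for $l\ge 2$ (as $n_l=2^{k(l-1)}$), the sets $T_i,T_i'$ are well defined, i.e., $m_1+\cdots+m_{i-1}\le\frac12 n_i$ (Inequality~\ref{eq:lower:instance:m}, and its analogue at $i=k$), and in the construction every bundle costs exactly its owner's weight. Combining both parts yields $\WMMS_i=w_i$.
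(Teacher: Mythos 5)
Your proposal is correct and follows essentially the same construction as the paper: for $G_i$ with $i\ge 2$, distribute $B_i$ (items of $v_i$-cost $2w_i$) to $G_1\cup\cdots\cup G_{i-1}$ so each agent reaches exactly her weight, give each $G_i$-agent one item of $\cM_i^2$ plus the zero-cost $T_i'$ items, and handle $G_l$ for $l>i$ with the one-item-of-$\cM_l^1$ or two-items-of-$\cM_l^2$ split. The one place where you are more explicit than the paper is the lower bound: the paper implicitly uses $\WMMS_i\ge w_i\cdot v_i(\cM)/\sum_j w_j$ and tacitly relies on $v_i(\cM)=\sum_j w_j$, whereas you verify this equality and the averaging argument directly; you also check the integrality/count conditions ($q_l\in\Z_{>0}$, $\sum_l n_lq_l=\tfrac12 n_i$) that the paper summarizes as ``$2w_i$ divides each $w_j$.'' These additions are harmless and correct, but the argument is the same.
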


\begin{proof}
    For the sole agent in $G_1$, one WMMS-defining partition is as follows:
    \begin{itemize}
        \item The agent in $G_1$ gets the sole item in $\cM_1$;
        \item Each agent $a_{i,j} \in G_i$ for $i\ge 2$ gets either exactly one item in $\cM_i^1$ or exactly two items in $\cM_i^2$.
    \end{itemize}
    {In this partition, all items are allocated.}
    Besides, from the perspective of the agent in $G_1$, every agent in $\cN$ is allocated a set of items whose value is exactly her weight; therefore, this partition is WMMS-defining.
    Thus, $\WMMS_1 = w_1$.

    Next, consider agents in group $G_i$ with $i \ge 2$. 
    We first observe that 
    \[
    \sum_{j =1}^{i-1} W_j= 1+\sum_{j =0}^{i-3} 2^{j} = 2^{i-2} =
    \left( 2\cdot \frac{2^{i-2}}{n_i}\right) \cdot \left(\frac{1}{2} n_i \right)= 2w_i \cdot |B_i| = v_i(B_i).
    \]
    That is, from the perspective of agents in $G_i$, all the items in $B_i$ can be allocated to agents in $G_1\cup \cdots\cup G_{i-1}$ such that every agent gets a bundle of items whose value equals her weight. This is possible since the value of any item in $B_i$ is $2w_i$, which divides each $w_j$ with $j<i$. 
    Thus, one WMMS partition for agents in $G_i$ can be constructed as follows:
    \begin{itemize}
        \item The items in $B_i$ are allocated to agents in $G_1\cup \cdots\cup G_{i-1}$ such that each agent gets a cost {equal to} her weight;
        \item Each agent in $G_i$ is {allocated one item} in $\cM_i^2$;
        \item The items in $T_i'$ are arbitrarily allocated to the agents in $G_i$;
        \item Each agent in $G_j$ with $j > i$ gets either exactly one item in $\cM_j^1$ or exactly two items in $\cM_j^2$.
    \end{itemize}
    One can easily verify that every agent in $\cN$ is allocated a bundle with cost equal to her weight, and all the items are allocated.
    Therefore, $\WMMS_i = w_i$.
\end{proof}


\subsection{Lower Bounds for {an Arbitrary Number of Agents}}

It is worth noting that the previous lower bound of 2 applies only when $n$ is sufficiently large.  
While the lower bound for $n=2$ has been proved in \citep{DBLP:conf/ijcai/00020L24}, the lower bounds for intermediate values of $n$ remain unknown.  
To address this gap, we present a more general instance below. Although it is weaker than Theorem \ref{the:lb} in the worst case, it provides a lower bound for any number of agents.

\begin{theorem}
    \label{thm:lb:n}
    For any number of agents $n\ge 2$, there exists an instance in which no allocation achieves better than $\frac{n+\sqrt{5n^2-4n}}{2n}$-WMMS.
\end{theorem}

The proof of Theorem \ref{thm:lb:n} is given in Appendix \ref{sec:appendix:lb}. 

For illustration, when $n=2$, no allocation is better than 1.366-WMMS, which matches the lower bound in \citep{DBLP:conf/ijcai/00020L24}; when $n=3$, no allocation is better than 1.457-WMMS; {when} $n=4$, the lower bound is 1.5, and as $n$ approaches infinity, the lower bound converges to the golden ratio, approximately 1.618.


\section{WMMS Approximation Curve for Two Agents with Varying Weights}

\label{sec:two}

In this section, we provide a deeper discussion on how weights affect the approximation ratio of WMMS.
It is well-known that, for the case of two agents with equal weights, an exact MMS allocation always exists, which can be found by, for example, the cut-and-choose algorithm. 
As shown in \citep{DBLP:conf/ijcai/00020L24}, when the two weights are $w_1 = \sqrt{3}-1$ and $w_2 = 2-\sqrt{3}$,  there exist valuation profiles for which no allocation can be better than $\frac{\sqrt{3}+1}{2}$-WMMS. 
This is in fact the worst-case weight distribution; for any weight distribution {and any valuation profiles}, a $\frac{\sqrt{3}+1}{2}$-WMMS allocation exists.
This leads to a natural question: if the two agents have a different weight distribution, {to what extent can the approximation ratio be improved?}
To address this, in what follows, we provide a precise characterization of the optimal approximation ratio as a function of the weight distribution for the two-agent case, as shown in Theorem \ref{thm:n=2:curve}.

For simplicity, assume $w_1\ge w_2>0$ throughout this section.
Denote by $\Sigma(\alpha)$ the set of all two-agent instances $\cI$ with $\alpha=\frac{w_1}{w_2}\ge 1$.
Define $f(\alpha)$ as the worst-case optimal approximation ratio of instances in $\Sigma(\alpha)$, i.e.,
\[
f(\alpha)=\max_{\cI\in \Sigma(\alpha)}\min_{\rho}\{\rho\mid \text{$\exists$ allocation $(X_1,X_2)$ of $\cI$ s.t. $v_i(X_i)\le \rho\WMMS_i$ for $i=1,2$}\}.
\]

\begin{theorem}
\label{thm:n=2:curve}
{For every $\alpha\ge 1$, the worst-case optimal approximation ratio over all instances in $\Sigma(\alpha)$ is}
    \begin{equation*}  
f(\alpha)=\left\{  
     \begin{array}{ll}    
     \frac{1}{2}+\frac{\alpha}{2}, & 1\leq \alpha < \frac{\sqrt{5}+1}{2}\\
      1+\frac{1}{2\alpha},  & \frac{\sqrt{5}+1}{2}\leq \alpha < \sqrt{2}+1\\
     \frac{\alpha}{2}, & \sqrt{2}+1\le \alpha < \sqrt{3}+1\\
      1+\frac{1}{\alpha}, & \alpha \ge \sqrt{3}+1.
     \end{array}  
\right.  
\end{equation*} 
 \end{theorem}

The function $f(\alpha)$ is illustrated in Figure \ref{fig:n=2}.
Theorem \ref{thm:n=2:curve} directly implies that, for any two-agent instance with unequal weights, there exist valuation profiles where an exact WMMS allocation does not exist.

\begin{figure}
    \centering
    \includegraphics[width=0.8\linewidth]{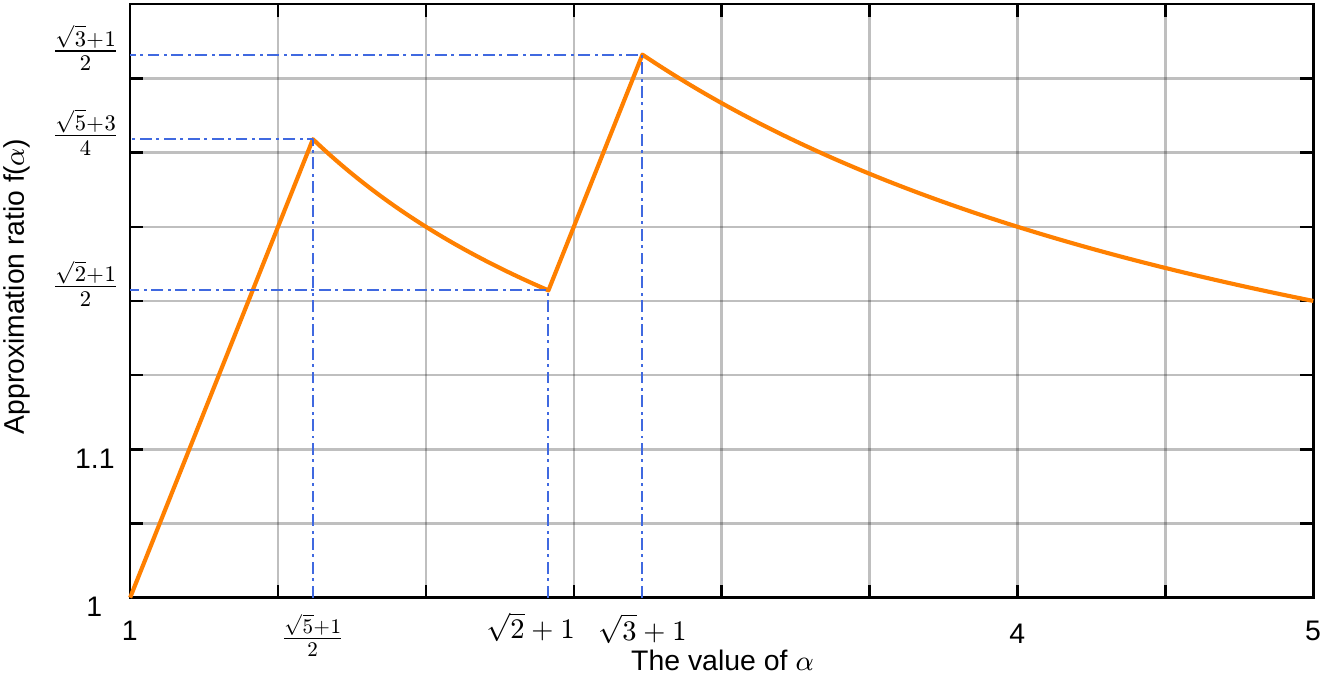}
    \caption{The curve of optimal approximation ratio $f(\alpha)$.}
    \label{fig:n=2}
\end{figure}

\begin{corollary}
    Given any two agents with $w_1>w_2$, there are valuations $(v_1,v_2)$ such that no allocation is WMMS.
\end{corollary}

We divide the proof of Theorem \ref{thm:n=2:curve} into two lemmas.

\begin{lemma}
\label{lem:two:lower}
    For any $\alpha\ge 1$, there exists a two-agent instance $\cI=(\cN,\cM,\fw,\fv)$ with $\frac{w_1}{w_2}=\alpha$ such that for any allocation $(X_1,X_2)$, 
    \[
    \max\{\frac{v_1(X_1)}{\WMMS_1}, \frac{v_2(X_2)}{\WMMS_2}\} \ge f(\alpha).
    \]
\end{lemma}

\begin{proof}
To prove the lemma, we consider two hard instances, depending on the weights $w_1$ and $w_2$:
instance $\cI_1$ when $1\le\alpha\le \sqrt{2}+1$, and instance $\cI_2$ otherwise. 
    
\paragraph{Instance $\cI_1=(\cN,\cM,\fw,\fv)$}
In $\cI_1$, $\cN=\{a_1,a_2\}$, $\cM=\{e_1,e_2,e_3\}$, and the valuations are shown in Table \ref{tab:valuation1}.
{A direct calculation gives $\WMMS_1=w_1$ and $\WMMS_2=w_2$.}

\begin{table}[htbp]
    \centering
    \renewcommand{\arraystretch}{1.2}
    \begin{minipage}[b]{55mm}
    \begin{tabular}{c|c|c|c}
        \hline
        agent  & item $e_1$ & item $e_2$ & item $e_3$\\
        \hline
        $a_1$ & $w_1$ & $\frac{w_2}{2}$ & $\frac{w_2}{2}$ \\
        \hline
        $a_2$ & $\frac{w_1+w_2}{2}$ & $w_2$ & $\frac{w_1-w_2}{2}$ \\
        \hline
    \end{tabular}
    \caption{$\cI_1=(\cN,\cM,\fw,\fv)$.}
    \label{tab:valuation1}
    \end{minipage}  
    \begin{minipage}[b]{55mm}
    \begin{tabular}{c|c|c|c}
        \hline
        agent  & item $e_1$ & item $e_2$ & item $e_3$\\
        \hline
        $a_1$ & $w_1$ & $0$ & $w_2$ \\
        \hline
        $a_2$ & $\frac{w_1}{2}$ & $w_2$ & $\frac{w_1}{2}$ \\
        \hline
    \end{tabular}
    \caption{$\cI_2=(\cN,\cM,\fw,\fv)$.}
    \label{tab:valuation2}
    \end{minipage}
\end{table}

To see the best possible approximation of WMMS in $\cI_1$, we enumerate all its possible allocations as shown in Table \ref{tab:profile1}.
For each allocation, the unfairness ratio is given by ${\max\{\frac{v_1(A_1)}{w_1},\frac{v_2(A_2)}{w_2}\}}$, and the agent whose allocation touches this ratio is marked in green.
Given $1\le \alpha\le \sqrt{2}+1$, the 6th and 7th allocations dominate the 8th allocation, and the 4th and 5th allocations dominate the 1st, 2nd, and 3rd allocations.
{Here, one allocation dominates another if it has a weakly smaller unfairness ratio for every relevant value of $\alpha$.}
Thus, on instance $\cI_1$, the best possible approximation ratio is $\min\{1+\frac{w_2}{2w_1},\frac{1}{2}+\frac{w_1}{2w_2}\}$, or alternatively, 
\begin{equation*}  
L_1(\alpha)=\left\{  
    \begin{array}{cc}
     \frac{1}{2}+\frac{\alpha}{2},& 1\le \alpha<\frac{\sqrt{5}+1}{2} \\
     1+\frac{1}{2\alpha}, & \frac{\sqrt{5}+1}{2} \le \alpha\le \sqrt{2}+1. \\
    \end{array}
\right.  
\end{equation*} 

\begin{table}[htbp]
  \centering
  
    \begin{tabular}{c|c|c|c|c}
    \hline
    ID & $A_1$ & $A_2$ & $v_1(A_1)$ & $v_2(A_2)$ \bigstrut\\
    \hline
    1     & $\emptyset$ & $\achange{\{e_1,e_2,e_3\}}$ & $0$   & \cellcolor[rgb]{ .757,  .941,  .784} $w_1+w_2$ \\
    \hline
    2     & $\achange{\{e_3\}}$ & $\achange{\{e_1,e_2\}}$ & $0.5w_2$ & \cellcolor[rgb]{ .757,  .941,  .784} $0.5w_1+1.5w_2$ \\
    \hline
    3     & $\achange{\{e_2\}}$ & $\achange{\{e_1,e_3\}}$ & $0.5w_2$ & \cellcolor[rgb]{ .757,  .941,  .784} $w_1$ \\
    \hline
    4     & $\achange{\{e_2,e_3\}}$ & $\achange{\{e_1\}}$ & $w_2$ & \cellcolor[rgb]{ .757,  .941,  .784} $0.5w_1+0.5w_2$ \\
    \hline
    5     & $\achange{\{e_1\}}$ & $\achange{\{e_2,e_3\}}$ & $w_1$ & \cellcolor[rgb]{ .757,  .941,  .784} $0.5w_1+0.5w_2$ \\
    \hline
    6     & $\achange{\{e_1,e_3\}}$ & $\achange{\{e_2\}}$ & \cellcolor[rgb]{ .757,  .941,  .784} $w_1+0.5w_2$ & $w_2$ \bigstrut\\
    \hline
    7     & $\achange{\{e_1,e_2\}}$ & $\achange{\{e_3\}}$ & \cellcolor[rgb]{ .757,  .941,  .784} $w_1+0.5w_2$ & $0.5w_1-0.5w_2$ \bigstrut\\
    \hline
    8     & $\achange{\{e_1,e_2,e_3\}}$ & $\emptyset$ & \cellcolor[rgb]{ .757,  .941,  .784} $w_1+w_2$ & $0$ \bigstrut\\
    \hline
    \end{tabular}%
    \caption{All possible allocations in $\cI_1$.}
  \label{tab:profile1}%
\end{table}%

When $\alpha>\sqrt{2}+1$, we consider the following instance $\cI_2$.

\paragraph{Instance $\cI_2=(\cN,\cM,\fw,\fv)$}
Again, in $\cI_2$, $\cN=\{a_1,a_2\}$ and $\cM=\{e_1,e_2,e_3\}$; the valuations are shown in Table \ref{tab:valuation2}, and $\WMMS_1=w_1$, $\WMMS_2=w_2$.

\begin{table}[htbp]
  \centering
  
    \begin{tabular}{c|c|c|c|c}
    \hline
    ID & $A_1$ & $A_2$ & $v_1(A_1)$ & $v_2(A_2)$ \bigstrut\\
    \hline
    1     & $\emptyset$ & $\achange{\{e_1,e_2,e_3\}}$ & $0$   & \cellcolor[rgb]{ .757,  .941,  .784} $w_1+w_2$ \\
    \hline
    2     & $\achange{\{e_3\}}$ & $\achange{\{e_1,e_2\}}$ & $w_2$ & \cellcolor[rgb]{ .757,  .941,  .784} $0.5w_1+w_2$ \\
    \hline
    3     & $\achange{\{e_2\}}$ & $\achange{\{e_1,e_3\}}$ & $0$   & \cellcolor[rgb]{ .757,  .941,  .784} $w_1$ \\
    \hline
    4     & $\achange{\{e_2,e_3\}}$ & $\achange{\{e_1\}}$ & $w_2$ & \cellcolor[rgb]{  .757,  .941,  .784} $0.5w_1$ \\
    \hline
    5     & $\achange{\{e_1\}}$ & $\achange{\{e_2,e_3\}}$ & $w_1$ & \cellcolor[rgb]{ .757,  .941,  .784} $0.5w_1+w_2$ \\
    \hline
    6     & $\achange{\{e_1,e_3\}}$ & $\achange{\{e_2\}}$ & \cellcolor[rgb]{ .757,  .941,  .784} $w_1+w_2$ & $w_2$ \bigstrut\\
    \hline
    7     & $\achange{\{e_1,e_2\}}$ & $\achange{\{e_3\}}$ & $w_1$ & \cellcolor[rgb]{  .757,  .941,  .784} $0.5w_1$ \\
    \hline
    8     & $\achange{\{e_1,e_2,e_3\}}$ & $\emptyset$ & \cellcolor[rgb]{ .757,  .941,  .784} $w_1+w_2$ & $0$ \bigstrut\\
    \hline
    \end{tabular}%
    \caption{All possible allocations in $\cI_2$.}
  \label{tab:profile2}%
\end{table}%

Similar to the analysis for $\cI_1$, we enumerate all possible allocations in $\cI_2$, as shown in Table \ref{tab:profile2}, and color the agent who reaches the unfairness ratio.
Since $\alpha>\sqrt{2}+1$, allocations $1,2,3,5$ are dominated by allocations $4$ and $7$.
Accordingly, the best possible approximation is $\min\{1+\frac{w_2}{w_1},\frac{w_1}{2w_2}\}$, or alternatively, 
\begin{equation*}  
L_2(\alpha)=\left\{  
    \begin{array}{cc}
       \frac{\alpha}{2}, & \sqrt{2}+1<\alpha\le\sqrt{3}+1 \\
        1+\frac{1}{\alpha}, & \alpha>\sqrt{3}+1. \\
    \end{array}
\right.  
\end{equation*} 
Combining $L_1(\alpha)$ and $L_2(\alpha)$, Lemma \ref{lem:two:lower} is proved.
\end{proof}

\begin{lemma}
\label{lem:two:upper}
    For any $\alpha\ge 1$ and any two-agent instance $\cI=(\cN,\cM,\fw,\fv)$ with $\frac{w_1}{w_2}=\alpha$, there exists an allocation $(X_1,X_2)$, 
    \[
    \max\{\frac{v_1(X_1)}{\WMMS_1}, \frac{v_2(X_2)}{\WMMS_2}\} \le f(\alpha).
    \]
\end{lemma}

Due to space constraints, we provide only a proof sketch of Lemma \ref{lem:two:upper} below, with the full proof deferred to Appendix~\ref{sec:appendix:two}.

\begin{proof}[Proof Sketch]
    Given an instance with two agents $\{a_1,a_2\}$, denote by $\{B_1,B_2\}$ the WMMS-defining partition of agent $a_1$ and $\{C_1,C_2\}$ the WMMS-defining partition of agent $a_2$. 
    By the proportional proposition in Theorem \ref{the:can}, without loss of generality, it is assumed that $\WMMS_i=w_i$, and $v_1(B_1)=v_2(C_1)=w_1$ and $v_1(B_2)=v_2(C_2)=w_2$.
    We divide all items into 4 atomic bundles:
$X_{1,1}=B_1\cap C_1$, $X_{1,2}=B_1\cap C_2$, $X_{2,1}=B_2\cap C_1$, $X_{2,2}=B_2\cap C_2$.
Accordingly, there are coefficients $0\le a,b,c,d\le 1$ such that the values of the atomic bundles can be {represented} as in Table \ref{tab:parts:1}.

\begin{table}[htbp]
    \centering
    \begin{tabular}{c|c|c|c|c}
        \hline
        agents  & $X_{1,1}=B_1\cap C_1$ & $X_{1,2}=B_1\cap C_2$ & $X_{2,1}=B_2\cap C_1$ & $X_{2,2}=B_2\cap C_2$ \\
        \hline
        $a_1$ & $aw_1$ & $(1-a)w_1$ & $dw_2$ & $(1-d)w_2$\\
        \hline
        $a_2$ & $bw_1$ & $(1-c)w_2$ & $(1-b)w_1$ & $cw_2$\\
        \hline
    \end{tabular}
    \caption{Values for {atomic bundles}.}
    \label{tab:parts:1}
\end{table}

Our algorithm is simple: 
\begin{itemize}
    \item It considers all allocations where the four atomic bundles are allocated whole, without being divided, and chooses the allocation that minimizes the unfairness ratio. 
\end{itemize}
It is somewhat surprising that this simple algorithm achieves the optimal WMMS approximation ratio for any weight distribution.
In Appendix~\ref{sec:appendix:two}, we use a linear programming approach to analyze the worst-case approximation ratio of the above algorithm with respect to the parameters $a, b, c, d$. 
This analysis shows that the approximation ratio is exactly $f(\alpha)$.
\end{proof}

\section{Conclusion and Discussion}

In this paper, we give the first {constant-factor approximation algorithm} for WMMS in indivisible chores
and prove that the best possible approximation ratio is between 2 and 12.
Closing this gap remains an intriguing open problem. 
Our algorithm allows an agent to start receiving items only when the item's value does not exceed her WMMS. 
One possible direction for improvement is to permit agents to receive items even earlier, even if the item's value is larger, potentially achieving a better balance between the value of individual items and the exit threshold. 
However, implementing this approach would require extending the analysis to more general instances beyond {canonical ones}.
The intuition behind this limitation is that, in canonical instances, any item with a value exceeding an agent's WMMS is already at least twice as large as her WMMS. 
Allocating two such items to an agent would result in an approximation ratio worse than what our algorithm achieves.

There are other interesting future research directions as well. 

For example, the $O(\log n)$-approximate algorithm in \cite{DBLP:conf/ijcai/00020L24} is ordinal.
That is, their algorithm does not require the exact numerical values of the items. 
Instead, it relies solely on each agent's ranking of the items by value. 
Ordinal algorithms are simple and easy to implement in practice.
In contrast, our algorithm is not ordinal and requires access to the precise values of all items.
It is unknown {whether} the $O(\log n)$ approximation ratio can be improved by ordinal algorithms.
This problem has been studied in the unweighted setting. 
With ordinal valuations, \citet{DBLP:conf/ijcai/AmanatidisBM16} designed an $\Omega(\frac{1}{\log n})$-approximate MMS algorithm under ordinal valuations, which has been shown to be asymptotically optimal \cite{DBLP:conf/ijcai/0002021}.
For chores, {constant-factor approximation algorithms for MMS} are given in \cite{DBLP:conf/sigecom/FeigeH23}.

It is also interesting to consider the online setting, where items arrive dynamically over time. In the unweighted case, \citet{DBLP:conf/icml/0002B023} presented a $(2-\frac{1}{n})$-competitive algorithm under the total value assumption. {Without this assumption, \citet{DBLP:journals/corr/abs-2507-14039} and \citet{DBLP:journals/corr/abs-2507-12984} show that no algorithm can achieve a competitive ratio better than $n$.}
When agents have arbitrary weights, \citet{DBLP:conf/ijcai/00020L24} gave an algorithm that is $O(\sqrt{n})$-competitive under the total value assumption. However, whether this ratio can be further improved remains an open question.

\newpage

\bibliographystyle{plainnat}
\bibliography{ref}


\newpage

\appendix
\section*{Appendix}

\section{Missing Proofs in Section \ref{sec:can}}

\subsection{Proof of Theorem \ref{the:can}}
\label{prof:sec:can}

To prove Theorem~\ref{the:can},
we first decompose the definition of canonical instances into several properties, and show how to gradually modify an arbitrary instance $\cI=(\cN,\cM,\fw,\fv)$ to satisfy each of these properties one by one:
\begin{align*}
    \cI \to \cI^1 \to \cI^2 \to \cI^3 \to \cI^4 \to \cI^5.
\end{align*}
    \begin{enumerate}[label={(\arabic*)}]
        \item $\cI^1$: For each $a_i \in \cN$, $w_i=\frac{1}{2^p}w_1$, where $p$ is a non-negative integer.
        \item $\cI^2$: For each $a_i \in \cN$, $\WMMS_{i}=w_i \cdot \frac{v_{i}(\cM) }{\sum_{l\in [n]}w_l}$.
        \item $\cI^3$: For each $a_i \in \cN$, $v_{i}(\cM)=\sum_{l\in [n]}w_l=1$.
        \item $\cI^4$: For each $a_{i}\in \cN$ and each $e\in \cM$, $v_{i}(e)=\frac{1}{2^q}w_1$, where $q$ is a non-negative integer.
        \item $\cI^5$: For each $a_{i}\in \cN$, $v_{i}(e_1)\geq v_{i}(e_2)\geq \cdots \geq v_{i}(e_{m})$.
    \end{enumerate}
Eventually, $\cI^5$ is a canonical instance.
We will show that if $\cI^5$ has an $\alpha$-WMMS allocation, then $\cI$ has a $4\alpha$-WMMS allocation.

\paragraph{$\cI \to \cI^1$}
{First}, given an arbitrary instance $\cI=(\cN,\cM,\fw,\fv)$, we modify the weights of the agents and obtain $\cI^1=(\cN,\cM,\fw',\fv)$ satisfying property (1).
In particular, the weight $w_i$ of every agent $a_i$ is rounded up to the nearest power-of-$\frac{1}{2}$ fraction of $w_1$.
Formally, let $p = \lfloor \log_{\frac{1}{2}}\left(\frac{w_i}{w_1}\right) \rfloor$, and define the new weight as $w'_i = \frac{w_1}{2^p}$, where $w'_i$ is the weight of agent $a_i$ in $\cI^1$.
Accordingly, $w_i \le w'_i<2w_i$.
We claim that the WMMS of each agent increases by at most a factor of two.
\begin{claim}
    $\WMMS_{i}(\cI^1)\leq 2\WMMS_{i}(\cI)$ for all agents $a_i \in \cN$.
\end{claim}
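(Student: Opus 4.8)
The plan is to compare the WMMS of each agent $a_i$ across the two instances by exhibiting, for the modified weights $\fw'$, a WMMS-defining partition whose unfairness ratio is controlled. The key observation is that the WMMS of $a_i$ is defined via a partition over \emph{all} agents' weights, so I must track how every coordinate $w_j$ changes, not just $w_i$. Since $w_j \le w'_j < 2 w_j$ for every $j$, the ``target'' thresholds $\WMMS_i / w_i \cdot w'_j$ that a partition must respect are only a bounded multiple of the old ones.

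First I would fix an optimal WMMS-defining partition $\A = (A_1,\ldots,A_n)$ for $a_i$ in the original instance $\cI$, so that $v_i(A_j)/w_j \le \WMMS_i(\cI)/w_i$ for all $j$. Then I would use the \emph{same} partition $\A$ as a candidate partition for $a_i$ in $\cI^1$ and bound its unfairness ratio: for each $j$,
\[
\frac{v_i(A_j)}{w'_j} \le \frac{v_i(A_j)}{w_j} \le \frac{\WMMS_i(\cI)}{w_i},
\]
using only $w'_j \ge w_j$. Taking the maximum over $j$ and multiplying by $w'_i$,
\[
\WMMS_i(\cI^1) \le w'_i \cdot \max_{j} \frac{v_i(A_j)}{w'_j} \le w'_i \cdot \frac{\WMMS_i(\cI)}{w_i} < 2 w_i \cdot \frac{\WMMS_i(\cI)}{w_i} = 2\,\WMMS_i(\cI),
\]
where the strict inequality uses $w'_i < 2 w_i$. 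This is essentially the whole argument.

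The only subtlety — and the step I would be most careful about — is a normalization/definition check rather than a hard obstacle: the value $\WMMS_i$ is defined without requiring normalized weights, and the partition $\A$ ranges over all allocations $\bA$ of the fixed item set $\cM$, which does not change from $\cI$ to $\cI^1$. So $\A$ is indeed a legitimate competitor partition in $\cI^1$, and the inequality $\max_j v_i(A_j)/w'_j \le \max_j v_i(A_j)/w_j$ holds coordinatewise because we only \emph{decreased} each denominator's reciprocal (increased $w_j$). I would also note that if the weights are renormalized to sum to $1$ afterwards, this only rescales all $w'_j$ by a common factor and rescales $\WMMS_i$ by the same factor, so the ratio $\WMMS_i(\cI^1) \le 2\,\WMMS_i(\cI)$ is unaffected; hence it is cleanest to state and prove the claim before any renormalization, or to carry the normalization constant through explicitly. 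No genuinely hard estimate is needed here — the factor $2$ comes directly from the rounding bound $w_i \le w'_i < 2 w_i$.
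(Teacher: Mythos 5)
Your proof is correct and uses essentially the same argument as the paper: take a WMMS-defining partition for $a_i$ in $\cI$, reuse it as a candidate partition in $\cI^1$, and combine the bounds $w'_j \ge w_j$ (to control the unfairness ratio at each coordinate $j$) with $w'_i < 2w_i$ (to pay the factor of $2$). The paper writes the chain of inequalities in a single displayed line, but the decomposition and the source of the constant are identical to yours.
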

\begin{proof}
    Let $\B=(B_1,\ldots,B_n)$ be a WMMS partition of agent $a_i$ in $\cI$.
    Then,  
    \[
    \WMMS_{i}(\cI^1)\leq w'_i\cdot \max_{a_{j} \in N} \frac{v_{i}(B_{j})}{w'_{j}} \leq \max_{a_{j} \in \cN}\frac{\WMMS_{i}(\cI) \cdot w_{j}\cdot w'_i}{w_i\cdot w'_{j}} \leq 2\WMMS_{i}(\cI),
    \]
    as $w'_i<2w_i$ and $w'_j\ge w_j$.
	    \end{proof}
	    Consequently, if agent $a_{i}$ receives a bundle $A_{i}$ such that $v_{i}(A_{i})\leq \alpha \WMMS_{i}(\cI^1)$ in instance $\cI^1$, it follows that $v_{i}(A_{i})\leq 2\alpha \cdot \WMMS_{i}(\cI)$ in the original instance $\cI$.

    \paragraph{$\cI^1 \to \cI^2$}
    Secondly, we transform $\cI^1=(\cN,\cM,\fw,\fv)$ satisfying property (1) to an instance $\cI^2=(\cN,\cM',\fw,\fv')$ satisfying both properties (1) and (2), where $\cM' = \cM\cup \cM^2$ and $\cM^2$ is a set of $n^2$ additional items that are introduced in this step.
    The agents' values for the original items in $\cM$ {remain} unchanged, i.e., $v'_i(e) = v_i(e)$ for all $a_i\in \cN$ and $e\in \cM$.
    For each agent $a_{i}\in \cN$, let $\B^{i}=(B^i_1,\ldots,B^i_n)$ be an arbitrary WMMS-defining partition in $\cI^1$ and {let $\cM_i^2 = \{e^i_1,\ldots,e^i_n\}$ be} a set of additional items.
    For each $e^i_j \in \cM_i^2$, let    
    \[
        v'_{i}(e^i_j)= \frac{\WMMS_i(\cI^1)}{w_i}\cdot w_j - v_{i}(B^i_j).
    \]
    Let $v_i(e^{i'}_j) = 0$ for all $i'\neq i$ and $j\in [n]$.
    Let $\cM^2=\bigcup_{i=1}^n \cM^2_i$.
    Thus, by the design of the valuations, for agent $a_i$, 
    \[
    \frac{v'_{i}(B^{i}_{1}\cup \{e^{i}_{1}\})}{w_1}=\cdots=\frac{v'_{i}(B^{i}_{n}\cup \{e^{i}_{n}\})}{w_n}=\frac{\WMMS_i(\cI^1)}{w_i}.
    \]
	    That is, $B^i_1\cup\{e^i_1\},\ldots,B^i_n\cup\{e^i_n\}$, with items in $\cM^2\setminus \cM^2_i$ arbitrarily distributed, also form a WMMS partition of agent $a_i$ in $\cI^2$, i.e., $\WMMS_i(\cI^2) = \WMMS_i(\cI^1)$, and 
    \[
    \frac{v'_i(\cM')}{\sum_{j\in [n]} w_j} = \frac{\WMMS_i(\cI^2)}{w_i}.
    \]
    Thus 
    \[
    \WMMS_i(\cI^2) = w_i\cdot\frac{v'_i(\cM')}{\sum_{j\in [n]} w_j}.
    \]


    Consequently, if agent $a_{i}$ is allocated a bundle $A_{i}$ such that $v'_{i}(A_{i})\leq \alpha \WMMS_{i}(\cI^2)$ in $\cI^2$, then 
    \[
    v_{i}(A_{i}\setminus \cM^2) \leq v'_{i}(A_{i})\leq \alpha \cdot \WMMS_{i}(\cI^2)=\alpha \cdot \WMMS_{i}(\cI^1).
    \]
    To summarize, we have the following claim.
    \begin{claim}
        $v'_{i}(A_{i})\leq \alpha \WMMS_{i}(\cI^2)$ implies $v_{i}(A_{i}\setminus \cM^2) \leq \alpha \WMMS_{i}(\cI^1)$.
    \end{claim}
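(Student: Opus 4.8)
The plan is to reduce the statement to three elementary facts and chain them together; no heavy machinery is needed, since the claim is essentially bookkeeping about the construction of $\cI^2$. First I would record the two structural properties of $v'_i$ used: it coincides with $v_i$ on every original item $e\in \cM$, and it is nonnegative on the auxiliary items $\cM^2$. Nonnegativity on $\cM^2_i$ uses that $\B^i=(B^i_1,\ldots,B^i_n)$ is a WMMS-defining partition of $a_i$ in $\cI^1$, so $v_i(B^i_j)\le \tfrac{\WMMS_i(\cI^1)}{w_i}\,w_j$ and hence $v'_i(e^i_j)=\tfrac{\WMMS_i(\cI^1)}{w_i}\,w_j - v_i(B^i_j)\ge 0$; on $\cM^2\setminus \cM^2_i$ the value of $a_i$ is exactly $0$. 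Since $A_i\setminus \cM^2 \subseteq \cM\cap A_i$, additivity then gives $v_i(A_i\setminus \cM^2)=v'_i(A_i\setminus \cM^2)\le v'_i(A_i)$.

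Second, I would invoke the hypothesis $v'_i(A_i)\le \alpha\,\WMMS_i(\cI^2)$, so the only remaining point is to identify $\WMMS_i(\cI^2)$ with $\WMMS_i(\cI^1)$. For $\WMMS_i(\cI^2)\le \WMMS_i(\cI^1)$, exhibit the partition $(B^i_1\cup\{e^i_1\},\ldots,B^i_n\cup\{e^i_n\})$ of $\cM'$, with the items of $\cM^2\setminus \cM^2_i$ (all valued $0$ by $a_i$) distributed arbitrarily; under it every bundle $j$ has value exactly $\tfrac{\WMMS_i(\cI^1)}{w_i}\,w_j$, so it witnesses the bound. For the reverse inequality, take any partition $(B'_1,\ldots,B'_n)$ of $\cM'$; its restriction $(B'_1\cap\cM,\ldots,B'_n\cap\cM)$ is a partition of $\cM$ with $v_i(B'_j\cap\cM)=v'_i(B'_j\cap\cM)\le v'_i(B'_j)$ for each $j$, hence $\max_j v_i(B'_j\cap\cM)/w_j \le \max_j v'_i(B'_j)/w_j$; minimizing over partitions yields $\WMMS_i(\cI^1)\le \WMMS_i(\cI^2)$.

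Combining the three steps gives $v_i(A_i\setminus \cM^2)\le v'_i(A_i)\le \alpha\,\WMMS_i(\cI^2)=\alpha\,\WMMS_i(\cI^1)$, which is exactly the claim. I do not expect a real obstacle here: the only place that needs more than a single line is the equality $\WMMS_i(\cI^2)=\WMMS_i(\cI^1)$, and even that reduces to the two short comparisons above — one producing a good partition of $\cM'$, the other restricting an arbitrary partition of $\cM'$ back to $\cM$ and noting that bundle values can only decrease.
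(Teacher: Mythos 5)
Your proof is correct and follows essentially the same route as the paper: show $v_i(A_i\setminus\cM^2)=v'_i(A_i\setminus\cM^2)\le v'_i(A_i)$ using nonnegativity of $v'_i$ on the auxiliary items, establish $\WMMS_i(\cI^2)=\WMMS_i(\cI^1)$ via the augmented partition $(B^i_j\cup\{e^i_j\})_j$, and chain. You are slightly more explicit than the paper in two places the paper leaves implicit — verifying that $v'_i(e^i_j)\ge 0$ from the WMMS-defining property of $\B^i$, and proving the reverse inequality $\WMMS_i(\cI^2)\ge\WMMS_i(\cI^1)$ by restricting an arbitrary partition of $\cM'$ back to $\cM$ — but these are the details the paper's phrase ``also form a WMMS partition'' is implicitly relying on, so this is the same argument, not a different one.
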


    \paragraph{$\cI^2 \to \cI^3$}
    {Third}, we transform instance $\cI^2 =(\cN,\cM,\fw,\fv)$ satisfying properties (1)-(2) into instance $\cI^3=(\cN,\cM,\fw',\fv')$ satisfying properties (1)-(3).
	    {This step is simple: scale the weight of each agent by $\sum_{l\in [n]}w_l$, and scale the value of each item for agent $a_{i}$ by $v_{i}(\cM)$.}
    It is easy to verify that if agent $a_{i}$ is allocated a bundle $A_{i}$ such that $v'_{i}(A_{i})\leq \alpha \WMMS_{i}(\cI^3)$ in $\cI^3$, then $v_{i}(A_{i})\leq \alpha \WMMS_{i}(\cI^2)$ holds in the original instance $\cI^2$.

    \begin{claim}
        $v'_{i}(A_{i})\leq \alpha \WMMS_{i}(\cI^3)$ if and only if $v_{i}(A_{i})\leq \alpha \WMMS_{i}(\cI^2)$.
    \end{claim}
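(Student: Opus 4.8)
The plan is to track how each side of the claimed inequality transforms under the two \emph{independent} rescalings that turn $\cI^2$ into $\cI^3$: a global rescaling of all agents' weights by the constant $c=1/\sum_{l\in[n]}w_l$, and a per-agent rescaling of $a_i$'s valuation by $c_i=1/v_i(\cM)$ (where $v_i$ and $v_i(\cM)$ are taken with respect to $\cI^2$). Since the item set $\cM$ is not altered in this step, an allocation of $\cI^3$ is literally an allocation of $\cI^2$, so there is no bookkeeping about added or removed items.

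First I would record two elementary scaling identities for $\WMMS$. For the weight rescaling: replacing every $w_j$ by $c\,w_j$ in $\WMMS_i = w_i\cdot\min_{\A\in\bA}\max_j v_i(A_j)/w_j$ multiplies the outer factor $w_i$ by $c$ and every ratio $v_i(A_j)/w_j$ by $1/c$, so $\WMMS_i$ is unchanged; that is, $\WMMS_i$ is invariant under a uniform change of all weights. For the valuation rescaling: replacing $v_i$ by $c_i v_i$ scales each ratio $v_i(A_j)/w_j$, hence $\WMMS_i$, by exactly $c_i$, and this affects only agent $a_i$'s share since $\WMMS_i$ depends only on $v_i$. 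Applying both with the specific constants above gives $\WMMS_i(\cI^3)=\tfrac{1}{v_i(\cM)}\,\WMMS_i(\cI^2)$.

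Then I would note that the left-hand quantity transforms identically: $v'_i(A_i)=\tfrac{1}{v_i(\cM)}\,v_i(A_i)$, because the global weight rescaling does not touch valuations at all. Substituting both expressions into $v'_i(A_i)\le\alpha\,\WMMS_i(\cI^3)$ and cancelling the common strictly positive factor $1/v_i(\cM)$ yields precisely $v_i(A_i)\le\alpha\,\WMMS_i(\cI^2)$; since the only manipulation is cancellation of a positive constant, the equivalence holds in both directions. There is essentially no obstacle here — this is the most routine of the five reduction steps — and the only point needing a moment's care is to keep the uniform weight rescaling and the per-agent valuation rescaling conceptually separate and to invoke the scale-invariance of $\WMMS$ under a common change of all weights.
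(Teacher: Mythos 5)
Your proposal is correct and fills in exactly the routine scaling computation that the paper dismisses as "easy to verify": $\WMMS_i$ is invariant under a uniform rescaling of all weights and scales linearly with a per-agent rescaling of $v_i$, while $v'_i(A_i)$ scales by the same factor, so the claimed inequality is preserved in both directions after cancelling the positive constant $1/v_i(\cM)$. This is the same approach the paper intends, just spelled out.
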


   \paragraph{$\cI^3 \to \cI^4$}
   Next, we transform instance $\cI^3=(\cN,\cM,\fw,\fv)$ satisfying properties (1)-(3) into instance $\cI^4=(\cN,\cM',\fw,\fv')$ satisfying properties (1)-(4), where $\cM' = \cM\cup \cM^4$ and $\cM^4$ is a set of additional {items} added in this step.

  For each agent $a_{i}$, let $\B^{i}=(B^i_1,\ldots,B^i_n)$ denote an arbitrary WMMS-defining partition of $a_i$.
  For each bundle $B^{i}_{j}$ and $e\in B^i_j$, we round the value $v_i(e)$ down to the nearest power-of-$\frac{1}{2}$ fraction of $w_1$.
  Formally, define $q=\lceil \log_{\frac{1}{2}}{\left(\frac{v_{i}(e)}{w_1}\right)} \rceil$ and set $v'_{i}(e)=\frac{1}{2^q}\cdot w_1$, and thus $\frac{1}{2}v_{i}(e) < v'_{i}(e)\leq v_{i}(e)$.
	  Since instance $\cI^3$ satisfies properties (1)-(3), $v_{i}(B^{i}_{j})=w_{j}=\frac{w_1}{2^p}$ for some integer $p$. 
	  Thus, there are $l$ positive integers $q_1, \ldots, q_l$ such that
	  \[
	  v_{i}(B^{i}_{j})- v'_{i}(B^{i}_{j}) = \sum_{t=1}^{l}\frac{w_1}{2^{q_t}}.
	  \] 
  We construct $l$ additional items, denoted by $\cM^4_{i,j}=\{e^{i,j}_1,\ldots,e^{i,j}_l\}$, such that $v'_{i}(e^{i,j}_{t})= \frac{w_1}{2^{q_t}}$ for $1\le t\le l$.
  The other agents have zero value for these items, i.e., $v'_{i'}(e) = 0$ for all $e\in \cM^4_{i,j}$ where $i\neq i'$.
  
  Let $\cM^4 = \bigcup_{i=1}^n \bigcup_{j=1}^{n} \cM^4_{i,j}$.
  Note that the total value of agent $a_{i}$ in $\cI^4$ {is the same as in} $\cI^3$, i.e., $v'_{i}(\cM\cup\cM^4)=v_{i}(\cM)$.
	  Moreover, the WMMS values are also preserved, i.e.,  $\WMMS_{i}(\cI^4)=\WMMS_{i}(\cI^3)$. 
  {To see this, fix an agent $a_i$ and consider the partition obtained from $\B^i$ by adding the compensating items in $\cM^4_{i,j}$ to the bundle $B^i_j$ for each $j$; compensating items created for other agents can be distributed arbitrarily, since $a_i$ assigns value zero to them. By construction,
  \[
  v'_i(B^i_j\cup \cM^4_{i,j})
  =v'_i(B^i_j)+\sum_{t=1}^{l}\frac{w_1}{2^{q_t}}
  =v_i(B^i_j)=w_j.
  \]
  Hence the compensated partition witnesses the same weighted maximin-share value for agent $a_i$ after the rounding step.}
  Thus, on top of properties (1)-(3), $\cI^4$ also satisfies (4).
  Consequently, if agent $a_{i}$ is allocated a bundle $A_{i}$ such that $v'_{i}(A_{i})\leq \alpha \WMMS_{i}(\cI^4)$, it follows that 
   \[
   v_{i}(A_{i}\setminus \cM^4)\leq 2v'_{i}(A_{i}\cap \cM) \leq 2\alpha \cdot \WMMS_{i}(\cI^4) =2\alpha \cdot \WMMS_{i}(\cI^3).
   \]

   To summarize, we have the following claim.

  \begin{claim}
      $v'_{i}(A_{i})\leq \alpha \WMMS_{i}(\cI^4)$ implies $v_{i}(A_{i}\setminus \cM^4)\leq 2\alpha \WMMS_{i}(\cI^3)$.
  \end{claim}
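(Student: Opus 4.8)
The plan is to derive the claim directly from two properties of the rounding step $\cI^3\to\cI^4$. First I would record the elementary per-item bound: for each original item $e\in\cM$, the definition $q=\lceil\log_{1/2}(v_i(e)/w_1)\rceil$, $v'_i(e)=w_1/2^q$ gives
\[
\tfrac{1}{2}\,v_i(e)\;<\;v'_i(e)\;\le\;v_i(e),
\]
the degenerate case $v_i(e)=0$ being handled by $v'_i(e)=0$. Since $\cM^4$ is disjoint from $\cM$ and $A_i\subseteq\cM\cup\cM^4$, we have $A_i\setminus\cM^4=A_i\cap\cM$, and summing the bound over $e\in A_i\cap\cM$ yields $v_i(A_i\setminus\cM^4)\le 2\,v'_i(A_i\cap\cM)\le 2\,v'_i(A_i)$, where the last step just discards the nonnegatively valued items of $A_i\cap\cM^4$.

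Second, I would verify that rounding together with the added items leaves the maximin share unchanged, i.e.\ $\WMMS_i(\cI^4)=\WMMS_i(\cI^3)$; in fact only $\WMMS_i(\cI^4)\le\WMMS_i(\cI^3)$ is needed below, but equality is no harder. Because $\cI^3$ satisfies properties (1)--(3), $\WMMS_i(\cI^3)=w_i$ and any WMMS-defining partition $\B^i=(B^i_1,\dots,B^i_n)$ of $a_i$ in $\cI^3$ has $v_i(B^i_j)\le w_j$ for all $j$; since $\sum_j v_i(B^i_j)=v_i(\cM)=1=\sum_j w_j$, this forces $v_i(B^i_j)=w_j$ for every $j$. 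By construction $v'_i(\cM^4_{i,j})=v_i(B^i_j)-v'_i(B^i_j)$, so $v'_i(B^i_j\cup\cM^4_{i,j})=v_i(B^i_j)=w_j$; distributing the remaining items of $\cM^4$ (which have value $0$ for $a_i$) arbitrarily produces a partition of $\cM\cup\cM^4$ whose bundle $j$ has value exactly $w_j$, hence $\WMMS_i(\cI^4)\le w_i$. For the reverse direction, the total value is preserved, $v'_i(\cM\cup\cM^4)=v'_i(\cM)+\sum_j(v_i(B^i_j)-v'_i(B^i_j))=v_i(\cM)=1$, and the standard weighted-averaging bound $\WMMS_i\ge w_i\cdot v'_i(\cM\cup\cM^4)/\sum_l w_l$ gives $\WMMS_i(\cI^4)\ge w_i$.

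Finally I would chain the two parts: assuming $v'_i(A_i)\le\alpha\,\WMMS_i(\cI^4)$,
\[
v_i(A_i\setminus\cM^4)\;\le\;2\,v'_i(A_i)\;\le\;2\alpha\,\WMMS_i(\cI^4)\;=\;2\alpha\,\WMMS_i(\cI^3),
\]
which is exactly the claim. The only step that requires genuine care — and hence the likely main obstacle — is the WMMS-preservation argument: rounding item values \emph{down} can by itself only decrease an agent's maximin share, so one must check that the auxiliary items $\cM^4_{i,j}$ restore it precisely bundle by bundle, that they give no \emph{other} agent a cheaper partition (they contribute $0$ to every $a_{i'}$ with $i'\ne i$), and that for $a_i$ they cannot be reassembled into a partition beating value $w_i$ — the last point following from the preserved total value together with the averaging lower bound.
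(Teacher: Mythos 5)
Your proof is correct and follows essentially the same route as the paper: the per-item bound $\tfrac12 v_i(e)<v'_i(e)\le v_i(e)$ gives $v_i(A_i\setminus\cM^4)\le 2v'_i(A_i\cap\cM)\le 2v'_i(A_i)$, and then one uses $\WMMS_i(\cI^4)=\WMMS_i(\cI^3)$ to finish. The paper states the WMMS-preservation equality without further justification; you correctly fill in the two halves (the constructed partition $(B^i_j\cup\cM^4_{i,j})_j$ certifying $\WMMS_i(\cI^4)\le w_i$, and the averaging bound together with $v'_i(\cM\cup\cM^4)=1$ giving the reverse), which is a welcome elaboration but not a different argument.
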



  
   \paragraph{$\cI^4 \to \cI^5$}
	   Finally, it has been proved {for example, in} \cite{DBLP:journals/aamas/BouveretL16,DBLP:journals/teco/BarmanK20,DBLP:conf/nips/0037WZ23}, that any instance $\cI^4$ can be transformed into an IDO instance $\cI^5$ in polynomial time, and if we can find an $\alpha$-WMMS allocation for instance $\cI^5$, we can find an $\alpha$-WMMS allocation for the original instance $\cI^4$.

   Combining the previous claims, we can prove the theorem. 
   \begin{proof}[Proof of Theorem \ref{the:can}]
   Given an arbitrary instance $\cI=(\cN,\cM,\fw,\fv)$, we can construct a canonical instance $\cI^5=(\cN,\cM',\fw',\fv')$, through $\cI^1, \cI^2, \cI^3, \cI^4$ in the above five steps.
   Suppose $(A_1,\ldots, A_n)$ is an $\alpha$-WMMS allocation in instance $\cI^5$, i.e., $v'_i(A_i) \le \alpha \WMMS_i(\cI^5)$, then
   \[
   v_i(A_i\setminus \cM^4) \le 2\alpha \WMMS_i(\cI^3) = 2\alpha \WMMS_i(\cI^2),  
   \]
   and
   \[
   v_i(A_i\setminus \cM^4 \setminus \cM^2) \le 2\alpha \WMMS_i(\cI^1) \le 4\alpha \WMMS_i(\cI), 
   \]
   which completes the proof.
   \end{proof}




\subsection{Proof of Theorem \ref{thm:polytime}}
\label{app:poly:proof}

\begin{proof}[Proof of Theorem \ref{thm:polytime}]
The proof is almost identical to that in \cite{DBLP:conf/ijcai/0001C019}, which relies on the rounding technique of \cite{DBLP:journals/mp/ShmoysT93}.
The key distinction in our proof is the observation that the values of 
$\WMMS_i$ can be computed via a PTAS, as shown in \cite{hochbaum1988polynomial}.
Since the result in \cite{DBLP:conf/ijcai/0001C019} does not exactly focus on the $\alpha$-WMMS problem, we include a description of the algorithm below for completeness.

    First, observe the following linear program:
    \begin{align*}
        \sum_{e_j\in\cM}v_i(e_j)\cdot x_{ij} &\leq \alpha\WMMS_i, &\forall i\in[n]\\
        \sum_{i\in [n]:v_i(e_j)\leq \WMMS_i}x_{ij} &=1, &\forall j\in [m]\\
        x_{ij} &\geq 0, &\forall i\in [n], j\in [m].
    \end{align*}

    A feasible solution of the above linear program can be viewed as a fractional $\alpha$-WMMS allocation, where $x_{ij}$ is the fraction of item $e_j$ assigned to agent $i$. 
    Note that this allocation is valid since each item is fully allocated, and each agent can only get a non-negative fraction of an item.
	    Because there exists an integral $\alpha$-approximate WMMS allocation, this program has at least one feasible solution. 
	    Then, we can find a basic feasible solution to this program and get an $\alpha$-WMMS fractional allocation in polynomial time.
	    By rounding this fractional solution to an integral one, we can obtain a feasible allocation for the instance. Note that this linear program is in the form of a generalized assignment problem; hence, we can apply standard rounding techniques \cite{DBLP:journals/mp/ShmoysT93} to find a $2\alpha$-WMMS allocation in polynomial time. 

    Nevertheless, although we can obtain an allocation in polynomial time from the above program, specifying the program itself proves difficult, because we need to know the WMMS value of each agent, which is intractable.
    
    Finding the WMMS value for an agent $i$ is equivalent to computing the {minimum makespan} on related machines in a job scheduling problem.
    Each item $e\in \cM$ can be viewed as a job with workload $v_i(e)$, and each agent with weight $w$ can be viewed as a machine with speed $w$.
    The total workload of a machine is the sum of the workload of jobs assigned to this machine, divided by its speed.
    The goal of this job scheduling problem is to assign all jobs to machines while minimizing the {maximum workload} of any machine (the makespan).
	    If we use $A$ to denote a job assignment ($A_j$ to denote the set of jobs assigned to machine $j$) and use $\bA$ to denote the set of all possible job assignments, {finding a schedule with minimum makespan amounts to computing} $ \min_{A \in \bA} \max_{j\in[n]} v_i(A_j)$, which happens to be $\WMMS_i$. This scheduling problem is known to be {NP-complete}.

	    Fortunately, this problem possesses a polynomial-time approximation scheme (PTAS) \cite{hochbaum1988polynomial}. Specifically, for an arbitrarily small $\epsilon >0$, one can find, in polynomial time, an $(1+\epsilon)$-approximation of the {minimum} makespan, which equals an $(1+\epsilon)$-approximation of $\WMMS_i$. Consequently, given a fixed parameter $\epsilon$, one can specify and solve the following linear program in polynomial time:

    \begin{align*}
        \sum_{e_j\in\cM}v_i(e_j)\cdot x_{ij} &\leq \alpha(1+\frac{\epsilon}{2\alpha})\WMMS_i, &\forall i\in[n]\\
        \sum_{i\in [n]:v_i(e_j)\leq (1+\frac{\epsilon}{2\alpha})\WMMS_i}x_{ij} &=1, &\forall j\in [m]\\
        x_{ij} &\geq 0, &\forall i\in [n], j\in [m].
    \end{align*}
    
    Therefore, by rounding the solution of this program, we can find a $(2\alpha+\epsilon)$-WMMS allocation for any instance in polynomial time.
\end{proof}

\section{Missing Proofs in Section \ref{sec:main}}

\label{sec:appendix:3WMMS}

\subsection{Proof of Claim \ref{cla:lr}}

\begin{proof}[Proof of Claim \ref{cla:lr}]
    {We prove the claim by contradiction.}
    Assume there is an agent $a_{i,j}$ and an item $e_h$ with $h> L_r$ such that $v_{i,j}(e_h) > w_r$. 
    {Since the instance is canonical, we have $v_{i,j}(e_1)\ge\cdots\ge v_{i,j}(e_h)\geq 2w_r$.}
    Let $\B = (B_{p,q})$ be an arbitrary WMMS-defining partition for $a_{i,j}$.
    By the definition, $v_{i,j}(B_{p,q}) \leq w_{p}$ for all bundles $B_{p,q} \in \B$, and thus the first $h$ items cannot be in any $B_{p,q}$ with $p\geq r$.  
    {Hence, $r\neq 1$; otherwise, item $e_h$ cannot belong to any bundle of $\B$, a contradiction.}
    Let $\overline{B} = \bigcup_{t=1}^{r-1}\bigcup_{p=1}^{n_t}B_{t,p}$ denote the union of all bundles corresponding to agents in groups $G_1$ to $G_{r-1}$, and we have $e_t\in \overline{B}$ for all $t\leq h$. 
    Therefore,
    \[  
        v_{i,j}(\overline{B}) \geq \sum_{t=1}^h v_{i,j}(e_t) \geq \frac{\sum_{t=1}^{r-1}W_t }{w_r} \cdot 2w_r = 2\sum_{t=1}^{r-1}W_t .
    \]
    This leads to a contradiction because $\B$ being a WMMS partition requires that
    \[
        v_{i,j}(\overline{B}) = \sum_{t=1}^{r-1}\sum_{q=1}^{n_t}v_{i,j}(B_{t,q}) \leq \sum_{t=1}^{r-1}\sum_{q=1}^{n_t}w_t \leq \sum_{t=1}^{r-1}W_t,
    \]
    which proves the claim.
\end{proof}

\subsection{Proof of Claim \ref{cla:yi}}

\begin{proof}[Proof of Claim \ref{cla:yi}]
{We first label the items in $[1,L_{r_2}]$ and assign costs to them as in the proof of Claim~\ref{cla:y1}.}
Then we label (or relabel) and assign (or change) costs to the items in $S_{j} = [{d_{r_{j}}+1},{L_{r_{j+1}}}]$ for $2\leq j\leq i$.

{For each $j=2,\ldots,i$, perform one round of the following process.}
{In round $j$, every item $e\in S_j$ is assigned cost $v^{\star}_{r_j}(e)$.}
{The items in $[{d_{r_j}+1},{L_{r_j}}]$ are labeled $\kappa$; denote this set by $S^1_{j,\kappa}$.}
{No item in $S^1_{j,\kappa}$ is relabeled later, and its assigned cost is not changed.}
Similar to the proof of Claim~\ref{cla:y1}, the items $[{L_s+1},{L_s+n_s}]$ are labeled with $\gamma$ and denoted as $S^{s}_\gamma$ for $s \in \{r_{j}, r_{j}+1, \ldots, r_{j+1}-1\}$.
The remaining items in $S_{j}$ are labeled with $\lambda$.

Now every item in $[d_{r_j}+1,L_{r_{j+1}}]$ is given a cost depending on the valuation $v^{\star}_{r_j}(\cdot)$.
But we still need to relabel some items.
{In the $j$-th round, by the construction of $v^{\star}_{r_j}(\cdot)$, every item labeled $\lambda$ with index larger than $L_{r_j}$ is assigned a cost equal to the weight of some agent.}
{We match each such item to the corresponding agent without redundancy; we also apply this step to the items in $[1,d_{r_2}]$ labeled $\lambda$.}
{Here, the corresponding agent means an agent whose weight equals the cost assigned to the item in the preceding sentence.}
{Since each item labeled $\lambda$ in $[1,L_{r_j}]$ was matched to an agent in an earlier round, if an item $e$ is matched to an agent $a_u$ that has already been matched to another item in $[1,L_{r_j}]$, we relabel $e$ with $\kappa$.}
{By Claim~\ref{cla:lr}, for any agent with weight $w_s<w_{r_j}$ that is first matched with an item in this round, at most $s-(j+1)$ items are matched with this agent in the following rounds.}
{This step also ensures that the number of items labeled $\lambda$ with cost $w_s$ is at most $n_s$.}

The round of process is repeated until all the items in $[1,{L_i}]$ have been labeled and assigned costs.





{Only items in $[1,n_1]\cup \cdots \cup [L_{r_{i+1}-1}+1, L_{r_{i+1}-1}+n_{r_{i+1}-1}]$ can be labeled $\gamma$.}
The cost of any item in $[L_s+1,L_s+n_s]$ is at most $w_s$.
Hence, the total cost of the items with label $\gamma$ in $[1,\ldots, L_{r_{i+1}}]$ is at most $W_1+\ldots+W_{r_{i+1}-1}$.

Denote by $S^s_\lambda$ the set of items in $[1, L_{r_{i+1}}]$ that have label $\lambda$ and cost $w_s$, for $s \in [r_i]$. 
\acomment{Author verification: This corresponds to the Lean predicate \texttt{LambdaLabelConstructed} and theorem \texttt{proxyInductionLambdaConstructed\_of\_article\_setup}; it is also part of the verification markers \texttt{InductionLabelingDefinedByArticle} and \texttt{ArticleLabelingDefinedByArticle}. The formalization uses an explicit endpoint for the $\lambda$-label levels because this sentence defines $S^s_\lambda$ for $s\in[r_i]$, while the subsequent accounting bound runs through $r_{i+1}-1$. Please confirm that the intended range is all relevant levels up to $r_{i+1}-1$, or adjust the range notation.}\hresp{Tentatively confirmed with low confidence. The intended range should likely include all relevant $\lambda$-label levels up to $r_{i+1}-1$; we will revisit whether this is sufficient during the later Lean verification pass.}
By Claim~\ref{cla:lr}, we have $|S^s_\lambda| < L_s$.
{By the matching process, at most $n_s$ items with cost $w_s$ are labeled $\lambda$.}
Let $X_s$ denote the total cost of $S^s_\lambda$, and set $\bar{X}_s = W_1 + \cdots + W_{s-1} - X_s$. 
For any $2 \leq s \leq r_i - 1$, we have $X_s < \bar{X}_{s-1} + 2W_s$, and $L_{r_i} \cdot w_{r_i} \leq \bar{X}_{r_i-1} + 2W_{r_i}$.
Hence, the total cost of the items with label $\lambda$ is no more than $W_1 + 2W_2 + \cdots + 2W_{r_{i+1}-1}$.

Now we bound the total cost of items with label $\kappa$ in $[1,L_{r_{i+1}}]$.
For any $1\leq j \leq i-1$, let $Z^{1}_{j} = {(L_{r_{j+1}} - d_{r_{j+1}})} \cdot w_{r_{j+1}}$, which is the total cost of $S^{1}_{j+1,\kappa}$.
{Let $S^{2,+}_{j,\kappa}$ denote the items in $[{d_{r_j}+1},{L_{r_{j+1}}}]$ that have label $\lambda$ and cost less than $w_{r_{j+1}}$, and let $Z^2_j$ be the nonnegative increase obtained by replacing each cost in $S^{2,+}_{j,\kappa}$ by $w_{r_{j+1}}$:
\[
Z^2_j=|S^{2,+}_{j,\kappa}|\cdot w_{r_{j+1}}-\sum_{e\in S^{2,+}_{j,\kappa}}v(e),
\]
}
By inequality~\ref{eq:i'}, we have $Y_j \geq Z^1_j + Z^2_j$.

Denote all items labeled $\kappa$ as $S_\kappa$, and let $S^2_\kappa = S_\kappa \setminus \left(\bigcup_{2 \leq j \leq r_i} S^{1}_{j,\kappa}\right)$.
{An item with cost $w_s$ in $S^2_\kappa$ is labeled $\kappa$ because it and an item in some $S^{2,+}_{j,\kappa}$ with the same cost are matched with the same agent $a$.}
We have ${w_{r_{j+1}} - w_s = (\log_2\left(\frac{w_{r_{j+1}}}{w_s}\right)-1)\cdot w_s} \geq (s - (j+1))\cdot w_s$.
The value of ${w_{r_{j+1}}-w_s}$ is at least the total cost of the items with label $\kappa$ in $S^2_\kappa$ which are matched with agent $a$.
The total cost of $S^2_\kappa$ can be covered by $Z^2_1+\ldots+Z^2_{i-1}$.
Therefore, the total cost of $S_\kappa$ is no larger than $\sum_{j \in [i-1]} Y_j$.

Each item in $[1,L_{r_{i+1}}]$ is labeled with $\gamma$, $\lambda$ or $\kappa$.
{Using the bounds for the three labels, we obtain}
\[
\sum_{j \in [i-1]} v^{\star}_{r_j}\left([d_{r_j}+1, d_{r_{j+1}}]\right)_{w_{r_{i+1}}} 
+ v^{\star}_{r_i}\left([d_{r_i}+1, L_{r_{i+1}}]\right)_{w_{r_{i+1}}}
- \sum_{j \in [i-1]} Y_j 
< 
3W_1 + \cdots + 3W_{r_{i+1}-1}.
\]

\end{proof}

\section{Missing Proofs in Section \ref{sec:lower}}
\label{sec:appendix:lb}

\subsection{Proof of Theorem \ref{the:lb}}

\begin{proof}[Proof of Theorem \ref{the:lb}]
{We prove by contradiction. Assume} that there exists an allocation $\A=(A_{1,1},\ldots,A_{k,n_k})$ of $\overline{\cI}$ whose approximation ratio $\alpha <2-\epsilon$.
Observe that, in $\A$, no agent in $G_i$ can obtain any item in $B_i$, since each item in $B_i$ has cost $2w_i$ to them, and allocating any single item breaks the approximation of 2.
Therefore, we have the following simple observation.


\begin{observation}
For each $i\in[k-1]$, all the items in $B_{i+1}$ are allocated to the agents in $G_1\cup \cdots \cup G_i$.
\end{observation}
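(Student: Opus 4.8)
The plan is to argue the contrapositive item by item. Fix $i \in [k-1]$ and an arbitrary item $e \in B_{i+1}$; I will show that $e$ cannot be assigned to any agent in $G_j$ with $j \ge i+1$. Since $G_1,\ldots,G_k$ partition $\cN$, this forces $e$ into $G_1 \cup \cdots \cup G_i$, and letting $e$ range over all of $B_{i+1}$ yields the claim.

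The first step is a bookkeeping inclusion: $B_{i+1} \subseteq B_j$ for every $j \ge i+1$. From $B_{i+1} = \cM_1 \cup \cdots \cup \cM_i \cup T_{i+1}$ together with $T_{i+1} \subseteq \cM_{i+1}^1 \subseteq \cM_{i+1}$ we get $B_{i+1} \subseteq \cM_1 \cup \cdots \cup \cM_{i+1}$. For $j = i+1$ this inclusion is just $B_{i+1} = B_j$ by definition, while for $j \ge i+2$ we have $i+1 \le j-1$, so $\cM_1 \cup \cdots \cup \cM_{i+1} \subseteq \cM_1 \cup \cdots \cup \cM_{j-1} \subseteq B_j$. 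The second step reads off the valuation design: the construction sets $v_j(e') = 2w_j$ for every $e' \in B_j$, hence $v_j(e) = 2w_j$ for each $j \ge i+1$. For the last step I invoke the preceding lemma, which gives $\WMMS_j = w_j$: if $e$ belonged to $A_{j,\ell}$ for some $a_{j,\ell} \in G_j$ with $j \ge i+1$, then $v_j(A_{j,\ell}) \ge v_j(e) = 2w_j = 2\,\WMMS_j$, contradicting the standing assumption that $\A$ is $\alpha$-WMMS with $\alpha < 2 - \epsilon < 2$. This is precisely the sentence stated just before the observation, made rigorous by tracking which group values each item of $B_{i+1}$ at exactly $2w_j$.

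I do not anticipate a genuine obstacle: the statement is an immediate consequence of the valuation construction, the value $\WMMS_j = w_j$ just established, and the hypothesis $\alpha < 2$. The only point needing a little care is the index arithmetic behind $B_{i+1} \subseteq B_j$ — separating the boundary case $j = i+1$ (an equality) from $j \ge i+2$ (where one combines $i+1 \le j-1$ with $T_{i+1} \subseteq \cM_{i+1}$) — so that the value $2w_j$ is correctly attributed to $e$ for every relevant group index.
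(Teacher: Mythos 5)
Your proof is correct and follows essentially the same reasoning as the paper: the paper justifies the observation by noting that any item of $B_j$ costs $2w_j = 2\,\WMMS_j$ to agents in $G_j$, so a single such item already breaks the assumed approximation $\alpha < 2-\epsilon$. Your only addition is to spell out the inclusion $B_{i+1}\subseteq B_j$ for all $j\ge i+1$, which the paper leaves implicit but which is indeed the step needed to rule out every group $G_j$ with $j\ge i+1$, not just $G_{i+1}$.
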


Before applying other changes to $\A$, we first transform it into a \emph{fractional} allocation $\X$.
As the name implies, in a fractional allocation, one item can be split and allocated to more than one agent. 
Denote by $0\le f(a,e)\le 1$ the fraction of item $e$ allocated to agent $a$. We still need to ensure that $\sum_{a\in\cN}f(a,e)=1$ for each $e\in \cM$; that is, each item is fully allocated and no items are allocated excessively.
Given a fractional allocation, an agent's cost is $\sum_{e\in \cM}v_i(e)f(a_i,e)$.
{An integral allocation is a special case of a fractional allocation,}
and the transformation {from $\A$ to a fractional allocation $\X$} itself has no impact on agents' costs.
We apply this generalization in order to simplify the following discussion and prove that no fractional allocation $\X$ can be better than $(2-\epsilon)$-WMMS, nor can the integral allocation $\A$.

We next introduce two local transformations of $\X$ that do not increase any agent's cost.


\paragraph{ Property 1} If there exists a (fraction of) item $e\in T'_i$ that is allocated to an agent $a_{i',j}$ not in $G_i$, reallocating it to any agent in $G_i$ does not increase any agent's cost.

\medskip

For agent $a_{i',j}$, her cost will decrease after removing an item, and for the agent receiving item $e$, she has cost $0$, so her cost will also not increase, and Property 1 is trivially true.

\paragraph{Property 2} Given two agents $a_{i,j},a_{i+1,j'}$ and two items $e \in A_{i,j}, e'\in A_{i+1,j'}$, if $e\notin \bigcup_{t=1}^{i+1}\cM_t$ and $e'\in \cM_{i+1}^2$, swapping a fractional assignment of $e$ and $e'$ so that either $f(a_{i,j},e)=0$ or $f(a_{i+1,j'},e')=0$ will not increase the cost of any agent.

\medskip

First note that because $e\notin \bigcup_{t=1}^{i+1}{\cM_t}$, both agents have the same value for $e$.
{Second, observe that $v_{i+1}(e') = w_{i+1} > \frac{1}{2}w_{i+1} = v_i(e')$.}
Therefore, we can perform the following: Agent $a_{i,j}$ selects a small fraction of $e$ with cost $x$ from her bundle and {moves} this fraction to $A_{i+1,j'}$. Then agent $a_{i+1,j'}$ chooses a fraction of $e'$ with cost $x$ and gives it to $a_{i,j}$. After the exchange of items, agent $a_{i+1,j'}$ has {unchanged total cost}, and agent $a_{i,j}$ ends up with a lower cost since the fraction of $e'$ she receives from $a_{i+1,j'}$ is worth only $\frac{w_{i+1}}{2}$ from her perspective. Hence, we can repeat this process unless either $e$ or $e'$ is exhausted.

\medskip

Given the above two properties, without loss of generality, we have the following observation about $\X$.

\begin{observation}
    As long as $f(e,a)>0$ for some $e\in \cM_{i+1}^2$ and $a\in G_{i+1}$, for any agent $a' \in G_i$ and $e' \notin \bigcup_{t=1}^{i+1}\cM_t$, $f(a',e')=0$.
\end{observation}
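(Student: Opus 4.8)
The claim to be proved is the final Observation:

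\medskip

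\noindent\textbf{Observation.} As long as $f(e,a)>0$ for some $e\in \cM_{i+1}^2$ and $a\in G_{i+1}$, for any agent $a' \in G_i$ and $e' \notin \bigcup_{t=1}^{i+1}\cM_t$, $f(a',e')=0$.

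\medskip

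The plan is to argue that this is simply the termination condition of Optimization 2, so the observation records what the allocation $\X$ looks like once that optimization can no longer be applied. First I would recall that Optimization 2 is applied exhaustively: whenever there exist $a_{i,j}\in G_i$, $a_{i+1,j'}\in G_{i+1}$, an item $e\in A_{i,j}$ with $e\notin\bigcup_{t=1}^{i+1}\cM_t$, and an item $e'\in A_{i+1,j'}\cap\cM_{i+1}^2$ with both $f(a_{i,j},e)>0$ and $f(a_{i+1,j'},e')>0$, we perform the swap that drives one of those two fractions to zero while never increasing any agent's cost. Since each such swap strictly reduces the number of (agent, item) pairs with positive fraction among the finitely many candidate pairs — more precisely, it zeroes out at least one of $f(a_{i,j},e)$, $f(a_{i+1,j'},e')$ and the already-zero fractions stay zero — this process terminates. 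The allocation $\X$ referred to in the observation is the one obtained after this termination (following Optimization 1).

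Next I would spell out the termination condition: when no further swap of this type is possible, there is no quadruple $(a_{i,j}, a_{i+1,j'}, e, e')$ with $e\notin\bigcup_{t=1}^{i+1}\cM_t$, $e'\in\cM_{i+1}^2$, $f(a_{i,j},e)>0$, and $f(a_{i+1,j'},e')>0$. Rephrasing this with the ``as long as'' formulation: if there is any $e\in\cM_{i+1}^2$ and $a\in G_{i+1}$ with $f(e,a)>0$, then for \emph{every} $a'\in G_i$ and every $e'\notin\bigcup_{t=1}^{i+1}\cM_t$ we must have $f(a',e')=0$, since otherwise $(a', a, e', e)$ would be exactly the kind of quadruple the optimization removes. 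This is precisely the statement of the observation (up to renaming $e\leftrightarrow e'$ and the two agents).

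The only genuine content to check is that Optimization 2, as described, does what is claimed in the preceding paragraph of the excerpt — namely that a swap of a cost-$x$ fraction of $e$ for a cost-$x$ fraction of $e'$ leaves $a_{i+1,j'}$'s cost unchanged and strictly lowers $a_{i,j}$'s cost — and that this is consistent with Optimization 1 not being undone. The cost bookkeeping follows from the two valuation facts already noted there: since $e\notin\bigcup_{t=1}^{i+1}\cM_t$, agents $a_{i,j}$ and $a_{i+1,j'}$ assign $e$ the same value (by the construction of $v_i$ and $v_{i+1}$, which agree on all items outside $\cM_1\cup\cdots\cup\cM_{i+1}$), so moving a cost-$x$ chunk of $e$ from $a_{i,j}$ to $a_{i+1,j'}$ raises the latter's cost by exactly $x$; and since $v_{i+1}(e')=w_{i+1}>\tfrac12 w_{i+1}=v_i(e')$, returning a chunk of $e'$ that costs $a_{i+1,j'}$ exactly $x$ costs $a_{i,j}$ only $\tfrac{x}{2}$, so $a_{i+1,j'}$ nets zero change while $a_{i,j}$ saves $\tfrac{x}{2}>0$. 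For consistency with Optimization 1, note Optimization 2 never touches items in any $T'_t$ (those are in $\cM_t\subseteq\bigcup_{s\le i+1}\cM_s$ when relevant, and $e'\in\cM_{i+1}^2$ is disjoint from $T'_{i+1}$), so no reallocation here puts a $T'$-item back outside its group. I expect no real obstacle here: the observation is a definitional consequence of running Optimization 2 to completion, and the main (minor) care needed is just to state the termination/exhaustiveness of the optimization explicitly and to confirm the swap is always feasible (a small enough cost-$x$ fraction of $e$ and of $e'$ exists whenever both fractions are positive).
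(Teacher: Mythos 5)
Your argument is the same as the paper's (which in fact does not spell out a proof at all — it simply states the observation as a consequence of running the two optimizations to exhaustion): the statement is nothing more than the contrapositive of ``no level-$i$ Optimization~2 swap remains applicable,'' and you correctly verify the cost bookkeeping (equal valuations on $e\notin\bigcup_{t\le i+1}\cM_t$; $v_{i+1}(e')=w_{i+1}$ versus $v_i(e')=\tfrac12 w_{i+1}$) and that Optimization~2 never touches $T_t'$-items, so it does not undo Optimization~1. One small inaccuracy worth fixing: your termination sentence claims each swap ``zeroes out at least one of $f(a_{i,j},e),f(a_{i+1,j'},e')$ and the already-zero fractions stay zero,'' but the swap moves a positive fraction of $e$ to $a_{i+1,j'}$ and of $e'$ to $a_{i,j}$, so $f(a_{i+1,j'},e)$ and $f(a_{i,j},e')$ can go from zero to positive; if $e\notin\bigcup_{t\le i+2}\cM_t$ this even creates a new level-$(i+1)$ candidate pair. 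Termination therefore needs a slightly more careful argument — e.g.\ process levels $i=1,2,\ldots,k-1$ in increasing order and note that level-$(i{+}1)$ swaps never touch $G_i$'s allocation nor the $\cM_{i+1}^2$-portion of $G_{i+1}$'s allocation, so once the level-$i$ observation is established it is never undone; alternatively, take the cost-minimizing $\X$ over the compact set of fractional allocations satisfying the Optimization~1 property and dominating no agent's cost, at which no Optimization~2 swap can strictly decrease cost. With that repair, the proof is fine and matches the paper's intent.
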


Next, we are going to enhance this observation and show that its ``if'' condition always holds true.
To simplify the expression, we use {$\beta_i$} to represent the percentage of items in $\cM_{i+1}^2$ that are allocated to agents in $G_i$, i.e.,
\[
\beta_i = \frac{\sum_{a\in G_i, e\in \cM_{i+1}^2} f(a,e)}{|\cM_{i+1}^2|}.
\]

\begin{claim}\label{cla:opt1}
    $\beta_i <1$ for all $i\in[k-1]$.
\end{claim}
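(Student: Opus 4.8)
\textbf{Proof proposal for Claim \ref{cla:opt1}.}

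The plan is to argue by contradiction: suppose $\beta_i = 1$ for some $i \in [k-1]$, meaning all of $\cM_{i+1}^2$ is allocated (fractionally) to agents in $G_i$. I would then track a careful counting argument on the agents in $G_1 \cup \cdots \cup G_i$, showing that the total cost forced onto them is too large relative to the total WMMS budget $\sum_{j\le i}\alpha\cdot W_j$, contradicting $\alpha < 2-\epsilon$. The first step is to record what $\beta_i = 1$ buys us: combined with Observation~2 (now with its if-condition satisfied), for every agent $a' \in G_i$ and every item $e' \notin \bigcup_{t=1}^{i+1}\cM_t$ we have $f(a',e') = 0$. In other words, agents in $G_i$ receive items \emph{only} from $\cM_1 \cup \cdots \cup \cM_{i+1}$. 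Together with the earlier Observation that $B_{i+1}$ must be allocated within $G_1 \cup \cdots \cup G_i$, and using $|B_{i+1}| = \tfrac12 n_{i+1}$ with each such item worth $2w_{i+1}$ to an agent in $G_{i+1}$ (but only $2w_t \mid$-divisibly to the higher groups), I would set up the cost accounting for the block $G_1 \cup \cdots \cup G_i$.

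The key computation: agents in $G_1 \cup \cdots \cup G_i$ must collectively absorb (i) all of $B_{i+1}$, of total value to them $\sum_{j\le i} W_j = 2^{i-1}$ — wait, more precisely $v$-value depending on group, but the relevant bound is that $B_{i+1}$ contributes a total ``mass'' that, spread over these agents, is already a constant fraction of their proportional share; and (ii) all of $\cM_{i+1}^2$, which has $|\cM_{i+1}^2| = n_{i+1}$ items each worth $\tfrac12 w_{i+1}$ to agents in $G_i$, for a total cost of $\tfrac12 n_{i+1} w_{i+1} = \tfrac12 W_{i+1} = 2^{i-2}$ borne entirely by $G_i$. Since agents in $G_i$ also cannot escape to items outside $\cM_1\cup\cdots\cup\cM_{i+1}$, and the items in $\cM_1 \cup \cdots \cup \cM_i$ available to them are limited (and $\cM_{i+1}^1 = B_{i+1}\cup T'_{i+1}$, with $T'_{i+1}$ worth $0$ to $G_{i+1}$ but worth $2w_{i+1}$ to the lower groups via $B_{i+1}\subseteq \cM_{i+1}^1$ — I must be careful which sub-block is which), I would show that the total cost on $G_1\cup\cdots\cup G_i$ is at least $(2-\epsilon)(W_1 + \cdots + W_i)$ plus a strictly positive surplus, which forces some agent among them to exceed $(2-\epsilon)\WMMS$, contradicting the assumed ratio $\alpha < 2-\epsilon$. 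The $\epsilon$-slack is controlled by the $\frac19$-domination in Inequality~\eqref{eq:lower:instance:m} and by $k > 4/\epsilon^2$.

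I expect the main obstacle to be the bookkeeping of exactly which items each group can and cannot receive once $\beta_i = 1$, and pinning down the inequality so that the surplus is genuinely $\Omega(1)$ rather than vanishing — in particular, handling the fractional assignments and the role of $T'_{i+1}$ (zero-cost to $G_{i+1}$) so that they cannot be used to ``rescue'' the budget of $G_i$. A secondary subtlety is that this claim is presumably meant to be used inductively or in tandem with an analogous statement to push the argument down through all groups; so I would phrase the contradiction to isolate a single violated agent without needing the full inductive structure here. Once the cost lower bound $\sum_{j\le i}\sc(G_j) \ge (2-\epsilon)\sum_{j\le i} W_j + \Omega(1)$ is established, an averaging argument over the $n_1 + \cdots + n_i$ agents — all of whom have $\WMMS$ equal to their weight — immediately yields an agent with cost exceeding $(2-\epsilon)\WMMS$, completing the contradiction and hence the proof that $\beta_i < 1$.
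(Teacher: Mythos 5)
Your proposal diverges from the paper's proof in a way that introduces genuine gaps; the paper proves the claim by \emph{induction} on $i$, and the inductive structure is not dispensable.

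The most concrete error is your reading of Observation~2. Its hypothesis is that \emph{some} item of $\cM_{i+1}^2$ has positive fraction with an agent in $G_{i+1}$, which is precisely the negation of $\beta_i = 1$: if $\beta_i = 1$, all of $\cM_{i+1}^2$ sits with $G_i$, so \emph{no} agent in $G_{i+1}$ holds any of it, and Observation~2 is silent. You cannot conclude, from $\beta_i = 1$, that $G_i$ receives items only from $\cM_1 \cup \cdots \cup \cM_{i+1}$. In the paper, Observation~2 is invoked at levels $j < p$ after the inductive hypothesis $\beta_j < 1$ has already been established; it is $\beta_j < 1$ (not $=1$) that triggers its hypothesis. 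A smaller but telling slip in the same paragraph: $T_{i+1}'$ is worth $w_{i+1}$ (not $2w_{i+1}$) to agents in groups $G_j$, $j \le i$, and $T_{i+1}' \not\subseteq B_{i+1}$ (only $T_{i+1}$ is in $B_{i+1}$).

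The deeper gap is the collective averaging over $G_1\cup\cdots\cup G_i$. Without the inductive facts $\beta_1, \ldots, \beta_{i-1} < 1$ (and the resulting Observation~2 consequences), you do not know that $T_{i+1}$ and the residual $(1-\beta_{i-1})$ fraction of $\cM_i^2$ are forced onto $G_i$ specifically. If instead $\cM_i^2$ and $T_i$ are pushed down to $G_1\cup\cdots\cup G_{i-1}$, the \emph{total} cost forced onto $G_1\cup\cdots\cup G_i$ drops to roughly $\tfrac{3}{2}\sum_{j\le i}W_j$, not $2\sum_{j\le i}W_j$: for instance, $\cM_{i+1}^2$ contributes $\tfrac12 W_{i+1} = W_i$, $T_{i+1}$ contributes $\approx W_i$, $\cM_i^2$ and $T_i$ contribute $\approx W_i$, and the items $\cM_1\cup\cdots\cup\cM_{i-1}$ can be made negligible by giving them to $G_i$ at cost $2w_i$ each, for a grand total of $\approx 3W_i = \tfrac{3}{2}\sum_{j\le i}W_j$. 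The averaging argument you propose would then yield only a ratio of about $\tfrac32$, far from $2-\epsilon$, and the claimed $\Omega(1)$ surplus does not materialize. To rule out this low-total-cost distribution one must notice that it forces $\beta_{i-1}$ close to $1$ as well, i.e.\ one must argue from the \emph{smallest} $i$ with $\beta_i = 1$ and chase the consequences up through $\beta_{i-1}, \ldots, \beta_1$ — which is exactly the induction the paper runs, bounding the cost of a \emph{single} group $G_p$ (not the union) via the three explicitly identified parts $(1-\beta_{p-1})\cM_p^2$, $T_{p+1}$, and $\beta_p\cM_{p+1}^2$. The step ``isolate a single violated agent without needing the full inductive structure'' is precisely what cannot be done here.
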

\begin{proof}
    We prove this claim by induction.
    First, consider $G_1$. Observation 1 requires that all items in $B_2$ are allocated to $G_1$. Assume $\beta_1 \le 1$.
    
    Then, we can estimate the worst approximation ratio among agents in $G_1$ (albeit there is only one agent) by the average, that is, the cost of all items allocated to $G_1$ divided by the total weight of agents in $G_1$:
    \begin{align*}
        \frac{v_1(B_2) + \beta_1v_1(\cM_2^2)}{W_1}
        =&\frac{1+w_2(\frac{1}{2}n_2-m_1)+ \frac{\beta_1}{2}w_2 n_2}{1} \\
        = &\frac{3}{2}-\frac{w_2 m_1}{1}+\frac{\beta_1}{2}>\frac{3}{2}-\frac{1}{3n_2}+\frac{\beta_1}{2}.
    \end{align*}
    Note that we do not need to consider any item {outside} $\cM_1\cup \cM_2$.
     This is because, when $\beta_1 = 1$,
    \[
        \frac{3}{2}-\frac{1}{3n_2}+\frac{\beta_1}{2}>2-\frac{1}{3n_2} \geq 2-\frac{1}{3\cdot 2^k}\geq 2-\frac{1}{3\cdot k^2}> 2-\frac{\epsilon^4}{48}>2-\epsilon,
    \]
    which contradicts the assumption that $\X$ performs better than $(2-\epsilon)$-WMMS.

    Note that the items in $\cM_2^2$ can only be allocated to agents in $G_1$ or $G_2$ {by} Observation 1.
    Then, assume that the claim holds for the first $p-1$ groups. Since $\beta_i <1$ for $i\in[p-1]$, no agents from the first $p-1$ groups are allocated any items in $T_{p+1}$. Therefore, Observation 1 implies that each item in $T_{p+1}$ is allocated to agents in $G_p$. 
    As such, the items allocated to $G_p$ can be divided into four parts: $1-\beta_{p-1}$ of the items in $\cM_p^2$, items in $T_{p+1}$, items in $\cM_{p+1}^2$, and other items given $\beta_p=1$. We apply a similar estimate of the WMMS ratio of agents in $G_p$, ignoring the potential last part of items:

    \begin{align*}
        & \frac{(1-\beta_{p-1})v_p(\cM_p^2)+v_p(T_{p+1}) + \beta_p v_p(\cM_{p+1}^2)}{W_p}\\
        =&\frac{(1-\beta_{p-1})2^{p-2}+w_{p+1}(\frac{1}{2}n_{p+1}-\sum_{q=1}^p m_q)+\frac{\beta_p}{2} w_{p+1}n_{p+1}}{2^{p-2}}\\
        \ge& {(1-\beta_{p-1})+1-\frac{2w_{p+1}n_{p}}{2^{p-2}}+\beta_p \geq 2 - \frac{4}{\Delta} + \beta_p - \beta_{p-1}}.
    \end{align*}

    {When $\beta_p=1$, the approximation ratio contradicts the assumption because $\frac{4}{\Delta}=\frac{4}{2^k}\le \frac{4}{k^2}< \frac{\epsilon^4}{4}<\epsilon$.}
\end{proof}

The estimates used in the above proof can also be used to capture the overall WMMS approximation ratio of $\X$. We can use the following linear program to show the maximum WMMS estimate among groups.
Denote $\alpha_i=\max_{a_{i,j}\in G_i} \frac{v_{i,j}(X_{i,j})}{w_i}$.
    \begin{align*}
      & \min \quad \max_{i\in [k]} \alpha_i \\
      & \begin{array}{r@{\quad}l@{}l@{\quad}l}
        s.t. &\alpha_1 &\ge  1 + (\frac{1}{2}-\frac{1}{3n_2}) + \frac{1}{2}\beta_1, \\
        &\alpha_2 &\ge 2 - \frac{4}{\Delta} + \beta_2 - \beta_1, \\
        &  &~\vdots\\
        &\alpha_{k-1} &\ge 2 - \frac{4}{\Delta} + \beta_{k-1} - \beta_{k-2}, \\
        & \alpha_{k} &\ge 1-\beta_{k-1},
       \end{array}\\
       &\quad\quad~~~ 0\leq \beta_i <1 \quad \text{for $i\in [k-1]$}.
    \end{align*}
Let
\(
c = 2-\frac{4}{\Delta}.
\)
For $i=2,\ldots,k-1$, the middle constraints give
\(
\alpha_i \ge c+\beta_i-\beta_{i-1}.
\)
Moreover,
\[
\sum_{i=2}^{k-1}(\beta_i-\beta_{i-1})=\beta_{k-1}-\beta_1>-1,
\]
because $\beta_{k-1}\geq 0$ and $\beta_1<1$. Hence it is impossible that all $k-2$ differences
$\beta_i-\beta_{i-1}$ are at most $-\frac{1}{k-2}$; otherwise the sum would be at most $-1$.
Therefore, for some $i\in\{2,\ldots,k-1\}$,
\(
\beta_i-\beta_{i-1}>-\frac{1}{k-2},
\)
and consequently
\(
\max_{i\in[k]}\alpha_i
>2-\frac{4}{\Delta}-\frac{1}{k-2}.
\)
Therefore, the WMMS approximation ratio is greater than
\[
2 - \frac{4}{\Delta} -\frac{1}{k-2}\geq 2-\frac{\epsilon^4}{4}-\frac{\epsilon^2}{4-\epsilon^2}>2-\epsilon.
\]

In conclusion, we have shown that the approximation ratio of $\X$ cannot be lower than $2-\epsilon$, which implies the impossibility of $\A$, contradicting the initial assumption. Thus, Theorem \ref{the:lb} is proved.
\end{proof}

\subsection{Proof of Theorem \ref{thm:lb:n}}

\begin{proof}[Proof of Theorem \ref{thm:lb:n}]
For any $n\ge 2$, we construct an instance $\cI=(\cN,\cM,\fw,\fv)$ with $\cN=\{a_1,\ldots,a_n\}$ and $\cM=\{e_1,\dots,e_{2n-1}\}$. 
{We divide the items into three sets, $M_1=\{e_1\}$, $M_2 = \{e_2, e_3,\ldots, e_{n}\}$, and $M_3=\{e_{n+1}, e_{n+2}, \ldots, e_{2n-1}\}$, and the items in each set are homogeneous.}
Let $\alpha=\frac{n+\sqrt{5n^2-4n}}{2n}$.
The valuations of the agents are shown in Table \ref{tab:lb:n:1}.
It can be verified that $\sum_{i=1}^n w_i = 1$, $v_1(\cM)=v_2(\cM) = 1$, $\WMMS_1=\frac{1}{\alpha}$ and $\WMMS_i=\frac{1}{n\alpha^2}$ for all $i\ge 2$.

\begin{table}[h]
    \centering
    \renewcommand{\arraystretch}{1.2}
    \begin{tabular}{c|c|ccc}
    \hline
       agent  & weight & $e\in M_1$ & $e\in M_2$ & $e\in M_3$  \\
       \hline
        $a_1$ & $\frac{1}{\alpha}$ & $\frac{1}{\alpha}$ & $\frac{1}{n\alpha^2}$ & 0\\
        \hline
        $a_i$ for $i\ge 2$ & $\frac{1}{n\alpha^2}$ & $\frac{1}{n\alpha}$ & $\frac{1}{n\alpha}$ & $\frac{1}{n\alpha^2}$\\
        \hline
    \end{tabular}
    \caption{The valuation of $a_1,\ldots,a_n$.}
    \label{tab:lb:n:1}
\end{table}

We consider two cases of allocating the items:
\begin{itemize}
    \item Case 1. If one item in $M_1\cup M_2$ is allocated to an agent $a_i$ for $i\ge 2$, the unfairness ratio for $a_i$ is at least $\alpha$. 
    \item Case 2. If all items in $M_1\cup M_2$ are allocated to agent $a_1$, the unfairness ratio for $a_i$ is at least $1+\frac{n-1}{n\alpha}=\alpha$.
\end{itemize}
Hence, for any allocation, the approximation ratio is no less than $\alpha$.
\end{proof}

\section{Missing Proofs in Section \ref{sec:two}}

\label{sec:appendix:two}


\subsection{Proof of Lemma \ref{lem:two:upper}}

\begin{proof}[Full Proof of Lemma \ref{lem:two:upper}]
    Given an instance with two agents $\{a_1,a_2\}$, denote by $\{B_1,B_2\}$ the WMMS-defining partition of agent $a_1$ and $\{C_1,C_2\}$ the WMMS-defining partition of agent $a_2$. 
    By the proportional proposition in Theorem \ref{the:can}, without loss of generality, it is assumed that $\WMMS_i=w_i$, and $v_1(B_1)=v_2(C_1)=w_1$ and $v_1(B_2)=v_2(C_2)=w_2$.
    We divide all items into 4 atomic bundles:
$X_{1,1}=B_1\cap C_1$, $X_{1,2}=B_1\cap C_2$, $X_{2,1}=B_2\cap C_1$, $X_{2,2}=B_2\cap C_2$.
Accordingly, there are coefficients $0\le a,b,c,d\le 1$ such that the values of the atomic bundles can be {represented} as in Table \ref{tab:parts}.
In the following, we use different combinations of atomic bundles to design the algorithm and allocate items.


\begin{table}[htbp]
    \centering
    \begin{tabular}{c|c|c|c|c}
        \hline
        agents  & $X_{1,1}=B_1\cap C_1$ & $X_{1,2}=B_1\cap C_2$ & $X_{2,1}=B_2\cap C_1$ & $X_{2,2}=B_2\cap C_2$ \\
        \hline
        $a_1$ & $aw_1$ & $(1-a)w_1$ & $dw_2$ & $(1-d)w_2$\\
        \hline
        $a_2$ & $bw_1$ & $(1-c)w_2$ & $(1-b)w_1$ & $cw_2$\\
        \hline
    \end{tabular}
    \caption{Values for {atomic bundles}.}
    \label{tab:parts}
\end{table}

Our algorithm is simple: 
\begin{itemize}
    \item For any two-agent instance, return one of the eight allocations in Table \ref{tab:alg} {that has} the smallest unfairness ratio.
\end{itemize}
It is somewhat surprising that this simple algorithm achieves the optimal WMMS approximation ratio for any weight distribution.


{\small
\begin{table}[htbp]
  \centering
  
    \begin{tabular}{c|c|c|c|c}
    \hline
    ID & $A_1$ & $A_2$ & $v_1(A_1)$ & $v_2(A_2)$ \bigstrut\\
    \hline
    1     & $X_{1,1}\cup X_{1,2}$ & $X_{2,1}\cup X_{2,2}$ & $w_1$ & \cellcolor[rgb]{ .757,  .941,  .784}$(1-b)w_1+cw_2$\bigstrut\\
    \hline
    2     & $X_{1,1}\cup X_{2,1}$ & $X_{1,2}\cup X_{2,2}$ & \cellcolor[rgb]{ .757,  .941,  .784} $aw_1+dw_2$ & $w_2$ \bigstrut\\
    \hline
    3     & $X_{1,1}\cup X_{2,2}$ & $X_{1,2}\cup X_{2,1}$ & \cellcolor[rgb]{ .757,  .941,  .784} $aw_1+(1-d)w_2$ & \cellcolor[rgb]{ .757,  .941,  .784} $(1-b)w_1+(1-c)w_2$ \\
    \hline
    4     & $X_{1,2}\cup X_{2,1}$ & $X_{1,1}\cup X_{2,2}$ & \cellcolor[rgb]{ .757,  .941,  .784} $(1-a)w_1+dw_2$ & \cellcolor[rgb]{ .757,  .941,  .784} $bw_1+cw_2$ \\
    \hline
    5     & $X_{1,2}\cup X_{2,2}$ & $X_{1,1}\cup X_{2,1}$ & \cellcolor[rgb]{ .757,  .941,  .784} $(1-a)w_1+(1-d)w_2$ & \cellcolor[rgb]{ .757,  .941,  .784} $w_1$ \\
    \hline
    6     & $X_{2,1}\cup X_{2,2}$ & $X_{1,1}\cup X_{1,2}$ & $w_2$ & \cellcolor[rgb]{ .757,  .941,  .784} $bw_1+(1-c)w_2$ \\
    \hline
    7     & $X_{1,1}\cup X_{1,2}\cup X_{2,2}$ & $X_{2,1}$ & \cellcolor[rgb]{ .757,  .941,  .784} $w_1+(1-d)w_2$ &  \cellcolor[rgb]{ .757,  .941,  .784} $(1-b)w_1$ \\
    \hline
    8     & $X_{1,2}\cup X_{2,1}\cup X_{2,2}$ & $X_{1,1}$ & \cellcolor[rgb]{ .757,  .941,  .784} $(1-a)w_1+w_2$ & \cellcolor[rgb]{ .757,  .941,  .784} $bw_1$ \\
    \hline
    \end{tabular}%
    \caption{Candidate allocations.}
  \label{tab:alg}%
\end{table}%
}

For each allocation in Table \ref{tab:alg}, we color the agent(s) whose bundles may reach the unfairness ratio in green.
The worst-case approximation ratio of our algorithm, denoted by $t$, is achieved for particular $0\le a,b,c,d\le 1$ such that all these allocations have unfairness ratios at least $t$. 
That is, the approximation ratio of our algorithm is the optimal value of the following program $\mathcal{LP}$:


\begin{gather*}
\begin{aligned}
    \max \quad t \\
    \textup{s.t.}\quad t &\leq (1-b)\frac{w_1}{w_2}+c\\
    t &\leq a+d\frac{w_2}{w_1}\\
    t&\leq \max\{a+(1-d)\frac{w_2}{w_1},1-c+(1-b)\frac{w_1}{w_2}\}\\
t&\leq \max\{1-a+d\frac{w_2}{w_1},b\frac{w_1}{w_2}+c\}\\
t&\leq \max\{(1-a)+(1-d)\frac{w_2}{w_1},\frac{w_1}{w_2}\}\\
 t&\leq b\frac{w_1}{w_2}+1-c\\
t&\leq \max\{1+(1-d)\frac{w_2}{w_1}, (1-b)\frac{w_1}{w_2}\}\\
 t&\leq \max\{1-a+\frac{w_2}{w_1},b\frac{w_1}{w_2}\}\\
 0&\leq a, b, c, d \leq 1
\end{aligned}
\end{gather*}

 
\paragraph{Mathematica computation.}
The following Wolfram Mathematica command was used to solve the symbolic program, where $x=\frac{w_1}{w_2}$. 
\begin{verbatim}
Simplify[
  Maximize[
    {
      t,
      t <= (1 - b) x + c &&
      t <= a + d/x &&
      t <= Max[a + (1 - d)/x, (1 - b) x - c + 1] &&
      t <= Max[-a + d/x + 1, b x + c] &&
      t <= x &&
      t <= b x - c + 1 &&
      t <= Max[(1 - d)/x + 1, (1 - b) x] &&
      t <= Max[-a + 1/x + 1, b x] &&
      0 <= a <= 1 &&
      0 <= b <= 1 &&
      0 <= c <= 1 &&
      0 <= d <= 1 &&
      x > 1
    },
    {a, b, c, d, t}
  ]
]
\end{verbatim}
The command returns that the optimal value $t^*$ of $\mathcal{LP}$ is
 \begin{equation*}  
t^*=\left\{  
     \begin{array}{ll}  
     0.5+0.5\frac{w_1}{w_2}, & 1\leq \frac{w_1}{w_2} < \frac{1}{2}(\sqrt{5}+1)\\
      1+0.5\frac{w_2}{w_1},  & \frac{1}{2}(\sqrt{5}+1)\leq \frac{w_1}{w_2} < \sqrt{2}+1\\
      0.5\frac{w_1}{w_2}, & \sqrt{2}+1\le \frac{w_1}{w_2}< \sqrt{3}+1\\
      1+\frac{w_2}{w_1}, & \sqrt{3}+1\le\frac{w_1}{w_2},
     \end{array}  
\right.  
\end{equation*} 
which is exactly $f(\alpha)$ with $\alpha = \frac{w_1}{w_2}$.
This completes the proof of the lemma.
\end{proof}

\end{sloppypar}
\end{document}